\setlist{noitemsep,parsep=0pt,partopsep=0pt,topsep=0pt}
\let \savenumberline \numberline
\def \numberline#1{\savenumberline{#1.}}
\setlist[enumerate]{itemsep=5pt}
\let\oldfootnote\footnote
\renewcommand\footnote[1]{\oldfootnote{\hspace{.5mm}#1}}
\renewenvironment{proof}[1][\proofname] {\par\pushQED{\qed}\normalfont\topsep6\p@\@plus6\p@\relax\trivlist\item[\hskip\labelsep\bfseries#1\@addpunct{.}]\ignorespaces}{\popQED\endtrivlist\@endpefalse}
\definecolor{dark-red}{rgb}{0.4,0.15,0.15}
\definecolor{dark-blue}{rgb}{0.15,0.15,0.75}
\definecolor{medium-blue}{rgb}{0,0,0.5}
\newtheorem{corollary}{Corollary}
\newtheorem{lemma}{Lemma}
\newtheorem{proposition}{Proposition}
\theoremstyle{definition}
\newtheorem{example}{Example}
\newtheorem{claim}{Claim}[example]
\newtheorem{remark}{Remark}
\DeclareMathOperator*{\argmin}{arg\,min}
\DeclareMathOperator{\sign}{sign}
\DeclarePairedDelimiter{\abs}{\lvert}{\rvert}
\renewcommand{\bar}{\overline}
\providecommand{\abs}[1]{\lvert#1\rvert}
\def \Reals{\mathbb R}
\renewcommand{\epsilon}{\varepsilon}
\def \union{\cup}
\newcommand{\ind}{{1}\hspace{-2.5mm}{1}} 
\newcommand{\case}[1]{\textsc{(#1)}}
\newcommand{\E}{\mathbb{E}}
\newcommand{\mailto}[1]{\href{mailto:#1}{\texttt{#1}}} 
\newcommand{\citepos}{\@ifstar\citepos@star\citepos@nostar}
\newcommand{\citepos@nostar}[1]{\citeauthor{#1}'s \citeyearpar{#1}}
\newcommand{\citepos@star}[1]{\citeauthor*{#1}'s \citeyearpar{#1}}
\begin{document}

\begin{titlepage}

\title{\textbf{\Large Multi-Sender Disclosure with Costs}\thanks{We thank Nageeb Ali, Ilan Guttman, Keri Hu, Hongcheng Li, Alessandro Lizzeri, Steve Matthews, Ariel Pakes, Andrea Prat, \href{https://www.refine.ink}{Refine.ink}, 
Mike Riordan, Satoru Takahashi, Jidong Zhou, and various seminar and conference audiences for comments on this paper or related work.  Kartik gratefully acknowledges financial support from the NSF (Grant SES-1459877).}}
\author{Navin Kartik\thanks{Yale University, Department of Economics. Email: \href{mailto:nkartik@gmail.com}%
{\texttt{nkartik@gmail.com}}. The author was affiliated with Columbia University during most of the work on this project.}  \and {Frances Xu Lee}\thanks{Quinlan School of Business, Loyola University Chicago. Email: \mailto{francesxu312@gmail.com}.} \and {Wing Suen}\thanks{Faculty of Business and Economics,
The University of Hong Kong. Email: \mailto{wsuen@econ.hku.hk}.}}
\maketitle

\begin{abstract}
\noindent We study voluntary disclosure with multiple biased senders who may bear costs for disclosing or concealing their private information. Under relevant assumptions, disclosures are strategic substitutes under a disclosure cost but complements under a concealment cost. Additional senders thus impede any sender's disclosure under a disclosure cost but promote it under a concealment cost. In the former case, a decision maker can be harmed by additional senders, even when senders have opposing interests. The effects under both kinds of message costs turn on how a sender, when concealing his information, expects others' messages to systematically sway the decision maker's belief.
\end{abstract}

\bigskip

\thispagestyle{empty}
\end{titlepage}

\newpage

\begingroup
\singlespacing
\addtocontents{toc}{\protect\setcounter{tocdepth}{2}}
\addtocontents{toc}{~\hfill\textbf{Page}\par}
\endgroup

\newpage
\onehalfspacing
\setcounter{page}{1}

\section{Introduction}
\label{sec:intro}

Decision makers routinely rely on multiple interested parties for information. Judges hear arguments from opposing counsel; consumers encounter product claims from competing sources; and legislators receive evidence from various interest groups. In each of these settings, information senders strategically choose what to disclose and what to withhold. How are one sender's disclosure incentives affected by the presence of other senders? What are the implications for the decision maker's welfare? Does competition among senders promote the revelation of information---and does the answer depend on whether their interests are aligned or opposed?

We study these questions in a simple \emph{voluntary disclosure} game. Biased senders are endowed with certifiable private information---evidence---and choose whether to reveal it. That is, senders cannot lie but can conceal. Most prior work on multi-sender disclosure assumes that informed senders have identical information.\footnote{See, among others, \citet{MR86}, \citet{LS95}, \citet{Shin98}, \citet{BJ11}, and \citet{BM13}. Two exceptions are \citet{OFPS90} and \citet{HKPR13}, who identify conditions for full disclosure and address different questions than we do. } We instead assume imperfectly correlated information: specifically, conditional on an underlying state that is payoff-relevant to a decision maker (DM), senders draw independent signals. As in \citet{Dye85}, senders may also be uninformed; this prevents ``unraveling'' \citep{GH80,Milgrom81}. We allow senders to have either opposing or similar biases, but assume each sender's payoff is state-independent and linear (increasing or decreasing) in the DM's belief.

Our focus is on how strategic disclosure interacts with different types of costs. On the one hand, preparing and disseminating information may be costly. A firm facing litigation must hire attorneys to sort through documents, identify relevant materials, and redact appropriately. Certain documents require notarization to prevent fraud. There are also ``proprietary costs'' \citep{Verrecchia83}: disclosing a beverage's ingredients may reveal a secret recipe to competitors; disclosing a firm's financial information to investors may simultaneously inform regulators and rivals. We call costs directly tied to the act of furnishing evidence \emph{disclosure costs}. On the other hand, suppression of information is also sometimes costly. Beyond any resource costs, there may be a psychic disutility from concealing information, or concealment may be discovered ex post---by auditors, whistleblowers, or happenstance---and result in punishment or reputational damage. For instance, in January 2015, the National Highway Traffic Safety Administration fined Honda \$70 million because ``it did not report hundreds of death and injury claims \ldots for the last 11 years'' \citep{NYT-Honda}. Relatedly, there is a duty to disclose in various business, financial, and legal contexts. We call costs tied to not disclosing evidence \emph{concealment costs}.\footnote{Studies of mandatory disclosure \citep[e.g.,][]{MP85,Shavell94,DGP09} can be viewed as dealing with infinitely large concealment costs.} 
Naturally, both disclosure and concealment costs can coexist; what is relevant for our paper is the \emph{net} message cost, which we classify as a disclosure or concealment cost depending on which dominates.

In the presence of a message cost, each sender must consider how others' information disclosure will affect the DM's posterior belief, and how that varies with his own message.
We leverage our model's structure and an insight from \citet{KLS21-IVP} on how agents with heterogeneous priors expect additional information to affect each other's beliefs. We obtain a unified treatment of multi-sender disclosure---regardless of whether senders have similar or opposing biases, and regardless of the nature of the message cost. In turn, that yields new insights into the classic question of when competition promotes disclosure.

\paragraph{Results.} Our paper's central logic is as follows. Any sender's equilibrium disclosure behavior follows a threshold rule of disclosing all sufficiently favorable signals. We first establish a simple but important benchmark: without a message cost, any sender uses the same disclosure threshold as he would in a (hypothetical) single-sender game. In other words, there is a \emph{strategic irrelevance}. The intuition is that without a message cost, a sender's objective is the same regardless of the presence of other senders: he simply wants to induce the most favorable ``interim belief'' in the DM based on his own message, as this will lead to the most favorable posterior belief based on all senders' messages.

How do message costs alter the irrelevance result? Consider a concealment cost. In a single-sender setting, a sender $i$'s disclosure threshold will be such that the DM's interpretation of nondisclosure is more favorable than $i$'s private belief at the threshold---a wedge necessary to compensate $i$ for the concealment cost. Nondisclosure thus generates an interim \emph{disagreement} between the DM's belief and the threshold type's private belief. Naturally, disclosure produces no such disagreement. Now add a second sender $j$ to the picture. \citepos*{KLS21-IVP} theorem on \emph{information validates the prior} (IVP) provides a key insight: the threshold type of $i$ predicts that $j$'s message will, on average, make the DM's posterior less favorable than the DM's interim belief following $i$'s nondisclosure. Intuitively, sender $i$ expects the other's communication to move the DM's belief towards $i$'s estimate of the truth, which is $i$'s private belief. Consequently, concealment is now less attractive to sender $i$. IVP further implies that $j$'s effect on $i$ is stronger when $j$ discloses more, i.e., when $j$ is more informative. In sum, senders' disclosures are \emph{strategic complements} under a concealment cost.\footnote{In a different model, \citet{BJ11} find an effect related to that we find under concealment cost. Loosely speaking, ``reputation loss'' in their model plays a similar role to concealment cost in ours.}

The logic reverses under a disclosure cost. In a single-sender setting, the DM's interpretation of nondisclosure is now less favorable than $i$'s private belief at the threshold---the gain from disclosing information must compensate $i$ for the direct cost. Reasoning analogously to above, IVP now implies that disclosure becomes less attractive when the other sender is more informative: the threshold type of $i$ expects the other sender's message to make the DM's posterior more favorable to $i$, reducing the gains from disclosure. Consequently, senders' disclosures are \emph{strategic substitutes} under a disclosure cost.

\paragraph{Implications.} When there are costs of concealment, a DM always benefits from an additional sender not only because of the information this sender provides, but also indirectly because it improves disclosure from other senders. With disclosure costs, however, the strategic substitution result implies that while a DM gains some direct benefit from consulting an additional sender, the indirect effect on other senders' behavior is detrimental to the DM. In general, the net effect is ambiguous; it is not hard to construct examples in which the DM is made strictly worse off by adding a sender, even if this sender has an opposing bias to that of an existing sender. Thus, competition between senders need not increase information revelation nor benefit the DM.\footnote{A number of prior papers offer formal analyses supporting the viewpoint that competition between senders helps a DM; a sample of the varied settings includes voluntary disclosure with common information \citep{MR86}, cheap talk \citep{KM01,Battaglini02}, Bayesian persuasion \citep{GK12-Competition,AK20}, and information acquisition \citep{DT99}. 
\citet{CDI12} present a result in which increased competition leads to less voluntary disclosure. Their model can be viewed as one in which senders bear a concealment cost that is assumed to decrease in the amount of disclosure by other senders. \citet{EGK12} show how a DM can be harmed by ``information improvements'' in a cheap-talk setting, but the essence of their mechanism is not the strategic interaction between senders.} The DM can even be made worse off when the disclosure cost becomes lower or a sender is more likely to be informed, although either change would help the DM when facing just one sender.

We interpret the perverse welfare results under disclosure costs as cautionary for applications. Institutional changes that appear beneficial at first blush may be detrimental. For example, given the importance of disclosure costs in arbitration and litigation \citep[e.g.,][]{Sobel89}, our results qualify arguments made in favor of adversarial procedures based on promoting information revelation \citep[e.g.,][]{Shin98}.

It bears emphasis that our results are driven by two forces: (i) message costs create equilibrium disagreement upon nondisclosure between a sender and the DM; and (ii) when others disclose more, this disagreement and the IVP force combine to produce a systematic shift in a sender's own disclosure behavior. The direction of this shift depends only on the direction of disagreement, which in turn is determined by the nature of message costs. For this reason, it does not matter whether a sender seeks to push the DM's posterior up or down. That also clarifies that our central logic is not one of free-riding or pivotality: we can have strategic substitution even when senders have opposing biases, which runs counter to one sender free-riding on the other's ``contribution'' to the same goal; yet we can also have strategic complementarity, which runs counter to reduced pivotality.

\paragraph{Other related literature.} \citet{Milgrom08} and \citet{DJ10} survey the disclosure literature. Our reduced-form approach to message costs is in line with how disclosure costs are typically modeled, dating back to \citet{Jovanovic82} and \citet{Verrecchia83}. \citet{EF19} study two senders with opposing biases and disclosure costs; unlike us, they assume perfectly correlated signals, which leads to strategic interaction effects even without a message cost (cf.~\citet{BM13} and \appendixref{app:correlated}).

We are not aware of prior general treatments of concealment costs. However, some recent work models a cost of concealment in specific ways motivated by their applications.  \citet{MV16} analyze litigation risk in a dynamic setting, showing that the threat of penalties for nondisclosure can lead a manager to voluntarily disclosing bad news to preempt exogenous news arrival. \citet{Dye17} also studies a single sender, but in a static model where concealment may be detected, triggering damages proportional to a buyer's overpayment.  \citet{DR18} study teams of prosecutors facing Brady obligations, incorporating both formal sanctions and moral costs. In all these papers, the penalties for concealment depend on equilibrium beliefs or endogenous actions. Our reduced-form approach sacrifices institutional detail but gains generality: it allows us to nest both concealment and disclosure costs within a unified framework and isolate the (net) cost structure as the key determinant of how one sender's information disclosure affects others' behavior.

As already noted, our analysis relies on \citepos*{KLS21-IVP} IVP theorem. \citet{AD25} experimentally test that result. The current paper leverages the IVP effect to study how disclosure decisions are affected by message costs. \citet*{KLS21-IVP} present applications to testing and signaling, while \citet{KLS17-Acquisition} study implications for {information acquisition}.

\paragraph{Outline.} \autoref{sec:model} presents the model, with only two senders for simplicity. \autoref{sec:single} develops the single-sender benchmark. \autoref{sec:main} establishes our main results on strategic substitutes and complements. \autoref{sec:extensions} discusses extensions, including to many senders. \autoref{sec:conclusion} concludes. Proofs and some additional discussion are in the \hyperref[sec:appendix]{Appendix}; there is also a \hyperref[app:nonlinear-supp]{Supplementary Appendix} with further material.

\section{Model}
\label{sec:model}

\paragraph{Players.}
There is an unknown state of the world, $\omega\in \{0,1\}$. A decision maker, DM hereafter, will form a belief $\beta_{DM}$ that $\omega=1$. (Throughout, all beliefs refer to the probability of state $1$.) For much of our analysis, all that matters is the belief that the DM holds. For welfare evaluation, however, it is useful to view the DM as taking an action $a$ with von-Neumann--Morgenstern utility function $u_{DM}(a,\omega)$. There are two senders, indexed by $i\in \{1,2\}$. (\autoref{sec:manysenders} generalizes to many senders.) In a reduced form, each sender $i$ has state-independent preferences over the DM's belief given by the von Neumann--Morgenstern utility function $u(\beta_{DM},b_i)=b_i \beta_{DM}$, where $b_i \in \{-1, 1\}$ captures a sender's bias. That is, each sender has linear preferences over the DM's expectation of the state; $b_i = 1$ means that sender $i$ is biased upward, and conversely for $b_i = -1$. Senders' biases are common knowledge. We say that two senders have \emph{similar biases} if their biases have the same sign, and \emph{opposing biases} otherwise.

\paragraph{Information.}
The DM relies on the senders for information. All players share a common prior $\pi\in (0,1)$ over the state. Each sender may receive some private information about the state. Specifically, following \citet{Dye85}, with independent probability $p_i \in (0,1)$, a sender $i$ is informed and receives a signal $s_i\in S$; with probability $1-p_i$, he is uninformed, in which case we denote $s_i=\phi$. If informed, sender $i$'s signal is drawn from a distribution that depends upon the true state, but independently of the other sender's signal conditional on the state. Without loss, we equate an informed sender's signal with his \emph{private belief}, i.e., a sender's posterior on state $\omega=1$ given only his own signal $s\neq \phi$ (as derived by Bayesian updating) is $s$. For convenience, we assume the cumulative distributions of an informed sender's signals in each state, $F(s|\omega)$ for $\omega\in \{0,1\}$, have common support $S:=[\underline s,\overline s]\subseteq [0,1]$ and admit respective densities $f(s|\omega)$. It would be straightforward to allow $F(\cdot)$ to vary across senders, but we abstract from such heterogeneity to reduce notation.

\paragraph{Communication.}
Signals are ``hard evidence''; a sender with signal $s_i\in S\union \{\phi\}$ can send a message $m_i\in \{s_i,\phi\}$. In other words, an uninformed sender only has one message available, $\phi$, while an informed sender can either report his true signal or feign ignorance by sending the message $\phi$.\footnote{Due to the senders' monotonic preferences, standard ``skeptical posture'' arguments imply that our results would be unaffected if we were to allow for a richer message space, for example if an informed sender could report any subset of the signal space that contains his true signal. Likewise, allowing for cheap talk would not affect our results as cheap talk cannot be influential in equilibrium.} We refer to any message $m_i\neq \phi$ as \emph{disclosure} and the message $m_i=\phi$ as \emph{nondisclosure}. When an informed sender chooses nondisclosure, we say he is \emph{concealing} information. That senders must either tell the truth or conceal their information is standard; a justification is that signals are verifiable and large penalties will be imposed on a sender if a reported signal is discovered to be untrue. Note that being uninformed is not verifiable.

\paragraph{Message costs.}
A sender $i$ who sends message $m_i \neq \phi$ bears a known utility cost $c\in \mathbb{R}$. We take this cost to be common across senders for notational simplicity.
We refer to $c>0$ as a (net) \emph{disclosure cost} and $c<0$ as a (net) \emph{concealment cost}; see the introduction for interpretation and discussion. As is well known, a disclosure cost precludes full disclosure \citep{Jovanovic82,Verrecchia83}. For this reason, our conclusions under $c>0$ do not require the assumption that a sender may be uninformed (i.e., we could allow $p_i=1$ in that case). We maintain that assumption to provide a unified treatment of both disclosure cost ($c>0$) and no cost or concealment cost ($c\leq 0$). In the latter cases, there would be full disclosure (``unraveling'') were $p_i=1$.

\paragraph{Timing.}
The game unfolds as follows: nature initially determines the state $\omega$ and then conditionally (on the realized state) independently draws each sender $i$'s private information, $s_i\in S \union \{\phi\}$; senders then simultaneously send their respective messages $m_i$ to the DM (whether messages are public or privately observed by the DM is irrelevant); the DM then forms her belief, $\beta_{DM}$, according to Bayes' rule, whereafter each sender $i$'s payoff is realized as
\begin{equation}
\label{e:payoff}	
b_i \beta_{DM}-c\cdot\ind{\{m_i\neq \phi\}}.
\end{equation}
All aspects of the game except the state and senders' signals (or lack thereof) are common knowledge. Our solution concept is the natural adaptation of perfect Bayesian equilibrium, which we will refer to simply as ``equilibrium.''\footnote{That is: (1) the receiver forms her belief using Bayes rule on path (note that the message $m_i=\phi$ is necessarily on path), treating any off-path message $m_i \in S$ as proving signal $s_i=m_i$; and (2) each sender chooses his message optimally given his information, the other senders' strategies, and the receiver's belief updating.} The notion of welfare for any player is ex-ante expected utility.

\section{A Single-Sender Benchmark}
\label{sec:single}

As a preliminary step and benchmark, begin by considering a (hypothetical) game between a single sender $i$ and the DM. Our analysis in this subsection generalizes some existing results in the literature. For concreteness, suppose the sender is upward biased; straightforward analogs of the discussion below apply if the sender is downward biased.

For any $\beta\in [0,1]$, define \mbox{$f_\beta (s):=\beta f(s|1)+(1-\beta)f(s|0)$} as the unconditional density of signal $s$ given a belief that puts probability $\beta$ on state $\omega=1$. Let $F_\beta$ be the corresponding cumulative distribution. Since the sender has private belief $s$ upon receiving signal $s$, disclosure of signal $s$ will lead to the DM also holding belief $s$. It follows that given any \emph{nondisclosure belief}, i.e., the DM's posterior belief when there is nondisclosure, the optimal strategy for the sender---if informed, as otherwise he can only send message $\phi$---is a threshold strategy of disclosing all signals above some \emph{disclosure threshold}, say $\hat s$, and concealing all signals below it. 
Suppose the sender uses a disclosure threshold $\hat s$. Define the function $\eta:[0,1]\times (0,1)\times [0,1]\rightarrow [0,1]$ by
\newcommand{\ubstrut}{\vphantom{\dfrac{pF_{\pi}(\hat s)}{1-p+pF_{\pi}(\hat s)}}}
\newcommand{\grayub}[2]{%
  {\color{Gray}\underbrace{\color{black}{#1}\ubstrut}_{\text{#2}}}%
}
\begin{align}
\eta(\hat s,p,\pi)
:=\;
\grayub{\left(\frac{1-p}{1-p +p F_{\pi}(\hat s)}\right)}{\shortstack{Posterior prob\\that uninformed}}
\;
\times \grayub{\pi}{Prior}
\;+\;
\grayub{\left(\frac{p F_{\pi}(\hat s)}{1-p +p F_{\pi}(\hat s)}\right)}{\shortstack{Posterior prob\\that concealed}}
\; \times \;
\grayub{\E_{\pi}\!\left[s \mid s<\hat s\right]}{\shortstack{Average \\ concealed signal}},
\label{e:NDbelief}
\end{align}
where $\E_\pi[\cdot]$ denotes expectation with respect to $F_\pi$. The function $\eta(\hat s,p,\pi)$ is the DM's posterior following nondisclosure, derived from Bayes' rule given conjectured threshold $\hat s$, probability $p$ that the sender is informed, and prior $\pi$.

An increase in the sender's disclosure threshold has two effects on the nondisclosure belief. First, it raises the likelihood that nondisclosure stems from concealment rather than the sender being uninformed. Second, conditional on concealment, it raises the expected signal. As the DM's belief conditional on concealment is below the prior (because the sender is using threshold strategy), the two effects work in opposite directions; the second effect dominates if and only if $\hat{s} > \eta(\hat{s},p,\pi)$. On the other hand, holding the disclosure threshold fixed, an increase in the probability of the sender being informed has an unambiguous effect, as it raises the likelihood that nondisclosure reflects concealment.

\begin{lemma}
\label{lem:NDbelief}
The nondisclosure belief function $\eta(\hat s,p,\pi)$ has the following properties:
\begin{enumerate}
\item \label{NDbelief1} It is strictly decreasing in $\hat s$ when $\hat s<\eta(\hat s,p,\pi)$ and strictly increasing when $\hat s>\eta(\hat s,p,\pi)$. Consequently, the unique solution to $\hat s = \eta(\hat s,p,\pi)$ is $\argmin_{\hat s} \eta(\hat s,p,\pi)\in (\underline s,\pi)$.
\item \label{NDbelief2} It is weakly decreasing in $p$, strictly if $\hat s\in(\underline s,\overline s)$.
\end{enumerate}
\end{lemma}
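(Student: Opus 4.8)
The plan is to first collapse $\eta$ into a transparent closed form by exploiting the model's defining feature---that each signal equals the sender's own private belief. Bayes' rule gives $s=\pi f(s\mid 1)/f_\pi(s)$, i.e.\ the pointwise identity $s\,f_\pi(s)=\pi f(s\mid 1)$, equivalently $f(s\mid 1)/f(s\mid 0)=\tfrac{1-\pi}{\pi}\cdot\tfrac{s}{1-s}$. Integrating the first identity turns the ``average concealed signal'' term in \eqref{e:NDbelief} into $\int_{\underline s}^{\hat s} s f_\pi(s)\,ds=\pi F(\hat s\mid 1)$, so that
\[
\eta(\hat s,p,\pi)=\frac{\pi\big[(1-p)+pF(\hat s\mid 1)\big]}{(1-p)+pF_\pi(\hat s)}.
\]
This is just the posterior probability of state $1$ conditional on nondisclosure; it conveniently sidesteps the awkward conditional expectation and makes the boundary values immediate, giving $\eta=\pi$ at both $\hat s=\underline s$ and $\hat s=\overline s$ (since $F_\pi(\underline s)=F(\underline s\mid 1)=0$ and $F_\pi(\overline s)=F(\overline s\mid 1)=1$).

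For part \ref{NDbelief1}, I would differentiate this ratio in $\hat s$. Writing $\eta=N/D$, the sign of $\partial\eta/\partial\hat s$ is that of $N'D-ND'$; substituting $\pi f(\hat s\mid 1)=\hat s\,f_\pi(\hat s)$ (the identity at the threshold) and $N=\eta D$ collapses this to $p\,f_\pi(\hat s)\,(\hat s D-N)=p\,f_\pi(\hat s)\,D\,(\hat s-\eta)$, so that $\operatorname{sign}(\partial\eta/\partial\hat s)=\operatorname{sign}(\hat s-\eta)$ on the interior of $S$ (where $f_\pi>0$), which is the claimed behavior. For the ``consequently'' clause, set $\psi(\hat s):=\hat s-\eta(\hat s,p,\pi)$. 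The sign result gives $\eta'=\tfrac{p f_\pi(\hat s)}{D}\psi$, so $\eta'=0$ at any root of $\psi$ and hence $\psi'=1-\eta'=1>0$ there: $\psi$ crosses zero strictly from below at every root, which forces at most one root (between two consecutive roots $\psi$ would be positive just after the first and negative just before the second, producing an intermediate root by continuity---a contradiction). Existence follows from $\psi(\underline s)=\underline s-\pi<0$ and $\psi(\overline s)=\overline s-\pi>0$, using $\underline s<\pi<\overline s$, itself a consequence of $\E_\pi[s]=\pi$ and nondegeneracy of $F$. Thus there is a unique fixed point $\hat s^*$; since $\eta$ falls on $[\underline s,\hat s^*)$ and rises on $(\hat s^*,\overline s]$, $\hat s^*=\argmin_{\hat s}\eta$, and strict increase on $(\hat s^*,\overline s]$ gives $\hat s^*=\eta(\hat s^*)<\eta(\overline s)=\pi$, placing $\hat s^*\in(\underline s,\pi)$.

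For part \ref{NDbelief2}, I would note that $\eta$ is a strictly increasing function of the likelihood ratio $q_1/q_0$ where $q_\omega:=(1-p)+pF(\hat s\mid\omega)$, so $\operatorname{sign}(\partial\eta/\partial p)=\operatorname{sign}\big(\partial(q_1/q_0)/\partial p\big)$. A short computation shows the numerator of this derivative simplifies---the terms quadratic in $p$ cancel---to exactly $F(\hat s\mid 1)-F(\hat s\mid 0)$. The likelihood-ratio identity from the first paragraph shows $f(\cdot\mid 1)/f(\cdot\mid 0)$ is strictly increasing, i.e.\ MLRP holds, whence $F(\hat s\mid 1)\le F(\hat s\mid 0)$ with strict inequality for $\hat s\in(\underline s,\overline s)$. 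Hence $\eta$ is weakly decreasing in $p$, strictly on the interior.

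I anticipate the main obstacle to be the uniqueness/global-minimizer claim in part \ref{NDbelief1}: the derivative-sign condition is only local, so I must rule out multiple fixed points of $\eta$. The clean resolution is the observation that $\psi'=1$ at every root, which upgrades local single-crossing into global uniqueness; everything else is routine once the closed form and the private-belief identity are in hand.
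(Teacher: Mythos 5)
Your proposal is correct, and it reaches the paper's conclusions by a genuinely different, though closely related, route. The paper simply differentiates \eqref{e:NDbelief} as written: for part \ref{NDbelief1} it computes $\partial\eta/\partial\hat s$ directly and simplifies it to $\frac{pf_\pi(\hat s)}{1-p+pF_\pi(\hat s)}\left(\hat s-\eta(\hat s,p,\pi)\right)$ --- exactly the identity your quotient-rule computation delivers --- and for part \ref{NDbelief2} it computes $\partial \eta/\partial p=\frac{F_\pi(\hat s)\left(\E_{\pi}[s \mid s<\hat s]-\pi\right)}{\left(1-p+pF_\pi(\hat s)\right)^2}$, signed by $\E_\pi[s\mid s<\hat s]<\pi$ for $\hat s<\overline s$. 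You instead first rewrite $\eta$ in the state-conditional closed form $\pi q_1/\bigl(\pi q_1+(1-\pi)q_0\bigr)$ with $q_\omega=(1-p)+pF(\hat s\mid\omega)$, exploiting the private-belief identity $s\,f_\pi(s)=\pi f(s\mid 1)$. This buys two things: the boundary values $\eta(\underline s,\cdot)=\eta(\overline s,\cdot)=\pi$ become immediate, and part \ref{NDbelief2} reduces to monotonicity of $q_1/q_0$ in $p$, whose derivative-numerator collapses to $F(\hat s\mid 1)-F(\hat s\mid 0)\le 0$, a standard MLRP-implies-FOSD fact; this is equivalent to the paper's sign condition, since $\E_\pi[s\mid s<\hat s]-\pi$ equals a positive multiple of $F(\hat s\mid 1)-F(\hat s\mid 0)$. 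Your treatment of the ``consequently'' clause is also more complete than the paper's: the paper asserts it from the boundary values and quasiconvexity, whereas you make uniqueness of the fixed point airtight via the observation that $\psi(\hat s)=\hat s-\eta(\hat s,p,\pi)$ satisfies $\psi'=1$ at every root, so every root is a strict upcrossing and there can be at most one; you also justify $\underline s<\pi<\overline s$ from the martingale property $\E_\pi[s]=\pi$, which the paper uses implicitly. The trade-off is that the paper's computation is shorter and stays in the belief-distribution notation ($F_\pi$, $\E_\pi$) used throughout the rest of the analysis, while yours exposes the probabilistic meaning of $\eta$ as the literal posterior of state $1$ given nondisclosure and outsources part \ref{NDbelief2} to a textbook stochastic-dominance argument.
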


See \autoref{fig:single}, which depicts both parts of the lemma and the subsequent results in this section.  The quasiconvexity in part \ref{NDbelief1} of \autoref{lem:NDbelief} will be crucial for our comparative statics. The lemma's other properties have appeared previously in \citet[Proposition 1 and subsequent discussion]{ADK11}; see also \citet[Proposition 1]{DKS19}.

\begin{figure}
\centering
\begin{tikzpicture}[scale=0.9, transform shape, x=10cm, y=8.35cm]

\def\axiswidth{1.2pt}
\def\curvewidth{1.1pt}
\def\thinwidth{0.4pt}

\draw[line width=\axiswidth] (0,0) rectangle (1,1);

\draw[line width=\axiswidth,->] (0,1) -- (0,1.08);
\draw[line width=\axiswidth,->] (1,0) -- (1.08,0);

\node[above left] at (0,0.96) {$\bar s$};
\node[below left] at (0,0.035) {$\underline s$};
\node[below right] at (0.967,0.005) {$\bar s$};
\node[below] at (1.11,0.038) {$s_i$};


\draw[line width=\thinwidth] (0.0,0.0) -- (1,1);
\node at (0.13,0.17) {$s_i$};

\draw[dash pattern=on 1.5pt off 1.5pt,line width=\thinwidth] (0.08,0.0) -- (1,0.92);
\node at (0.3,0.15) {$s_i-c$};

\draw[dash dot,line width=\thinwidth] (0.0,0.12) -- (0.88,1.0);
\node at (0.615,0.83) {$s_i-c'$};


\draw[line width=\curvewidth]
  plot[smooth]
  coordinates {
    (0.00,0.80)
    (0.09,0.56)
    (0.145,0.45)
    (0.163,0.4192)
    (0.215,0.36)
    (0.33,0.33)
    (0.43,0.347)
    (0.54,0.4132)
    (0.63,0.55)
    (0.73,0.683)
    (0.835,0.7545)
    (1,0.8)
  };
\node at (0.145,0.7) {$\eta(s_i,p_i,\pi)$};

\draw[line width=\curvewidth]
  plot[smooth
  ]
  coordinates {
    (0.00,0.80)
    (0.024,0.65)
    (0.073,0.50)
    (0.15,0.32)
    (0.28,0.28)
    (0.47,0.30)
    (0.65,0.42)
    (0.80,0.55)
    (0.93,0.70)
    (1.00,0.80)
  };
\node at (0.67,0.33) {$\eta(s_i,p_i',\pi)$};

\node[left] at (0.0,0.8) {$\pi$};
\end{tikzpicture}
\caption{The single-sender game with an upward-biased sender, illustrated with $c>0>c'$ and $p'_i>p_i$. Equilibrium thresholds are given by intersections of $\eta(s_i,\cdot)$ and $s_i-c$ or $s_i-c'$.}\label{fig:single}
\end{figure}
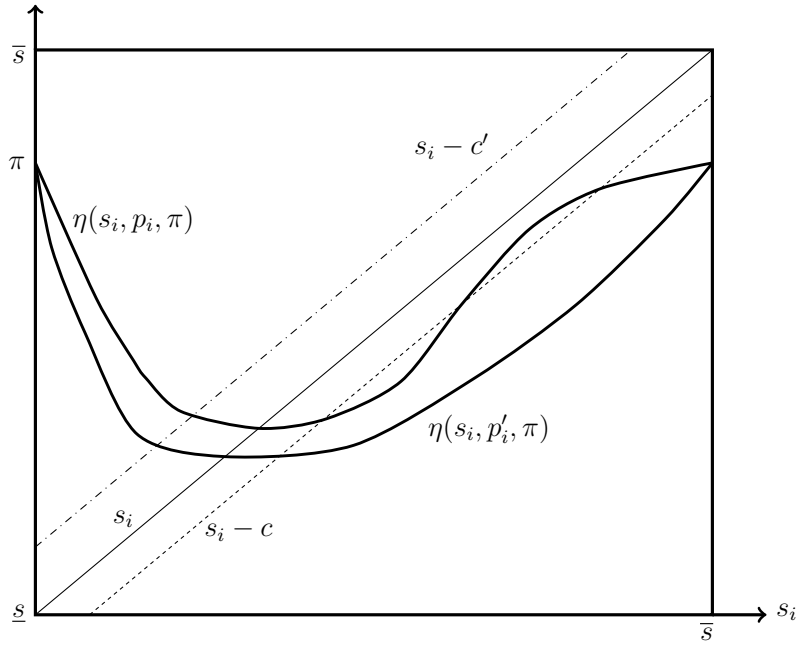

It follows that any equilibrium is fully characterized by the sender's disclosure threshold.  If this threshold is interior, the sender must be indifferent between disclosing the threshold signal and concealing it.  As sender $i$'s payoff from disclosing signal $s_i$ is $ s_i-c$, we obtain the following characterization.

\begin{proposition}
\label{prop:single}
Assume there is one sender $i$ who is biased upward.
\begin{enumerate}
\item \label{single1} Any equilibrium has a disclosure threshold $\hat s_i^0$ satisfying one of: (i) $\hat s_i^0$ is interior and $\eta(\hat s_i^0,p_i,\pi)= \hat s_i^0-c$; (ii) $\hat s_i^0=\underline s$ and $\pi\leq \underline s - c$; or (iii) $\hat s_i^0=\overline s$ and $\pi\geq \overline s-c$. Conversely, any $\hat s_i^0$ satisfying (i), (ii), or (iii) is an equilibrium threshold.
\item \label{single2} If there is no message cost or a concealment cost ($c\leq 0$), equilibrium is unique. Moreover, the equilibrium threshold is interior if there is no message cost ($c=0$).
\item \label{single3} If there is a disclosure cost ($c>0$), there can be multiple equilibria.
\end{enumerate}
\end{proposition}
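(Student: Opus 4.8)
The plan is to collapse the equilibrium fixed-point problem to the single scalar equation $D(\hat s)=c$, where $D(\hat s):=\hat s-\eta(\hat s,p_i,\pi)$ is the disagreement between the threshold type's private belief and the nondisclosure belief, and then read off all three parts from the shape of $D$ that \autoref{lem:NDbelief} supplies.

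First, for part \ref{single1}, I would pin down best responses. Fixing a conjectured threshold $\hat s$, the DM's nondisclosure belief is $\eta(\hat s,p_i,\pi)$, so an informed type $s$ earns $s-c$ from disclosure and $\eta(\hat s,p_i,\pi)$ from concealment (recall $b_i=1$). Since the disclosure payoff is strictly increasing in $s$ while the concealment payoff is constant in $s$, the unique best response is the cutoff rule ``disclose iff $s\ge \eta(\hat s,p_i,\pi)+c$,'' and equilibrium is the fixed point $\hat s^0=\max\{\underline s,\min\{\eta(\hat s^0,p_i,\pi)+c,\overline s\}\}$. An interior fixed point is exactly condition (i); a fixed point at $\underline s$ needs $\eta(\underline s,p_i,\pi)+c\le\underline s$, and since $\eta(\underline s,p_i,\pi)=\pi$ (no concealment when all types disclose) this is condition (ii); a fixed point at $\overline s$ needs $\eta(\overline s,p_i,\pi)+c\ge\overline s$ with $\eta(\overline s,p_i,\pi)=\pi$ (nondisclosure uninformative), which is condition (iii). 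Both directions of part \ref{single1} follow because this fixed-point relation is an ``if and only if,'' and existence is guaranteed since $D$ is continuous with $D(\underline s)<0<D(\overline s)$.

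Before parts \ref{single2}--\ref{single3} I would record three facts. Because $\eta(\underline s,p_i,\pi)=\eta(\overline s,p_i,\pi)=\pi$ and $\underline s<\pi<\overline s$ (the prior equals the mean private belief), $D(\underline s)=\underline s-\pi<0<\overline s-\pi=D(\overline s)$; by part \ref{NDbelief1} of \autoref{lem:NDbelief}, $\operatorname{sign}D(\hat s)=\operatorname{sign}(\hat s-s^*)$ with $s^*=\argmin\eta\in(\underline s,\pi)$; and on $[\underline s,s^*]$, where $\eta$ is strictly decreasing, $D'=1-\eta'>1>0$, so $D$ is strictly increasing there. In $D$-notation the equilibria are: interior $\Leftrightarrow D(\hat s^0)=c$, full disclosure $\Leftrightarrow c\le D(\underline s)$, full concealment $\Leftrightarrow c\ge D(\overline s)$. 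For part \ref{single2}, take $c\le 0$. Full concealment is impossible since it needs $c\ge D(\overline s)>0$; any interior equilibrium solves $D(\hat s^0)=c\le 0$, hence lies in $[\underline s,s^*]$ where $D$ is strictly increasing, so there is at most one, and it cannot coexist with the full-disclosure corner (that needs the complementary case $c\le D(\underline s)$). Uniqueness follows, and when $c=0$ the unique root of $D=0$ is $s^*\in(\underline s,\pi)\subset(\underline s,\overline s)$, which is interior.

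Finally, part \ref{single3} is where the shape of $D$ on the \emph{increasing} branch matters, and this is the main obstacle. Differentiating the definition of $\eta$ in \eqref{e:NDbelief} gives $\eta'(\hat s)=\tfrac{p_i f_\pi(\hat s)}{\,1-p_i+p_iF_\pi(\hat s)\,}\,D(\hat s)$, so on $\{\hat s>s^*\}$ (where $D>0$) the slope $\eta'$ can exceed $1$ whenever the density $f_\pi$ is sufficiently concentrated; there $D'=1-\eta'<0$, so $D$ is non-monotone on $(s^*,\overline s)$. The plan is to exhibit a valid full-support signal distribution whose density has a pronounced spike just above $s^*$, so that $D$ rises, falls, and rises again; then for a disclosure cost $c>0$ in the appropriate range, $D(\hat s)=c$ has two interior roots (alternatively, one interior root together with the full-concealment corner when $c\ge D(\overline s)$), delivering multiple equilibria. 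The work here is producing an explicit example and verifying the root count; I would either write down a parametric density family or obtain the example by smoothing a three-point signal distribution and checking the conclusion survives the perturbation.
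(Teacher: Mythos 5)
Your parts \ref{single1} and \ref{single2} are correct and are essentially the paper's own argument: part \ref{single1} via indifference of the threshold type plus the observation that $\eta(\underline s,p_i,\pi)=\eta(\overline s,p_i,\pi)=\pi$, and part \ref{single2} via the quasiconvexity from \autoref{lem:NDbelief}. Your reformulation through $D(\hat s):=\hat s-\eta(\hat s,p_i,\pi)$ is only cosmetically different from the paper's comparison of $\eta(\cdot,p_i,\pi)$ against the line $s-c$.

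The genuine gap is in part \ref{single3}. The claim ``there can be multiple equilibria'' is an existence statement, and your proposal stops exactly where the real work begins: you posit a density with ``a pronounced spike just above $s^*$'' and defer the verification. The difficulty is that this forward construction is circular: the fixed point $s^*$, the disagreement $D$, and the slope $\eta'=\frac{p_i f_\pi}{1-p_i+p_iF_\pi}\,D$ all depend jointly on the chosen density, so you cannot independently prescribe where the spike sits relative to $s^*$ and how large $D$ is when the spike acts. Moreover, the dynamics fight you: as soon as $\eta'>1$, the disagreement shrinks ($D'=1-\eta'<0$), which in turn dampens $\eta'$, so it is not automatic that a tall spike produces a dip-and-recovery in $D$ of amplitude large enough for some fixed $c>0$ to be crossed multiple times. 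The paper circumvents this circularity by solving the \emph{inverse} problem (the validity claim in \autoref{fn:fig-ND}): fix the target curve $\psi$ first---any continuously differentiable quasiconvex $\psi$ with $\psi(\underline s)=\psi(\overline s)\in(0,1)$, $\sign[s-\psi(s)]=\sign[\psi'(s)]$, and slope exceeding one somewhere to the right of its fixed point---then set $\pi=\psi(\underline s)$ and read \autoref{e:NDbelief} as a first-order ODE determining $F_\pi$ with boundary condition $F_\pi(\underline s)=0$; quasiconvexity guarantees the solution is monotone, and $p_i$ is chosen to normalize $F_\pi(\overline s)=1$. Multiple crossings of $\psi$ with $s-c$ for small $c>0$ then hold by construction. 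To close your proof you would need either to carry out this inverse construction or to actually produce and verify a concrete example (e.g., your smoothed discrete distribution, with the root count checked to survive the perturbation); as written, part \ref{single3} is a plan, not a proof.
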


Part \ref{single1} of \autoref{prop:single} is straightforward; parts \ref{single2} and \ref{single3} build on \autoref{lem:NDbelief}.  Multiple equilibria can arise under a disclosure cost because, in the relevant domain (to the right of the fixed point of $\eta(\cdot,p_i,\pi)$), the DM's nondisclosure belief is increasing in the sender's disclosure threshold.  When there are multiple equilibria, we will focus on the highest and lowest equilibria in terms of the disclosure threshold.  As a higher-threshold equilibrium is less (Blackwell) informative, these extremal equilibria are, respectively, worst and best for the DM's welfare. The ranking of these equilibria is reversed for the sender's welfare, by linearity of his preferences. To elaborate: since in expectation the DM's belief is $\pi$ in any equilibrium, the sender's welfare in an equilibrium with threshold $\hat s_i^0$ is $ \pi-p_i (1-F_{\pi}(\hat s_i^0))c$. Consequently, given that multiplicity requires $c>0$, the sender ex-ante prefers a higher disclosure threshold, as it implies a lower probability of incurring the disclosure cost.

A key observation concerns \emph{disagreement}: the divergence between the sender's threshold belief $\hat s^0_i$ and the DM's nondisclosure belief $\eta(\hat s^0_i,\cdot)$. When $c=0$, these coincide. When $c\neq 0$, they differ in any equilibrium. Specifically, if $c>0$, then $\hat s_i^0 > \eta(\hat s_i^0,\cdot)$, i.e., the sender's threshold belief exceeds the DM's nondisclosure belief. If $c<0$, the opposite holds. This disagreement will prove crucial. Note that, consistent with \citet{Aumann76}, the disagreement is not common knowledge---the sender knows the DM's belief but not vice versa.

\autoref{prop:single} is stated for an upward-biased sender. For a downward-biased sender, the analysis is symmetric: he discloses all signals below some threshold. The nondisclosure belief becomes
\begin{align*}
\frac{1-p}{1-p +p (1-F_{\pi}(\hat s))}\pi + \frac{p (1-F_{\pi}(\hat s))}{1-p +p (1-F_{\pi}(\hat s))} \E_{\pi}
\left[s \mid s>\hat s\right],
\end{align*}
which is strictly quasiconcave in $\hat{s}$. The equilibrium condition becomes that this expression equals $\hat s+c$, since the sender's payoff is $-s-c$ if he discloses and minus the nondisclosure belief if he conceals. As before, equilibrium is unique when $c \le 0$, while there can be multiple equilibria when $c > 0$. For $c\neq 0$, the direction of disagreement is now reversed: with a downward-biased sender, $c>0$ implies the sender's threshold belief is {lower} than the DM's nondisclosure belief, and conversely for $c<0$. Furthermore, a lower threshold now corresponds to \emph{less} disclosure.

The following comparative statics hold with an upward-biased sender; the modifications for a downward-biased sender are straightforward in light of the above discussion.

\begin{proposition}
\label{prop:singleCS}
Assume there is one sender who is upward biased.
\begin{enumerate}
\item A higher probability of being informed leads to more disclosure: the highest and lowest equilibrium thresholds (weakly) decrease.
\item A higher disclosure cost or a lower concealment cost (in magnitude) leads to less disclosure: the highest and lowest equilibrium thresholds (weakly) increase.
\end{enumerate}
\end{proposition}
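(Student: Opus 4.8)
The plan is to reduce both comparative statics to monotone movements of the zero-crossings of a single gap function, exploiting the sign structure of $g(s):=s-\eta(s,p,\pi)$ that is inherited from \autoref{lem:NDbelief}. By \autoref{prop:single}, with an upward-biased sender a threshold $\hat s$ is an equilibrium exactly when the gap $\Delta(s):=(s-c)-\eta(s,p,\pi)=g(s)-c$ vanishes at an interior $\hat s$, or $\hat s=\underline s$ with $\Delta(\underline s)\ge 0$, or $\hat s=\overline s$ with $\Delta(\overline s)\le 0$. First I would record the shape of $g$: it is continuous, strictly negative and (by part~\ref{NDbelief1} of \autoref{lem:NDbelief}) strictly increasing on the left branch $[\underline s,s^*)$, it vanishes at the fixed point $s^*=\argmin_s\eta$, and it is strictly positive on the right branch $(s^*,\overline s]$, where it need not be monotone; moreover the endpoint values $g(\underline s)=\underline s-\pi<0$ and $g(\overline s)=\overline s-\pi>0$ follow from $\eta(\underline s,\cdot)=\eta(\overline s,\cdot)=\pi$ and $\pi=\E_\pi[s]\in(\underline s,\overline s)$. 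The payoff of this bookkeeping is a single structural fact: both the lowest and the highest equilibrium thresholds, $\hat s_{\min}$ and $\hat s_{\max}$, are \emph{up-crossings} of $\Delta$---the gap is negative just below $\hat s_{\min}$ (by left-branch monotonicity together with the absence of lower zeros) and, whenever $\overline s$ is not itself an equilibrium, positive just above $\hat s_{\max}$ (because then $\Delta(\overline s)=g(\overline s)-c>0$).

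Next I would translate each comparative static into a monotone shift of $\Delta$ and move the extremal crossings by an intermediate-value argument. Raising $c$ (a larger disclosure cost, or a concealment cost smaller in magnitude) lowers $\Delta$ uniformly while leaving $g$ and $s^*$ fixed; by part~\ref{NDbelief2} of \autoref{lem:NDbelief}, raising $p$ lowers $\eta$---hence raises $g$ and $\Delta$---strictly on the interior, while keeping the endpoint values $g(\underline s),g(\overline s)$, and therefore the corner-equilibrium conditions, unchanged. To illustrate the mechanics on the lowest equilibrium under part~1: if $\hat s$ is the lowest equilibrium at $p$ and $p'>p$, then the gap rises strictly at the interior point $\hat s$ (so it becomes positive there) while it is unchanged and negative at $\underline s$; the intermediate value theorem then produces a zero of the new gap on $(\underline s,\hat s)$, so the lowest equilibrium falls. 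The symmetric bracketing---comparing gap values at the old extremal crossing and at the relevant endpoint---delivers the remaining statements: raising $c$ pushes both up-crossings right (higher thresholds, less disclosure) and raising $p$ pushes both left (lower thresholds, more disclosure). When $c\le 0$, \autoref{prop:single} guarantees a unique equilibrium lying on the strictly increasing left branch (or at $\underline s$), so both monotonicities collapse to inverting a strictly monotone function and are immediate.

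The hard part will be the \emph{highest} equilibrium, since it is precisely the non-monotonicity of $\eta$ on the right branch that generates multiplicity under a disclosure cost and forbids a global inverse. Two points need care. First, the highest equilibrium may be the corner $\overline s$ (full concealment), which is an equilibrium iff $\Delta(\overline s)\le0$; I would check that this condition tightens correctly as $c$ grows and, crucially, is \emph{unaffected} by $p$ (because $g(\overline s)=\overline s-\pi$ does not depend on $p$), and that transitions between an interior highest equilibrium and the corner respect the claimed ordering. Second, the bracketing for the highest crossing cannot rely on any monotonicity of $g$ near $\overline s$; it must be anchored at the fixed endpoint value $g(\overline s)=\overline s-\pi$. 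When $\overline s$ is not an equilibrium one has $\Delta(\overline s)>0$, and combining this with the sign the gap acquires at the old crossing after the shift pins the new highest crossing on the correct side---an intermediate-value bracketing on $[\hat s_{\max},\overline s]$ when $c$ rises, and the absence of zeros above $\hat s_{\max}$ when $p$ rises. Once these corner cases are dispatched the interior-value arguments close the proof, and the downward-biased case follows from the symmetric construction noted after \autoref{prop:single}.
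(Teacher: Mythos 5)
Your proposal is correct and takes essentially the same approach as the paper: both reduce the comparative statics to the sign behavior of the indifference gap $\hat s - c - \eta(\hat s,p,\pi)$ under the monotone shifts supplied by \autoref{lem:NDbelief} (downward in $\eta$ as $p$ rises, uniform as $c$ shifts), and then pin down the movement of the extremal crossings. The paper states this as a short contradiction argument (the gap stays strictly positive above the old highest equilibrium, so no higher equilibrium can appear), while you argue directly with intermediate-value bracketing anchored at the endpoint values $\eta(\underline s,\cdot)=\eta(\overline s,\cdot)=\pi$; the two are logically equivalent, and your extra bookkeeping on the branches of $g$ and the corner cases just makes explicit what the paper leaves implicit.
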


The logic for the first part follows from \autoref{lem:NDbelief}: given any conjectured threshold, a higher $p_i$ leads to a lower nondisclosure belief, increasing the sender's gain from disclosure. For $c=0$, this comparative static appears in \citet{JK88} and \citet*{ADK11}. The second part of \autoref{prop:singleCS} is straightforward, as a larger message cost $c$ makes disclosure less attractive. Since scaling the magnitude of $c$ is equivalent to scaling the agent's bias parameter $b_i$ (cf.~expression \eqref{e:payoff}), an equivalent interpretation is that an agent with a stronger persuasion motive discloses more when $c>0$ but less when $c<0$. Both parts of \autoref{prop:singleCS} are seen in \autoref{fig:single}.\footnote{\label{fn:fig-ND}In the figure, $\eta(\cdot)$ has slope less than one at its largest crossing point with $s_i-c$. This makes transparent that an increase in $p_i$ reduces the highest equilibrium threshold. If the slope exceeded one at the highest crossing point, the largest equilibrium threshold would be $\overline s$, and a small increase in $p_i$ would not alter it. We also note that the $\eta(\cdot)$ depicted is valid: it can be shown that for any continuously differentiable $\psi:[\underline s,\overline s]\to [0,1]$ with $\psi(\underline s)=\psi(\overline s) \in (0,1)$ and $\sign[s-\psi(s)]=\sign[\psi'(s)]$ (where $\psi'$ denotes the derivative), there are model parameters---namely, $\pi\in (0,1)$, $p_i\in (0,1)$, $f(s|0)$, and $f(s|1)$---such that $\eta(\cdot,p,\pi)=\psi(\cdot)$. See the proof of \autoref{prop:single} (part \ref{single3}) for details.}

Although we postpone a formal argument to \autoref{sec:main}, it is worth noting now that the foregoing comparative statics have direct welfare implications.  Since the DM prefers more disclosure, a lower message cost and/or a higher probability of the sender being informed (weakly) increases the DM's welfare in a single-sender setting, subject to an appropriate comparison of equilibria (in particular, focusing on the extremal equilibria).

\section{Strategic Substitutes and Complements}
\label{sec:main}

We are now ready to study the two-sender disclosure game. For concreteness, we will suppose that both senders are upward biased; the modifications needed when one or both senders are downward biased are straightforward.

\begin{lemma}
\label{lem:threshold}
Any equilibrium is a threshold equilibrium, i.e., both senders use threshold strategies.
\end{lemma}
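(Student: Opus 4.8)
The plan is to show that each sender's best response, for any fixed (measurable) strategy of the opponent, is a threshold strategy, and that this holds regardless of the opponent's behavior; threshold equilibrium then follows immediately. Fix sender $i$ (upward biased) and condition on $i$ being informed with signal $s_i$, since an uninformed sender has no choice. The key object is the difference between $i$'s expected payoff from disclosing and from concealing. Disclosure of $s_i$ induces the DM to hold interim belief $s_i$ about the state based on $i$'s message, and then the DM further updates using $j$'s message; concealment induces interim belief $\eta_i$ (the nondisclosure belief implied by $i$'s conjectured strategy), after which the DM again updates on $j$'s message. Crucially, because signals are conditionally independent across senders given $\omega$, the distribution of $j$'s message does \emph{not} depend on $s_i$ once we condition on the DM's interim belief arising from $i$'s message.

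The central step is to argue monotonicity in $s_i$ of the disclosure-minus-concealment payoff difference. First I would write sender $i$'s expected payoff from disclosing $s_i$ as $\E[\beta_{DM}\mid \text{interim belief }=s_i]-c$, where the expectation is over $j$'s message (and over $i$'s own signal realization only through the interim belief it induces). Because $i$ has private belief $s_i$, when $i$ discloses, $i$ evaluates the expected posterior using the signal distribution consistent with belief $s_i$; by the law of iterated expectations applied to $i$'s own private belief, this expected posterior equals $s_i$ plus a term capturing how $j$'s message shifts the DM's belief, evaluated under $i$'s belief $s_i$. The concealment payoff, by contrast, is $\E_{s_i}[\beta_{DM}\mid \text{interim belief}=\eta_i]$, which depends on $s_i$ only through $i$'s own expectation over $j$'s message. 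The disclosure payoff increases in $s_i$ one-for-one through the direct interim-belief channel, while both payoffs' dependence on $s_i$ through the $j$-updating channel can be shown to have a bounded (in fact, controllable) effect; the dominant force is that a higher $s_i$ raises $i$'s valuation of disclosure because disclosure reveals the favorable signal directly, whereas concealment pools $s_i$ with lower signals.

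The main obstacle I anticipate is handling the endogenous updating on $j$'s message cleanly: sender $i$'s type $s_i$ affects not only the interim belief disclosure induces but also $i$'s \emph{subjective} forecast of how $j$'s message will move the DM's posterior, so one must verify that the net payoff difference is genuinely monotone (single-crossing) in $s_i$ rather than merely that the direct effect is monotone. The clean way to do this is to exploit the state-independence and linearity of $i$'s preferences together with the martingale property of beliefs: I would show that, conditional on the DM's interim belief $x$ (from $i$'s message) and the realized state $\omega$, the expected final posterior $\E[\beta_{DM}\mid x,\omega]$ is increasing in $x$, and then take the expectation over $\omega$ using $i$'s private belief $s_i$. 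The payoff difference between inducing interim belief $s_i$ (disclosure) and interim belief $\eta_i$ (concealment), both evaluated under $i$'s belief $s_i$, is then the difference of an increasing function of the induced interim belief, so it crosses zero once as $s_i$ varies. This yields the threshold structure; because the argument uses only $i$'s own signal distribution and the fixed conditional distribution of $j$'s message, it is valid against \emph{any} strategy of $j$, so in equilibrium both senders use thresholds.

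Finally, I would note the symmetric argument for a downward-biased sender (thresholds from below) and remark that the skeptical-posture / richer-message-space caveats from the model section do not affect the threshold conclusion, so that the lemma holds as stated. The boundary cases (where the threshold is $\underline{s}$ or $\overline{s}$) are subsumed by allowing the crossing point to lie at an endpoint of $S$.
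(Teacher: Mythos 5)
Your setup is the right one, and it matches the paper's in spirit: fix $s_i$, compare the payoff from inducing DM interim belief $s_i$ (disclosure) against inducing $\eta_i$ (concealment), and use conditional independence so that the distribution of $m_j$ given the state does not vary with $s_i$. The gap is in the step you call ``the clean way.'' You propose to show that $g_\omega(x):=\E[\beta_{DM}\mid \text{interim belief } x,\omega]$ is increasing in $x$, then conclude that the payoff difference, being ``the difference of an increasing function of the induced interim belief,'' crosses zero once as $s_i$ varies. Writing $h_{s_i}(x):=s_i g_1(x)+(1-s_i)g_0(x)$, monotonicity of $g_\omega$ in $x$ tells you only that $h_{s_i}(s_i)-h_{s_i}(\eta_i)$ has the same sign as $s_i-\eta_i$. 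That does prove the lemma when $c=0$: the disclosure set is exactly $\{s_i>\eta_i\}$. But when $c>0$, disclosure is preferred iff $h_{s_i}(s_i)-h_{s_i}(\eta_i)>c$, and the sign property alone does not rule out this net gain exceeding $c$ on some interval above $\eta_i$, dipping below $c$, and exceeding it again---the weights $(s_i,1-s_i)$ in $h_{s_i}$ move with $s_i$ at the same time as the evaluation point does, and monotonicity of each $g_\omega$ does not discipline that interaction. The symmetric problem arises for $c<0$. Since the entire point of the paper is $c\neq 0$, this is precisely the case your argument must, and does not, cover; you flagged the need for genuine single-crossing yourself, but the proposed resolution delivers only a sign condition, not monotonicity.

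The fix is the quantitative argument you gesture at in your second paragraph and then abandon, and it is exactly the paper's (very short) proof. First, the disclosure payoff is exactly $s_i-c$: when $i$ discloses, his and the DM's interim beliefs agree, so by the martingale property his expectation of the DM's final posterior is $s_i$---the ``term capturing how $j$'s message shifts the DM's belief'' in your write-up is identically zero, and you should say so, since this is what makes the disclosure payoff have slope exactly $1$ in $s_i$. Second, by conditional independence, the concealment payoff is \emph{linear} in $s_i$: it equals $s_i\,\E[\beta_{DM}(m_j,\phi)\mid\omega=1]+(1-s_i)\,\E[\beta_{DM}(m_j,\phi)\mid\omega=0]$, where the two conditional expectations are fixed numbers not depending on $s_i$. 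Its slope is $\E[\beta_{DM}(m_j,\phi)\mid\omega=1]-\E[\beta_{DM}(m_j,\phi)\mid\omega=0]<1$, because beliefs lie in $[0,1]$ and $m_j$ cannot perfectly reveal the state. Hence the disclosure-minus-concealment difference is strictly increasing in $s_i$ for every $c\in\mathbb{R}$, giving the threshold structure; notably, once you have this, the monotonicity of $g_\omega(x)$ in $x$ is not needed at all.
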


In light of \autoref{lem:threshold}, we focus on threshold strategies. Conditionally independent signals mean that the DM's belief updating is separable in the senders' messages. In other words, we can treat it as though the DM first updates from either sender $i$'s message just as in a single-sender model, and then uses this updated belief as an interim prior to update again from the other sender $j$'s message without any further attention to $i$'s message. Thus, given any conjectured pair of disclosure thresholds, $(\hat s_1,\hat s_2)$, there are three relevant nondisclosure beliefs for the DM: if only one sender $i$ discloses his signal $s_i$ while sender $j$ sends message $\phi$, the DM's belief is $\eta(\hat s_j,p_j,s_i)$; if there is nondisclosure from both senders, the DM's belief is $\eta(\hat s_j,p_j,\eta(\hat s_i,p_i,\pi))$.

As discussed earlier, if $i$ discloses his signal then his expectation of the DM's belief---viewed as a random variable that depends on $j$'s message---is $s_i$, no matter what strategy $j$ is using.\footnote{The distribution of the DM's beliefs as a function of $j$'s message depends both on $j$'s strategy and the DM's conjecture about $j$'s strategy.  As the two must coincide in equilibrium, we bundle them to ease exposition.}  On the other hand, if $i$ conceals his signal, then he views the DM as updating from $j$'s message based on the interim belief $\eta(\hat s_i,p_i,\pi)$ that may be different from $s_i$. Denoting an arbitrary message from $j$ by $m_j\in S\union\{\phi\}$, let $\beta(m_j,\hat{s}_j,p_j,s_i)$ denote $i$'s posterior belief about the state given $m_j$ (with the threshold $\hat s_j$ and probability of being informed $p_j$ required to interpret $m_j=\phi$) when $i$'s own interim belief/signal is $s_i$. The DM's posterior after message $m_j$ is a transformation of $i$'s own posterior, where the transformation function depends only on the DM's and $i$'s interim beliefs. Specifically, the DM's posterior after message $m_j$ equals 
$T(\beta(m_j,\hat s_j,p_j,s_i),\eta(\hat s_i,p_i,\pi),s_i)$, 
where for any posterior $\beta_i$ and interim beliefs $\mu_{DM}$ and $\mu_i$, the mapping $T$ is defined by\footnote{\label{fn:transformation}To obtain this transformation, consider an arbitrary (Blackwell) experiment and some signal whose likelihood ratio between states $1$ and $0$ is $l\in (0,\infty)$. 
For any prior $b_0$, Bayes' rule implies that the posterior belief $b(l,b_0)$ satisfies
\begin{equation*}
\frac{b(l,b_0)}{1-b(l,b_0)}
=
\frac{b_0}{1-b_0}\,l .
\end{equation*}
Eliminating the likelihood ratio $l$ yields the following relationship for any two priors $b_0$ and $b_0'$:
\begin{equation*}
b(l,b_0')
=
\frac{b(l,b_0)\,\frac{b_0'}{b_0}}
{b(l,b_0)\,\frac{b_0'}{b_0}
+
\bigl(1-b(l,b_0)\bigr)\,\frac{1-b_0'}{1-b_0}} .
\end{equation*}
This observation that with heterogeneous priors, one agent's posterior can be written as a function of the other agent's posterior and the two priors is also used by \citet{GK14}, \cite{AC16}, and \citet{Galperti19} to different ends.
} 
\begin{equation}
\label{e:transform}
T\left(\beta_i, \mu_{DM}, \mu_i\right):=\frac{\beta_i \frac{\mu_{DM}}{\mu_i}}{\beta_i \frac{\mu_{DM}}{\mu_i}+\left(1-\beta_i\right) \frac{1-\mu_{DM}}{1-\mu_i}}.
\end{equation}
It follows that sender $i$'s expected payoff---his expectation of the DM's posterior belief---should he conceal his signal is
\begin{equation}
\label{e:U}
U(s_i,\hat{s}_i,p_i,\hat{s}_j,p_j)
:=\E_{\hat s_j,p_j}\big[T(
\beta(m_j,\hat{s}_j,p_j,s_i),
\eta(\hat s_i,p_i,\pi),s_i) \mid s_i\big],
\end{equation}
where $\E_{\hat s_j,p_j}$ denotes that the expectation is taken over $m_j$ using the distribution of beliefs that $\hat s_j$ and $p_j$ jointly induce in $i$ about $m_j$ (given $s_i$). As it will be particularly relevant to evaluate $U(s_i,\hat s_i,\cdot)$ when $s_i=\hat s_i$, we will abuse notation and write $U(s_i,p_i,\hat s_j,p_j)$ as shorthand for $U(s_i, s_i,p_i,\hat s_j,p_j)$.

We can now study the ``best response'' of sender $i$ to any disclosure strategy of sender $j$. More precisely, let ${\hat s_i}^{BR}(\hat s_j, p_i, p_j)$ represent the equilibrium disclosure threshold in a (hypothetical) game between sender $i$ and the DM when sender $j$ is conjectured to mechanically adopt disclosure threshold $\hat s_j$; we call this sender $i$'s best response. The threshold $\hat s_i$ is a best response if and only if
\begin{align}
\label{e:BR}
\begin{cases}
 U(\hat s_i,p_i,\hat s_j,p_j) = \hat s_i -c &\text{and } \hat s_i \in (\underline{s},\overline{s}); \text{ or }\\
  U(\underline{s},p_i,\hat s_j,p_j) \le \underline{s}-c& \text{and }\hat s_i = \underline{s}; \text{ or } \\
  U(\overline{s},p_i,\hat s_j,p_j) \ge \overline{s}-c& \text{and } \hat s_i = \overline{s}.  
\end{cases}
\end{align}
The necessity of condition \eqref{e:BR}  is clear; sufficiency follows from the argument in the proof of \autoref{lem:threshold}. In any equilibrium of the overall game, $(s^*_1,s^*_2)$, condition \eqref{e:BR} must hold for each sender $i$ with $\hat s_i=s^*_i$ when his opponent uses $\hat s_j=s^*_j$.

The following key observation underlies all our comparative statics.

\begin{lemma}
\label{lem:rotation}
Consider any $\hat s_j$ and $p_j$.
\begin{enumerate}
\item \label{rotation1} \case{rotation} \ The following implications hold:
\vspace{-.1in}
\begin{align*}
s_i=\eta(s_i,p_i,\pi) & \implies U(s_i,p_i,\hat s_j,p_j) = s_i, \\
s_i > \eta(s_i,p_i,\pi) & \implies \eta(s_i,p_i,\pi) \leq U(s_i,p_i,\hat s_j,p_j) < s_i,\\
s_i < \eta(s_i,p_i,\pi) & \implies s_i < U(s_i,p_i,\hat s_j,p_j) \leq \eta(s_i,p_i,\pi).
\end{align*}
Both weak inequalities in the consequents are strict if and only if $\hat s_j < \overline s$.
\item \label{rotation2} \case{monotonicity} \ If $p'_j \geq p_j$ and $\hat s'_j \leq \hat s_j$, then 
$\abs{U(s_i,p_i,\hat s'_j,p'_j) - s_i} \leq \abs{U(s_i,p_i,\hat s_j,p_j) - s_i}$, with equality if and only if 
$\hat s'_j = \hat s_j$ and either $p'_j = p_j$ or $\hat s_j = \overline{s}$.
\end{enumerate}
\end{lemma}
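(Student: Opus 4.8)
The plan is to reduce both parts to three facts about the belief‑transformation map $T(\cdot,\eta,s_i)$, where I abbreviate $\eta:=\eta(s_i,p_i,\pi)$ (so that, evaluating $U$ at $\hat s_i=s_i$, sender $i$'s interim belief is $s_i$ and the DM's is $\eta$). First, because $i$ and the DM process $j$'s message through the same likelihood ratio, the DM's posterior after $m_j$ equals $T(\beta_i,\eta,s_i)$, where $\beta_i:=\beta(m_j,\hat s_j,p_j,s_i)$ is $i$'s own posterior. Writing $T$ in odds form, $\tfrac{T}{1-T}=k\,\tfrac{\beta_i}{1-\beta_i}$ with $k:=\tfrac{\eta(1-s_i)}{(1-\eta)s_i}$, a one‑line computation gives: $T$ is strictly increasing and fixes $0,1$; $T(s_i,\eta,s_i)=\eta$; and, since $\operatorname{sign}(k-1)=\operatorname{sign}(\eta-s_i)$, the map is strictly convex with $T(\beta_i)<\beta_i$ when $\eta<s_i$, strictly concave with $T(\beta_i)>\beta_i$ when $\eta>s_i$, and the identity when $\eta=s_i$. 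The third fact is the martingale property under $i$'s own belief, $\E_{s_i}[\beta_i]=s_i$.

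For the rotation part, take $\eta<s_i$ (the case $\eta>s_i$ is symmetric, and $\eta=s_i$ is immediate since $T$ is then the identity). Jensen's inequality applied to the convex $T$ yields $U=\E_{s_i}[T(\beta_i)]\ge T(\E_{s_i}[\beta_i])=T(s_i)=\eta$, the bound toward $\eta$; the pointwise inequality $T(\beta_i)<\beta_i$ together with the martingale property gives $U=\E_{s_i}[T(\beta_i)]<\E_{s_i}[\beta_i]=s_i$, the bound away from $s_i$. The Jensen bound binds with equality exactly when $\beta_i$ is degenerate under $i$'s belief, i.e. when $j$'s message is uninformative, which happens iff $j$ never discloses, i.e. $\hat s_j=\bar s$; this is the claimed strictness of the weak inequalities. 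The strict bound persists even at $\hat s_j=\bar s$, since there it reduces to the hypothesis $\eta<s_i$.

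For the monotonicity part, the key is that raising $j$'s informativeness spreads out $i$'s posterior. I would first show that, as a signal about $\omega$, sender $j$'s message under $(p_j',\hat s_j')$ is Blackwell more informative than under $(p_j,\hat s_j)$ whenever $p_j'\ge p_j$ and $\hat s_j'\le\hat s_j$, by exhibiting an explicit garbling in two steps: relabel every disclosed signal in $[\hat s_j',\hat s_j)$ as $\phi$ (which converts $(p_j',\hat s_j')$ into $(p_j',\hat s_j)$), and then relabel each remaining disclosed signal as $\phi$ independently with probability $1-p_j/p_j'$ (which converts $(p_j',\hat s_j)$ into $(p_j,\hat s_j)$, since this scales the disclosure density by $p_j/p_j'$ and leaves the $\phi$‑probability correct in each state). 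A Blackwell‑more‑informative signal makes the distribution of $\beta_i$, which has mean $s_i$, a mean‑preserving spread; feeding this through the convex (resp. concave) $T$ moves $U=\E_{s_i}[T(\beta_i)]$ toward $s_i$, so $\abs{U-s_i}$ weakly falls. For the equality clause I would check that the garbling destroys no information—hence neither the spread nor $U$ changes—exactly when $\hat s_j'=\hat s_j$ and either $p_j'=p_j$ or $\hat s_j=\bar s$ (in the last case $j$ never discloses, so $p_j$ is irrelevant and $m_j$ is uninformative).

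The main obstacle is this monotonicity step: stating the Blackwell ranking cleanly (especially the garbling for the change in $p_j$) and then matching the strict/equal cases to the stated condition. Two points need care. First, converting ``strictly Blackwell more informative'' into a strict inequality for $U$ requires $T$ strictly convex/concave, i.e. $\eta\ne s_i$; at the knife‑edge $\eta=s_i$ one has $U\equiv s_i$ and the comparison is a trivial equality throughout, so the equality characterization should be read under the maintained disagreement $\eta\ne s_i$—exactly what a nonzero message cost produces. Second, I would verify that lowering the threshold on a positive‑density range and raising $p_j$ each strictly coarsen only in the claimed configurations, so that the ``if and only if'' is tight. I note finally that the rotation part is precisely the IVP theorem of \citet{KLS21-IVP} specialized to these interim beliefs and could be invoked directly; the convexity argument above merely makes the excerpt self‑contained.
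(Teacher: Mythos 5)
Your proof is correct, and it takes a genuinely different route from the paper's. The paper proves part \ref{rotation1} by directly invoking the IVP theorem (Theorem 1) of \citet*{KLS21-IVP}, noting that its ordering hypotheses are automatic with a binary state; for part \ref{rotation2} it constructs exactly your garbling (in one step rather than two) and then invokes the IVP theorem again to conclude that greater informativeness pulls $U$ toward $s_i$, with equality precisely when the two messages are equally informative. You instead make the whole argument self-contained: you compute that $T(\cdot,\eta,s_i)$ is strictly convex when $\eta<s_i$, strictly concave when $\eta>s_i$, and the identity at $\eta=s_i$, then obtain the rotation from Jensen's inequality plus the martingale property $\E_{s_i}[\beta_i]=s_i$, and the monotonicity from the fact that a Blackwell improvement makes the distribution of $\beta_i$ a mean-preserving spread fed through a convex (resp.\ concave) $T$. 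In effect you re-prove the binary-state case of the IVP theorem, which the paper outsources; what your approach buys is transparency (the mechanism is exactly Jensen applied to the odds-ratio transformation) and independence from the cited result, at the cost of not generalizing beyond two states, where the direct convexity computation is unavailable and the full IVP machinery (likelihood-ratio ordered priors, MLRP experiments) is what does the work. Your garbling construction coincides with the paper's, and your characterization of the equality cases (the experiments are Blackwell equivalent iff $\hat s'_j=\hat s_j$ and either $p'_j=p_j$ or $\hat s_j=\overline s$) matches. One further remark: your caveat that the ``if and only if'' in part \ref{rotation2} must be read under the maintained disagreement $s_i\neq\eta(s_i,p_i,\pi)$ is a genuine and correct observation---at the fixed point of $\eta(\cdot,p_i,\pi)$ both distances are zero, so equality holds regardless of $(\hat s'_j,p'_j)$---and the paper's statement and proof gloss over this knife-edge as well; it is harmless for the paper's purposes because the lemma is applied exactly where a nonzero message cost creates disagreement, but you were right to flag it rather than claim the stated equivalence unconditionally.
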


Both parts of \autoref{lem:rotation} are consequences of the IVP theorem in \citet*[Theorem 1]{KLS21-IVP}. Part \ref{rotation1} reflects $i$'s prediction that, on average, $j$'s message will move the DM's posterior belief away from the DM's interim belief $\eta(\cdot)$ towards $i$'s interim belief $s_i$. Part \ref{rotation2} reflects that there is a larger such shift when $j$'s message is (Blackwell-)more informative, which---as elaborated in the lemma's proof---is the case when $j$ uses a lower threshold (hence discloses more) or has a higher probability of being informed. For some intuition for these effects, consider two extremes: if $j$'s message is completely uninformative (i.e., $j$ never discloses his signal, or $\hat s_j=\bar s$), then the DM's posterior necessarily stays at her interim belief, hence $U(\cdot)=\eta(\cdot)$; whereas if $j$'s message were to reveal the state, then from $i$'s vantage the DM's posterior would equal $1$ with probability $s_i$ and $0$ with probability $1-s_i$, hence $U(\cdot)=s_i$.\footnote{Note that since $j$ does not observe the state, $j$'s message cannot actually reveal the state; this explains the strict inequalities in the consequents of part \ref{rotation1} of \autoref{lem:rotation}.}

\autoref{lem:rotation}'s effects are depicted graphically in \autoref{fig:two}. Part \ref{rotation1} corresponds to comparing the red (short dashed) curve depicting $U(\cdot)$ with the black (solid) curve depicting $\eta(\cdot)$; the former is a rotation of the latter around its fixed point toward the diagonal. Part \ref{rotation2} implies that this rotation is steeper when $p_j$ increases and $\hat s_j$ decreases; this is depicted as the shift from the red curve to the blue (long dashed) curve. 

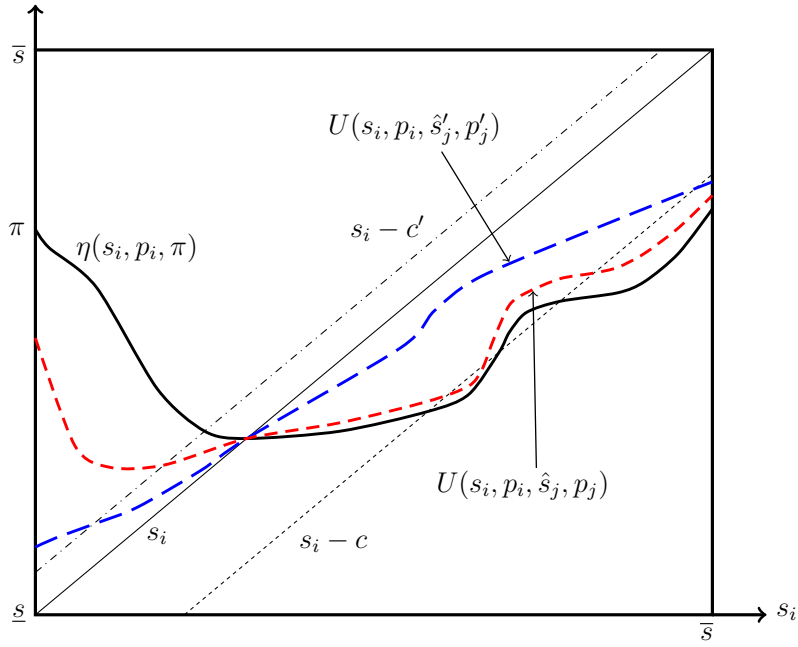
\begin{figure}

\centering
\begin{tikzpicture}[scale=0.9, transform shape, x=10cm,y=8.35cm]

\def\axiswidth{1.2pt}
\def\curvewidth{1.1pt}
\def\thinwidth{0.4pt}

\draw[line width=\axiswidth] (0,0) rectangle (1,1);

\draw[line width=\axiswidth,->] (0,1) -- (0,1.08);
\draw[line width=\axiswidth,->] (1,0) -- (1.08,0);

\node[above left] at (0,0.96) {$\bar s$};
\node[below left] at (0,0.035) {$\underline s$};
\node[below right] at (0.967,0.005) {$\bar s$};
\node[below] at (1.11,0.038) {$s_i$};

\node[left] at (0,0.6826) {$\pi$};


\draw[line width=\thinwidth] (0,0) -- (1,1);
\node at (0.18,0.14) {$s_i$};

\draw[dash pattern=on 1.5pt off 1.5pt,line width=\thinwidth] (0.22,0.0) -- (1,0.78);
\node at (0.44,0.13) {$s_i-c$};

\draw[dash dot,line width=\thinwidth] (0.0,0.075) -- (0.925,1.0);
\node at (0.52,0.684) {$s_i-c'$};

\draw[line width=\curvewidth]
plot[smooth] coordinates {
(0.00,0.6826)
(0.02,0.65)
(0.09,0.58)
(0.18,0.40)
(0.25,0.325)
(0.312,0.312)
(0.45,0.325)
(0.58,0.36)
(0.64,0.39)
(0.685,0.465)
(0.70,0.50)
(0.725,0.535)
(0.775,0.555)
(0.86,0.57)
(0.90,0.59)
(0.95,0.64)
(1,0.7186)
};
\node at (0.15,0.65) {$\eta(s_i,p_i,\pi)$};

\draw[dash pattern=on 9pt off 4pt, line width=\curvewidth, color=blue]
plot[smooth
] coordinates {
(0,0.1198)
(0.03,0.136)
(0.085,0.16)
(0.15,0.19)
(0.245,0.256)
(0.312,0.312)
(0.54,0.4731)
(0.59,0.5389)
(0.66,0.60)
(0.86,0.70)
(1,0.7665)
};

\node at (0.56,0.86) {$U(s_i,p_i,\hat{s}'_j,p'_j)$};

\draw[->,line width=0.5pt]
(0.60,0.82) -- (0.70,0.625);

\draw[dash pattern=on 5pt off 3pt, line width=\curvewidth, color=red]
plot[smooth] coordinates {
(0,0.49)
(0.058,0.3)
(0.11,0.2611)
(0.19,0.265)
(0.312,0.312)
(0.45,0.34)
(0.58,0.378)
(0.64,0.405)
(0.66,0.435)
(0.677,0.488)
(0.70,0.548)
(0.725,0.57)
(0.775,0.595)
(0.825,0.605)
(0.86,0.615)
(0.90,0.64)
(0.95,0.685)
(1,0.7425)
};

\node at (0.72,0.23) {$U(s_i,p_i,\hat s_j,p_j)$};

\draw[->,line width=0.5pt]
(0.74,0.26) -- (0.735,0.573);
\end{tikzpicture}

\caption{Sender $i$'s ``best response'' to sender $j$ (both upward biased), illustrated with $c>0>c'$, $p'_j>p_j$, and $\hat s'_j<\hat s_j<\overline s$.}
\label{fig:two}
\end{figure}

It is evident from \autoref{fig:two} that $j$'s information disclosure has very different consequences for $i$'s best response depending on the cost structure. When $c>0$ (disclosure cost), the smallest and largest solutions to \eqref{e:BR} are respectively larger than the smallest and largest single-sender thresholds.\footnote{The ordering can reverse for some intermediate solutions, as seen in the figure. As noted later, our key equilibrium conclusions hold not just for extremal solutions, but for any solution that is dynamically stable.} When $c<0$ (concealment cost, notated as $c'$ in the figure), the largest solution is smaller than the unique single-sender threshold.\footnote{Although not seen in the figure, there can be multiple solutions to \eqref{e:BR} even when $c<0$.} If $c=0$, the unique solution equals the single-sender threshold. These contrasting effects reflect the different nature of disagreement induced by the sender upon nondisclosure. With a disclosure cost, the threshold type has a \emph{higher} belief than the DM upon nondisclosure, so an expected shift of the DM's posterior toward the threshold belief makes concealment more attractive. By contrast, with a concealment cost, the threshold type has a \emph{lower} belief than the DM, so the same shift makes concealment less attractive. The logic applies not only to comparisons with the single-sender benchmark but also to any change in $p_j$ or $\hat s_j$ that makes $j$'s message more informative, as also seen in \autoref{fig:two}.

Since there can be multiple solutions to \eqref{e:BR}, in general ${\hat s_i}^{BR}(\cdot)$ is a best-response correspondence. We say that $i$'s best response \emph{increases} if the largest and smallest elements of ${\hat s_i}^{BR}(\cdot)$ both increase (weakly). We write $\hat{s_i}^{BR}(\cdot)> \hat s$ if the smallest element strictly exceeds $\hat s$.

\begin{proposition}
\label{lem:subcomp}
Assume both senders are upward biased. Any sender $i$'s best response ${\hat s_i}^{BR}(\hat s_j,p_i,p_j)$ is decreasing in $p_i$. Furthermore, let $\hat s^0_i$ denote the unique (resp., smallest) equilibrium threshold in the single-sender game with $i$ when $c\leq 0$ (resp., $c>0$). We have:
\begin{enumerate}
\item \label{subcomp1} \textsc{(independence)} \ If $c=0$, then ${\hat s_i}^{BR}(\hat s_j,p_i,p_j)=\hat s_i^0$ is independent of $\hat s_j$ and $p_j$.
\item \label{subcomp2} \textsc{(strategic complements)} \ If $c < 0$, then (i) ${\hat s_i}^{BR}(\hat s_j,p_i,p_j) \leq \hat s_i^0$, with equality if and only if $\hat s_i^0=\underline s$ or $\hat s_j=\overline s$, and (ii) ${\hat s_i}^{BR}(\hat s_j,p_i,p_j)$ increases in $\hat s_j$ and decreases in $p_j$.
\item \label{subcomp3} \textsc{(strategic substitutes)} \ If $c > 0$, then (i) ${\hat s_i}^{BR}(\hat s_j,p_i,p_j) \geq \hat s_i^0$, with equality if and only if $\hat s_i^0=\overline s$ or $\hat s_j=\overline s$, and (ii) ${\hat s_i}^{BR}(\hat s_j,p_i,p_j)$ decreases in $\hat s_j$ and increases in $p_j$.
\end{enumerate}
\end{proposition}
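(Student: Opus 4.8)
The plan is to read the best-response condition \eqref{e:BR} as \emph{exactly} the single-sender condition of \autoref{prop:single} with the nondisclosure belief $\eta(\hat s_i,p_i,\pi)$ replaced by the concealment payoff $U(\hat s_i,p_i,\hat s_j,p_j)$, and then to obtain all three parts from how $U$ compares to $\eta$ (\autoref{lem:rotation}, part \ref{rotation1}) and how $U$ moves with $j$'s informativeness (\autoref{lem:rotation}, part \ref{rotation2}). Throughout, let $s^\ast$ denote the unique fixed point of $\eta(\cdot,p_i,\pi)$, which by \autoref{lem:NDbelief} (part \ref{NDbelief1}) equals $\argmin_{\hat s}\eta(\hat s,p_i,\pi)$ and satisfies $\eta(s_i,p_i,\pi)\lessgtr s_i$ according as $s_i\gtrless s^\ast$. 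The crucial anchoring fact is that \autoref{lem:rotation} (part \ref{rotation1}) makes $s^\ast$ a \emph{common} pivot: $U(s^\ast,p_i,\hat s_j,p_j)=s^\ast$ for every $(\hat s_j,p_j)$. First I would locate the solutions. Writing $\Phi(s_i):=U(s_i,p_i,\hat s_j,p_j)-(s_i-c)$, any interior best response is a zero of $\Phi$; since $U(\hat s_i)=\hat s_i-c$ forces $U(\hat s_i)\gtrless\hat s_i$ exactly when $c\lessgtr 0$, \autoref{lem:rotation} (part \ref{rotation1}) pins every interior solution strictly to the right of $s^\ast$ when $c>0$ and strictly to the left when $c<0$; when $c=0$ the only solution is $s^\ast$ itself.

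This immediately yields independence (part \ref{subcomp1}): for $c=0$, $U(\hat s_i)=\hat s_i$ has the unique solution $s^\ast$ by \autoref{lem:rotation} (part \ref{rotation1}), the boundaries are excluded by the same sign facts, and $s^\ast=\hat s_i^0$ does not depend on $(\hat s_j,p_j)$. For the direction relative to the single-sender game (parts \ref{subcomp2}(i) and \ref{subcomp3}(i)), I would evaluate $\Phi$ at $\hat s_i^0$. Since $\eta(\hat s_i^0,p_i,\pi)=\hat s_i^0-c$ and the disagreement discussed after \autoref{prop:single} puts $\hat s_i^0$ to the right of $s^\ast$ when $c>0$ and to its left when $c<0$, \autoref{lem:rotation} (part \ref{rotation1}) gives $U(\hat s_i^0)\ge \hat s_i^0-c$ (so $\Phi(\hat s_i^0)\ge0$) for $c>0$ and $U(\hat s_i^0)\le \hat s_i^0-c$ (so $\Phi(\hat s_i^0)\le0$) for $c<0$. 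A nonnegative $\Phi$ at $\hat s_i^0$ means that, facing $j$, the threshold type now weakly prefers concealing, so the best-response threshold rises to $\hat s_i^{BR}\ge\hat s_i^0$; symmetrically $\hat s_i^{BR}\le\hat s_i^0$ for $c<0$. The equality clauses fall out of the ``strict iff $\hat s_j<\bar s$'' content of \autoref{lem:rotation} (part \ref{rotation1})---equality exactly when $j$ is uninformative ($\hat s_j=\bar s$, where $U\equiv\eta$)---together with the cases in which $\hat s_i^0$ already sits at the boundary toward which the best response would move ($\underline s$ for $c<0$, $\bar s$ for $c>0$).

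For the comparative statics in $j$'s strategy (parts \ref{subcomp2}(ii) and \ref{subcomp3}(ii)), I would invoke \autoref{lem:rotation} (part \ref{rotation2}): lowering $\hat s_j$ or raising $p_j$ shrinks $\abs{U-s_i}$, i.e.\ rotates $U$ toward the diagonal about the fixed pivot $s^\ast$. On the relevant half-line this is a \emph{pointwise monotone shift} of $U$, hence of $\Phi$---upward on $(s^\ast,\bar s)$ when $c>0$, downward on $(\underline s,s^\ast)$ when $c<0$---while $\Phi(s^\ast)=c$ stays fixed. Because all solutions lie strictly on one side of $s^\ast$, where $\Phi$ starts at the definite sign $\sign(c)$, an intermediate-value argument moves the smallest and largest zeros monotonically in the stated directions (both weakly increase as $\hat s_j$ rises or $p_j$ falls when $c<0$; both weakly increase as $\hat s_j$ falls or $p_j$ rises when $c>0$), with boundary solutions handled by the corresponding inequalities in \eqref{e:BR}. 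Monotonicity in $p_i$ is analogous: in \eqref{e:U}, $U$ depends on $p_i$ only through the DM's interim belief $\eta(s_i,p_i,\pi)$, which sits in the middle slot of the increasing transformation $T$ of \eqref{e:transform}; raising $p_i$ lowers $\eta$ by \autoref{lem:NDbelief} (part \ref{NDbelief2}), hence lowers $U$ pointwise, pushing the best-response thresholds down.

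The main obstacle is the last step: converting a pointwise monotone shift of $U$ into monotone movement of the \emph{extremal} best responses. Because \autoref{lem:rotation} does not assert that $U(\cdot,p_i,\hat s_j,p_j)$ is quasiconvex in $s_i$ (unlike $\eta$ in \autoref{lem:NDbelief}), $\Phi$ may have several interior zeros, so the smallest and largest must be treated separately and one must rule out that tangencies spoil the comparison. What makes the extremal comparison robust is precisely the pinned pivot together with one-sidedness: $\Phi(s^\ast)=c$ is unchanged across $(\hat s_j,p_j)$ and has sign $\sign(c)$, and every solution lies strictly beyond $s^\ast$, so on the interval from $s^\ast$ to the smallest zero $\Phi$ keeps sign $\sign(c)$ and the shifted $\Phi$ inherits the comparison there, while the largest zero is pinned down jointly with the boundary condition $U(\bar s)\ge\bar s-c$ in \eqref{e:BR}; in every configuration an upward (resp.\ downward) shift weakly raises (resp.\ lowers) both extremal elements regardless of tangencies. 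For the $p_i$ statement the pivot itself moves, but the argmin characterization makes it move in the same direction as the thresholds (lower $s^\ast$ for higher $p_i$), so the intermediate-value step still applies. The remaining work is routine bookkeeping of the boundary cases of \eqref{e:BR} and the equality clauses.
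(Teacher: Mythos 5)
Your overall strategy is the paper's own: anchor everything at the fixed point $s^\ast$ of $\eta(\cdot,p_i,\pi)$, use part \ref{rotation1} of \autoref{lem:rotation} to locate $U$ relative to $\eta$ and the diagonal, part \ref{rotation2} for the comparative statics in $(\hat s_j,p_j)$, monotonicity of $T$ in $\mu_{DM}$ together with part \ref{NDbelief2} of \autoref{lem:NDbelief} for the $p_i$ statement, and intermediate-value/extremal-zero bookkeeping to move the smallest and largest best responses. Part \ref{subcomp1}, the $p_i$ monotonicity, and parts \ref{subcomp2}(ii)/\ref{subcomp3}(ii) are handled correctly this way; your ``pinned pivot plus one-sidedness'' resolution of the multiple-zeros worry is exactly the content of the paper's appeal to the \autoref{prop:singleCS}-style contradiction argument. (Your worry that the pivot moves with $p_i$ is a red herring: the $p_i$-shift of $U$ is pointwise downward \emph{everywhere}, so the extremal-zero argument needs no pivot at all.)

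There is, however, one genuine non sequitur: the derivation of parts \ref{subcomp2}(i)/\ref{subcomp3}(i). You evaluate $\Phi(s):=U(s,p_i,\hat s_j,p_j)-(s-c)$ only at the single point $\hat s_i^0$ and conclude from $\Phi(\hat s_i^0)\ge 0$ (case $c>0$) that every best response is $\ge \hat s_i^0$. On your own terms this does not follow: since you emphasize that $\Phi$ may have several zeros, a sign at one point does not preclude a zero of $\Phi$ in $(s^\ast,\hat s_i^0)$, and your pivot argument only places zeros to the right of $s^\ast$, not to the right of $\hat s_i^0$. What closes the gap---and what the paper's proof does---is a claim quantified over the whole region: for $c>0$ and \emph{every} $s<\hat s_i^0$, the definition of $\hat s_i^0$ as the smallest single-sender equilibrium threshold plus an intermediate-value argument gives $\eta(s,p_i,\pi)>s-c$; combined with \autoref{lem:rotation}, which yields $U\ge\min\{s,\eta(s,p_i,\pi)\}$, one gets $U\ge\eta>s-c$ to the right of the pivot and $U\ge s>s-c$ to its left, so $\Phi>0$ at every $s<\hat s_i^0$ and no best response lies below $\hat s_i^0$. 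Symmetrically for $c<0$: for every $s>\hat s_i^0$, uniqueness of the single-sender equilibrium gives $\eta(s,p_i,\pi)<s-c$, and then $U\le\max\{s,\eta(s,p_i,\pi)\}<s-c$, so no best response lies above $\hat s_i^0$. (Alternatively, you could have obtained (i) for free from your own part (ii) by viewing the single-sender game as the case $\hat s_j=\overline s$, where $U\equiv\eta$ and the best-response set is the single-sender equilibrium set.) With that repair, the remaining equality clauses and boundary cases are indeed routine, as you say.
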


Since each sender's best response is monotone, existence of an equilibrium follows from Tarski's fixed point theorem. When $c < 0$ (concealment cost), strategic complementarity implies that there is a largest equilibrium, corresponding to the highest thresholds for both senders. Each sender's message in the largest equilibrium is a garbling of his message in any other equilibrium. It follows that the largest equilibrium is the least informative and the \emph{worst} for DM welfare.  Similarly, the smallest equilibrium---that with the lowest equilibrium thresholds for both senders---is the most informative and the \emph{best} in terms of the DM's welfare.

On the other hand, when $c > 0$ (disclosure cost), the two disclosure thresholds are strategic substitutes.  There is an \emph{$i$-maximal} equilibrium that maximizes sender $i$'s threshold and also minimizes sender $j$'s threshold across all equilibria.  Likewise, there is a \emph{$j$-maximal} equilibrium that minimizes sender $i$'s threshold and also maximizes sender $j$'s threshold across all equilibria.  These two equilibria are not ranked in terms of informativeness and in general cannot be welfare ranked for the DM; moreover, neither may correspond to either the best or the worst equilibrium for the DM.\footnote{As explained after \autoref{prop:single}, the senders' welfare ranking across equilibria just depends on the probability of disclosure.  When $c<0$, both senders' welfare is lowest in the largest equilibrium and highest in the smallest equilibrium. When $c>0$, sender $i$'s welfare is highest in the $i$-maximal equilibrium and lowest in the $j$-maximal equilibrium.}

The following result is derived using standard monotone comparative statics. Although stated for extremal equilibria, the comparative statics also hold for any equilibrium that is stable in the sense of adaptive dynamics \citep{Echenique2002}.

\begin{proposition}
\label{prop:welfare}
Assume both senders are upward biased. For any $i\in\{1,2\}$:
\begin{enumerate}
\item \label{welfare1} If $c\leq 0$, an increase in $p_i$ or a decrease in $c$ (higher concealment cost) weakly lowers both senders' thresholds in both the worst and best equilibria.
\item \label{welfare2} If $c>0$, an increase in $p_i$ weakly lowers sender $i$'s threshold and weakly raises sender $j$'s in both the $i$-maximal and $j$-maximal equilibria. A decrease in $c$ (lower disclosure cost) has ambiguous effects on equilibrium thresholds.
\end{enumerate}
\end{proposition}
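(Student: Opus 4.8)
The plan is to obtain \autoref{prop:welfare} as a monotone-comparative-statics corollary of \autoref{lem:subcomp}, by studying how the joint best-response map shifts as the parameters move. Write equilibria as fixed points of $\Phi(\hat s_1,\hat s_2):=\big({\hat s_1}^{BR}(\hat s_2,p_1,p_2),\,{\hat s_2}^{BR}(\hat s_1,p_2,p_1)\big)$ on the complete lattice $[\underline s,\overline s]^2$, and work throughout with the largest and smallest selections of the (correspondence-valued) best responses, so that ``shift down'' refers to these extremal selections. There are two distinct sources of best-response shifts to handle: changes in $p_i$, whose direction is handed to us directly by \autoref{lem:subcomp}, and changes in $c$, which enter only through the indifference condition in~\eqref{e:BR} and must be signed separately.

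For Part~\ref{welfare1} ($c\le 0$), \autoref{lem:subcomp} makes each ${\hat s_i}^{BR}$ nondecreasing in the rival's threshold (independence at $c=0$, complementarity at $c<0$), so $\Phi$ is monotone increasing and, by Tarski's theorem, its extremal fixed points coincide with the worst (largest-threshold) and best (smallest-threshold) equilibria identified after \autoref{lem:subcomp}. I then sign the shift of $\Phi$ under each parameter change and invoke Topkis's theorem. For an increase in $p_i$: sender $i$'s own best response falls because ${\hat s_i}^{BR}$ is decreasing in $p_i$, and sender $j$'s falls because it is decreasing in the rival's probability, which is now $p_i$; both coordinates of $\Phi$ weakly decrease, so both extremal equilibria fall. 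For a decrease in $c$ (a larger concealment cost): since for $c<0$ every crossing of $U(\cdot,p_i,\hat s_j,p_j)$ with the line $s_i-c$ lies in the region $s_i<\eta(s_i,p_i,\pi)$ where, by \autoref{lem:rotation}, $U$ exceeds the diagonal, raising the line (lowering $c$) moves the \emph{extremal} crossings toward lower thresholds---exactly as in the single-sender argument behind \autoref{prop:singleCS}---with boundary selections handled by the corner cases of~\eqref{e:BR}. Each coordinate of $\Phi$ again weakly decreases, and Topkis delivers the claim.

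For Part~\ref{welfare2} ($c>0$), \autoref{lem:subcomp} yields strategic substitutes, so I reverse the order on sender $2$'s action via $t_2:=\underline s+\overline s-\hat s_2$; in coordinates $(\hat s_1,t_2)$ both best responses become increasing, the game is supermodular, and its largest and smallest equilibria are precisely the $1$-maximal equilibrium (highest $\hat s_1$, lowest $\hat s_2$) and the $2$-maximal equilibrium (lowest $\hat s_1$, highest $\hat s_2$). For an increase in $p_1$: sender $1$'s threshold falls (own-$p$ effect), while sender $2$'s best-response threshold rises because it increases in the rival probability $p_1$, i.e.\ $t_2$ falls; both transformed best responses shift down, so by Topkis both extremal equilibria move down, meaning $\hat s_1$ decreases and $\hat s_2$ increases in both the $1$-maximal and $2$-maximal equilibria (symmetrically for $p_2$). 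For a decrease in $c$, however, the single-sender crossing argument shows each sender's own threshold wants to fall; in the transformed coordinates this lowers $\hat s_1$ but \emph{raises} $t_2$, so the two best-response shifts oppose one another, Topkis no longer signs the extremal equilibria, and the effect is genuinely ambiguous. I would confirm ambiguity with a pair of examples---feasible given the freedom to design $\eta(\cdot)$ recorded in footnote~\ref{fn:fig-ND}---in which a lower $c$ raises an equilibrium threshold in one and lowers it in the other.

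The step I expect to be the main obstacle is signing the $c$-shift of the best responses, precisely because $c$ does not enter the concealment payoff $U$ but only the equilibrium indifference condition, and $U$---unlike $\eta$---need not be quasiconvex, so that~\eqref{e:BR} can admit several interior solutions even when $c<0$. The resolution is to track only the extremal best-response selections: at the largest such crossing $U$ meets the line $s_i-c$ from above (slope at most one, the local stability condition highlighted in footnote~\ref{fn:fig-ND}), at the smallest crossing the symmetric condition or a boundary corner binds, and in either case a uniform vertical shift of the line moves these extremal crossings monotonically. This is also what underlies the remark that the comparative statics extend to any equilibrium stable under adaptive dynamics \citep{Echenique2002}: at a stable equilibrium the relevant slope condition holds, which is exactly what signs $d\hat s_i/dc$ through implicit differentiation of~\eqref{e:BR}.
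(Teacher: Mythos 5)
Your proposal is correct and follows essentially the same route as the paper: monotone comparative statics (Tarski/Topkis) applied to the extremal selections of the best-response correspondence, with the order on one sender's threshold reversed to make the $c>0$ substitutes case supermodular, and ambiguity in $c$ established by noting the opposing shifts of the two transformed best responses plus an example (the paper's \autoref{eg:welfare-higher-with-higher-cost}). The only cosmetic difference is in signing the $c$-shift: the paper defines $w_i^{p_1,p_2,c}(\hat s_j)=\min\{\hat s_i \mid U(\hat s_i,p_i,\hat s_j,p_j)\le \hat s_i-c\}$ and $\bar w_i$ as a sup of the complementary superlevel set, so monotonicity in $c$ (and the corner cases) follows from pure set inclusion, making the slope/stability conditions you invoke in your final paragraph unnecessary---your own appeal to the \autoref{prop:singleCS}-style argument already suffices.
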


The intuition for \autoref{prop:welfare} is as follows. \autoref{prop:singleCS} indicates that both an increase in $p_i$ and a decrease in $c$ lower sender $i$'s best response, holding fixed the other sender's strategy. The equilibrium effects then depend on the cost structure. Under concealment cost, the strategic complementarity established in \autoref{lem:subcomp} implies that a sender's increased disclosure promotes the other's. Thus, for changes in $p_i$, the direct effect on sender $i$ and the indirect effect on $j$ reinforce each other. For changes in $c$, the direct effects on both senders and the indirect effects all go in the same direction. Under disclosure cost, the strategic substitution creates a tension: a sender who discloses more crowds out the other's disclosure. For changes in $p_i$, which directly affect only sender $i$'s best response, the direct effect dominates and the equilibrium comparative statics are unambiguous. But since the message cost is common to both senders, a change in $c$ has a direct effect of shifting both best responses in the same direction, while inducing offsetting indirect effects, and the net effect on equilibrium thresholds is ambiguous. In particular, a reduction in the disclosure cost can (strictly) reduce DM welfare, as illustrated in \appendixref{app:welfare-example}.\footnote{If we had instead allowed for a sender-specific cost $c_i$, then a decrease in the disclosure cost $c_i>0$ would have the same unambiguous overall effects as an increase in $p_i$. As symmetric changes in the message cost for both senders are quite natural (e.g., when message costs are viewed as an institutional or policy parameter), we think it is interesting that a common decrease in disclosure cost has an ambiguous effect.}

One can view the single-sender game with sender $i$ as a two-sender game where the other sender is never informed (i.e., $p_j=0$). A comparison of the single-sender and two-sender games is then a corollary of \autoref{lem:subcomp} and \autoref{prop:welfare}, as follows.

\begin{corollary}
\label{cor:main}
Assume both senders are upward biased. Let $\hat s^0_i$ denote the unique (resp., smallest) equilibrium threshold in the single-sender game with $i$ when $c\leq 0$ (resp., $c>0$).
\begin{enumerate}
\item \label{main1}
If $c=0$, then $(\hat s^0_1,\hat s^0_2)$ is the unique equilibrium in the two-sender game. The DM's welfare is strictly higher in the two-sender game than in either single-sender game.
\item \label{main2}
If $c < 0$, then every equilibrium in the two-sender game is weakly smaller than $(\hat s^0_1,\hat s^0_2)$, with equality if and only if $\hat s^0_1=\hat s^0_2=\underline s$. The DM's welfare is strictly higher in any equilibrium of the two-sender game than in either single-sender game.
\item \label{main3}
If $c > 0$, then every equilibrium in the two-sender game is weakly larger than $(\hat s^0_1,\hat s^0_2)$, with equality if and only if $\hat s^0_1=\hat s^0_2=\overline s$. The DM's welfare in the best equilibrium of the two-sender game may be higher or lower than in the best equilibrium of either single-sender game.
\end{enumerate}
\end{corollary}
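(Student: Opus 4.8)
The plan is to derive all three parts from the best-response characterization in \autoref{lem:subcomp}, exploiting the observation (noted just before the statement) that the single-sender game with $i$ is the two-sender game with $p_j=0$, and to handle the welfare claims by a Blackwell-informativeness argument. First I would take an arbitrary two-sender equilibrium $(s^*_1,s^*_2)$ and apply \autoref{lem:subcomp} coordinate-by-coordinate: since $s^*_i\in{\hat s_i}^{BR}(s^*_j,p_i,p_j)$, the bounds in parts \ref{subcomp1}--\ref{subcomp3} give directly $s^*_i=\hat s^0_i$ when $c=0$, $s^*_i\le\hat s^0_i$ when $c<0$, and $s^*_i\ge\hat s^0_i$ when $c>0$, which is the componentwise comparison in each part. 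For $c=0$, independence of ${\hat s_i}^{BR}$ from $(\hat s_j,p_j)$ pins the unique equilibrium at $(\hat s^0_1,\hat s^0_2)$, giving part \ref{main1}. (Existence and extremal equilibria follow from Tarski, as noted after \autoref{lem:subcomp}.)

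For the equality clauses I would invoke the ``if and only if'' statements in \autoref{lem:subcomp}, parts \ref{subcomp2}(i) and \ref{subcomp3}(i): a two-sender equilibrium coincides with $(\hat s^0_1,\hat s^0_2)$ exactly when both best-response bounds bind, i.e.\ (for $c<0$) when for each $i$ either $\hat s^0_i=\underline s$ or $s^*_j=\overline s$. The key auxiliary fact here is that under a concealment cost the single-sender threshold never equals $\overline s$: because being informed is independent of the state, the average informed signal equals the prior, so $\pi\in[\underline s,\overline s]$; hence $\pi\ge\overline s-c=\overline s+\abs{c}$ fails, so the boundary case (iii) of \autoref{prop:single} is unavailable. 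Thus $s^*_j\le\hat s^0_j<\overline s$, the escape clause ``$s^*_j=\overline s$'' is vacuous, and equality reduces to $\hat s^0_1=\hat s^0_2=\underline s$, as claimed in part \ref{main2}.

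For welfare in parts \ref{main1}--\ref{main2}, I would show that the two-sender information structure strictly Blackwell-dominates either single-sender structure. Sender $i$ discloses weakly more in the two-sender game (since $s^*_i\le\hat s^0_i$), so $m_i$ is a weak refinement, and the conditionally independent message $m_j$ is strictly informative because $s^*_j<\overline s$ (interior when $c=0$; below $\overline s$ when $c<0$ by the fact above). Since the DM's value function $V(\beta):=\max_a\{\beta\,u_{DM}(a,1)+(1-\beta)\,u_{DM}(a,0)\}$ is convex, a strictly finer structure yields weakly higher ex-ante welfare, strictly higher whenever $V$ is not affine on the relevant range of posteriors (the maintained non-degenerate decision problem). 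The welfare claim in part \ref{main3} is instead by example: under a disclosure cost the thresholds rise, so the two structures need not be Blackwell-ranked; I would cite the construction in \appendixref{app:welfare-example} for a strict DM loss, noting a gain is also possible (e.g.\ interior single-sender thresholds with $p_j$ small, so thresholds barely rise while $m_j$ adds information).

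The step I expect to be the main obstacle is the equality characterization in part \ref{main3}. Running the same argument for $c>0$ shows that a two-sender equilibrium equals $(\hat s^0_1,\hat s^0_2)$ precisely when, for each $i$, $\hat s^0_i=\overline s$ or $s^*_j=\overline s$. Unlike the concealment-cost case, the boundary $\overline s$ is now reachable: if $\pi\ge\overline s-c$ and $\eta(\cdot,p_j,\pi)$ does not cross $s-c$ in the interior then $\hat s^0_j=\overline s$, and this can hold for one sender but not the other (e.g.\ $p_j$ small, $p_i$ large, with a suitably bimodal $F_\pi$). In that case $j$ reveals nothing, $i$ best-responds with its single-sender threshold, and $(\hat s^0_1,\hat s^0_2)$ is an equilibrium with only one coordinate at $\overline s$. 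My analysis therefore suggests the correct condition is ``$\hat s^0_1=\overline s$ or $\hat s^0_2=\overline s$'' rather than ``both,'' and I would treat reconciling this with the stated ``$\hat s^0_1=\hat s^0_2=\overline s$''---or verifying an implicit restriction ruling out the asymmetric configuration---as the crux of the proof.
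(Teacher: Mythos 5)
Your derivation of the threshold comparisons is correct and is in fact more direct than the paper's. The paper routes parts \ref{main2} and \ref{main3} through \autoref{prop:welfare}: it views the single-sender game as the two-sender game with $p_j=0$ and applies the monotone comparative statics in $p_j$ to the extremal (worst, $i$-maximal, $j$-maximal) equilibria, then extends the bound to all equilibria. You instead apply \autoref{lem:subcomp} coordinate-by-coordinate to an arbitrary equilibrium $(s^*_1,s^*_2)$, using $s^*_i\in{\hat s_i}^{BR}(s^*_j,p_i,p_j)$ together with the bounds in its parts \ref{subcomp1}--\ref{subcomp3}; this gives the componentwise comparison for every equilibrium in one step, with Tarski needed only for existence. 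The welfare arguments coincide in substance: Blackwell refinement plus the direct effect of the extra message in parts \ref{main1}--\ref{main2} (the paper is terser on strictness, as are you), and the appendix example for part \ref{main3}.

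On the step you call the crux: your analysis is sound, and the paper's own proof is silent on the equality clauses, so you have gone beyond it rather than missed an argument it supplies. Your asymmetric configuration is consistent with the model. The boundary condition for $\hat s^0_i=\overline s$ in \autoref{prop:single} is $\pi\ge\overline s-c$, which is independent of $p_i$; one can choose a signal distribution with $\hat s-\E_\pi[s\mid s<\hat s]>\overline s-\pi$ at some interior $\hat s$, then take $c$ between these values, $p_1$ small (so $\eta(\cdot,p_1,\pi)$ stays above $s-c$ on the interior, giving $\hat s^0_1=\overline s$) and $p_2$ large (so $\eta(\cdot,p_2,\pi)$ dips below $s-c$, giving $\hat s^0_2<\overline s$). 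Then $(\overline s,\hat s^0_2)$ is a two-sender equilibrium: sender 2 faces an uninformative opponent, so his single-sender condition applies, while sender 1's concealment payoff satisfies $U(\overline s,p_1,\hat s^0_2,p_2)\ge\eta(\overline s,p_1,\pi)=\pi\ge\overline s-c$ by \autoref{lem:rotation}. So under the reading ``some equilibrium attains equality if and only if the stated condition,'' the correct condition is your disjunction, not the paper's conjunction. The stated clause survives only under the stronger reading ``every equilibrium equals $(\hat s^0_1,\hat s^0_2)$'': in your configuration, since $\pi\ge\overline s-c$ is $p$-independent, $\overline s$ is also an equilibrium threshold of sender 2's single-sender game, so $(\overline s,\overline s)$ is a second, distinct two-sender equilibrium, and the ``every'' reading is not contradicted. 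In short, the discrepancy you found is a genuine imprecision in the corollary's phrasing rather than a gap in your proof; resolving it exactly as you propose (or adopting the ``every equilibrium'' reading) completes the argument.
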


Part \ref{main1} of \autoref{cor:main} follows from part \ref{subcomp1} of \autoref{lem:subcomp}. When $c=0$, the best response of each sender is to use the same disclosure threshold as in the single-sender setting, regardless of the other sender's strategy. Since the DM receives two messages instead of just one, and the probability distribution of these messages remain the same as in the single-sender game, she is better off when facing both senders than when facing either sender alone.

Part \ref{main2} of \autoref{cor:main} can be obtained by considering the worst equilibrium of the two-sender game. When there is a concealment cost ($c<0$), let ${\mathbf{s}}^*(p_i,p_j)$ represent the vector of disclosure thresholds in the worst equilibrium. \autoref{prop:welfare} implies that $s^*_i(p_i,p_j) \le s^*_i(p_i,0)=\hat{s}^0_i$ and $s^*_j(p_i,p_j) \le s^*_j(0,p_j)=\hat{s}^0_j$. Thus $\mathbf{s}^*(p_i,p_j)$ is weakly smaller than $(\hat{s}^0_i,\hat{s}^0_j)$, and hence every equilibrium is weakly smaller than $(\hat{s}^0_i,\hat{s}^0_j)$. It follows that the DM's welfare is higher than in the unique equilibrium of the single-sender game with either sender. This higher welfare is due to both a direct effect of receiving information from an additional sender, and an indirect effect wherein each sender is now disclosing more than in the single-sender setting.

Finally, when there is a disclosure cost ($c>0$), let ${\mathbf{s}}^{i*}(p_i,p_j)$ represent the $i$-maximal equilibrium and $\mathbf{s}^{j*}(p_i,p_j)$ represent the $j$-maximal equilibrium. In any equilibrium, $i$'s threshold is at least as large as $s^{j*}_i(p_i,p_j) \ge s^{j*}_i(p_i,0) =\hat{s}^0_i$, where the inequality is by part \ref{welfare2} of \autoref{prop:welfare}. Analogously, sender $j$'s threshold in any equilibrium is at least as large as $s^{i*}_j(p_i,p_j) \ge s^{i*}_j(0,p_j)=\hat{s}^0_j$. Thus, both senders are (weakly) less informative than in the DM's best equilibrium of the single-sender game. The overall welfare comparison between the two-sender game and the single-sender game is generally ambiguous. While adding a second sender has a direct effect of increasing the DM's information, there is an adverse indirect effect due to the strategic substitution in disclosure of the other sender. It is possible the net effect can (strictly) reduce the DM's welfare---even when the two senders have opposing biases, which is often thought to particularly promote information disclosure. We provide an explicit example with a familiar quadratic loss function for the DM in \appendixref{app:welfare-example}.

\paragraph{Discussion.} The assumption of conditionally independent signals is important for our analytical methodology. Without it, we cannot directly apply the IVP result from \citet*{KLS21-IVP}. If signals were conditionally correlated, then upon nondisclosure, sender $i$ and the DM would disagree not only on the probability assessment of the states but also on the experiment corresponding to sender $j$'s message. A general analysis appears intractable. We illustrate in \appendixref{app:correlated} how our conclusions change under a very different information structure: perfectly correlated signals.\footnote{\citet{BM13} and \citet{EF19} both study multi-sender disclosure with perfectly correlated signals. \citet{BM13} assume no message costs ($c=0$) and find that adding senders can reduce DM welfare, but only if senders have non-monotonic preferences. \citet{EF19} assume a disclosure cost ($c>0$) and find that the DM can sometimes benefit from barring one sender.} 
Another important assumption to apply the IVP result is that each sender has linear preferences. \appendixref{app:nonlinear} discusses how our result under $c = 0$ extends to non-linear preferences and how our results under $c \ne 0$ may or may not hold under non-linear preferences.

\section{Extensions}
\label{sec:extensions}

\subsection{Many Senders}
\label{sec:manysenders}

Our results readily generalize to any finite number of senders. Suppose in addition to senders $i$ and $j$, there are $K$ other senders, all of whom receive conditionally independent signals and simultaneously send messages to the DM. Let $\mathbf{m}$ represent the collection of these $K$ messages. Then, sender $i$'s posterior belief given his own signal $s_i$, sender $j$'s message $m_j$, and the $K$ other senders' messages $\mathbf{m}$ is $\beta(m_j,\mathbf{m},s_i)=\beta(m_j,\beta(\mathbf{m},s_i))$, where we abuse notation and use $\beta(\cdot,\mu_i)$ to  denote $i$'s posterior from updating belief $\mu_i$ on any collection of messages, suppressing the parameters needed to interpret those messages. The foregoing equality reflects that updating on $m_j$ and $\mathbf{m}$ jointly is equivalent to updating first on $\mathbf{m}$ and then on $m_j$. The DM's belief given the $K$ senders' messages $\mathbf{m}$ and given nondisclosure by sender $i$ is $\eta(\hat s_i,p_i,\beta(\mathbf{m},\pi))$. Thus, by the law of iterated expectations, the expected payoff for sender $i$ from concealing his signal is
\begin{equation*}
\E \Big[ \E_{\hat s_j,p_j} \big[ T(\beta(m_j,\beta(\mathbf{m},s_i)),\eta(\hat s_i,p_i,\beta(\mathbf{m},\pi)),\beta(\mathbf{m},s_i)) \mid \mathbf{m}\big] \Big],
\end{equation*}
where the inside conditional expectation (given $\mathbf{m}$) is taken over the distribution of $m_j$, while the outside expectation is taken over the distribution of $\mathbf{m}$ generated from the equilibrium strategies of the $K$ senders.  

Given any message profile $\mathbf{m}$, the transformation $T(\cdot)$ is the same as that in the two-sender case, with the common prior $\pi$ replaced by $\beta(\mathbf{m},\pi)$. 
Moreover, because the order of updating on $\mathbf{m}$ and on $i$'s nondisclosure can be interchanged, the direction of disagreement between sender $i$'s belief and the DM's nondisclosure belief is the same for every realization of $\mathbf{m}$. Since our results hold for any $\pi$, the logic of strategic substitution or strategic complementarity continues to apply. In particular, when $c < 0$, sender $i$'s expectation of the DM's posterior conditional on $\mathbf{m}$ increases in $\hat s_j$ and decreases in $p_j$ for any $\mathbf{m}$. 
Consequently, $i$'s expected payoff from nondisclosure also increases in $\hat s_j$ and decreases in $p_j$. Thus disclosures by any two senders are strategic complements. Similarly, with a disclosure cost ($c>0$), disclosures are strategic substitutes.

It follows from these observations that when there is either no message cost or a concealment cost ($c \le 0$), the DM always benefits from having more senders to supply her with information. When there is a disclosure cost ($c > 0$), on the other hand, an increase in the number of senders has ambiguous effects on each sender's disclosure threshold, and can lead to either an increase or decrease in the DM's welfare.

\subsection{Sequential Reporting}
\label{sec:sequential}

The key insight from our analysis of simultaneous disclosure extends to sequential disclosure. For concreteness, consider a two-sender game in which both senders are upward biased but disclosure is sequential: sender 1 reports first and his message $m_1$ is made public to both the DM and sender 2 before sender 2 submits his report. Sender 2 now effectively faces a single-sender problem in which he and the DM share a common prior, say $\beta(m_1,\pi)$, which is an equilibrium object. \autoref{prop:singleCS} implies that sender 2 will adopt a disclosure threshold $\hat s_2^0$ that decreases in $p_2$.

Consider now the disclosure decision of sender 1 when the DM conjectures that he is using a disclosure threshold $\hat s_1$, with corresponding nondisclosure belief $\eta(\hat s_1,p_1,\pi)$. If sender 1 discloses his signal $s_1$, his expectation of the DM's posterior belief is simply $s_1$. If he chooses nondisclosure, his expectation is
$
\E_{\hat s^0_2,p_2}[T(\beta(m_2,s_1),\eta(\hat s_1,p_1,\pi),s_1)]$.
Since sender 2 discloses more when he is better informed, a higher $p_2$ makes the message $m_2$ more informative, both directly through a higher probability of sender 2 getting a signal and indirectly through a lower disclosure threshold $\hat s_2^0$. IVP implies that sender 1 expects the DM's belief to move away from $\eta(\hat s_1,p_1,\pi)$ toward $s_1$. The same logic that establishes \autoref{prop:welfare} therefore gives the following result, whose proof is omitted.

\begin{proposition}
\label{prop:sequential}
Consider sequential disclosure and assume sender 1, the first mover, is upward biased. If $c>0$ (resp., $c<0$), a higher $p_2$ weakly increases 
(resp., weakly reduces) 
the equilibrium disclosure threshold of sender 1 in the equilibria with the highest and lowest thresholds for sender 1.
\end{proposition}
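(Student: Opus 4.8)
The plan is to reduce sender 1's problem to a one-dimensional fixed-point condition and then apply the same monotone comparative statics as in \autoref{prop:welfare}, the only novelty being that sender 2's threshold is now an equilibrium object responding to sender 1's nondisclosure belief. First I would record that sender 1's \emph{disclosure} payoff is pinned down independently of $p_2$: if sender 1 discloses $s_1$, then the DM and sender 1 share the belief $s_1$, so by the law of iterated expectations sender 1's expectation of the DM's posterior equals $s_1$ (exactly as already noted in the main analysis), and the disclosure payoff is $s_1-c$. Hence the entire dependence on $p_2$ enters only through the \emph{concealment} payoff $\E_{\hat s^0_2,p_2}[T(\beta(m_2,s_1),\eta(\hat s_1,p_1,\pi),s_1)]$, in which sender 2---having observed $m_1=\phi$---shares the DM's prior $\eta(\hat s_1,p_1,\pi)$ and therefore plays his single-sender threshold $\hat s_2^0$ evaluated at that prior. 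Writing this concealment payoff as $U_1(\hat s_1,p_1,p_2)$ when $s_1=\hat s_1$, sender 1's interior equilibrium condition is $U_1(\hat s_1,p_1,p_2)=\hat s_1-c$, with the obvious boundary analogs of \eqref{e:BR}.

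The key step is the pointwise effect of $p_2$ on $U_1$ at a fixed conjectured threshold $\hat s_1$. Holding $\hat s_1$ fixed, the DM's interim belief $\eta(\hat s_1,p_1,\pi)$ does not depend on $p_2$, so sender 2's prior is unchanged; raising $p_2$ then makes $m_2$ (Blackwell-)more informative through two reinforcing channels---directly, since sender 2 is more likely to be informed, and indirectly, since by \autoref{prop:singleCS} his threshold $\hat s_2^0$ falls so he discloses more. Both channels move the pair $(\hat s_2^0,p_2)$ in the more-informative direction of the monotonicity part of \autoref{lem:rotation} (lower threshold, higher probability of being informed), which then implies that $U_1(\hat s_1,p_1,p_2)$ moves strictly closer to $\hat s_1$ as $p_2$ rises. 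Combined with the rotation part of \autoref{lem:rotation}, the direction is fixed by the sign of $c$: if $c>0$ then $\hat s_1>\eta(\hat s_1,p_1,\pi)$ and $U_1<\hat s_1$, so ``closer to $\hat s_1$'' means $U_1$ increases in $p_2$; if $c<0$ then $\hat s_1<\eta(\hat s_1,p_1,\pi)$ and $U_1>\hat s_1$, so $U_1$ decreases in $p_2$.

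Finally I would convert this into the claimed comparative statics on the extremal equilibrium thresholds. Since the disclosure payoff $\hat s_1-c$ is independent of $p_2$, raising $p_2$ shifts the indifference condition $U_1(\hat s_1,p_1,p_2)=\hat s_1-c$ pointwise in $\hat s_1$: the concealment side moves weakly up when $c>0$ and weakly down when $c<0$, and because the sign of disagreement is tied to the fixed sign of $c$, this shift has a consistent direction across $\hat s_1$ and across $p_2$. As concealment becoming relatively more attractive raises sender 1's disclosure threshold and conversely, the standard extremal-fixed-point argument---exactly as in the proof of \autoref{prop:welfare}, invoking \citet{Echenique2002} for the statement about stable equilibria---yields that the highest and lowest equilibrium thresholds of sender 1 weakly increase when $c>0$ and weakly decrease when $c<0$.

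The main obstacle is the equilibrium dependence of sender 2's threshold on sender 1's behavior: unlike the simultaneous game, where the opponent's threshold $\hat s_j$ is a free parameter, here $\hat s_2^0$ is determined by $\eta(\hat s_1,p_1,\pi)$ and so moves when $\hat s_1$ is perturbed. The reason this does not disrupt the argument is that the comparative static is taken in $p_2$ at \emph{fixed} $\hat s_1$, and $\eta(\hat s_1,p_1,\pi)$ is independent of $p_2$; thus sender 2's prior is held fixed and only his informativeness responds to $p_2$, cleanly isolating the effect needed for \autoref{lem:rotation}. A secondary detail to handle is sender 2's equilibrium selection when $c>0$ and his single-sender problem admits multiple equilibria: fixing any extremal selection, \autoref{prop:singleCS} guarantees the selected threshold still falls in $p_2$, preserving the monotone-informativeness conclusion.
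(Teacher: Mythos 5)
Your proposal is correct and follows essentially the same route as the paper, which omits a formal proof but sketches precisely this argument in \autoref{sec:sequential}: sender 2 plays his single-sender threshold at the common interim prior $\eta(\hat s_1,p_1,\pi)$, a higher $p_2$ makes $m_2$ more Blackwell-informative both directly and indirectly via a lower $\hat s_2^0$ (\autoref{prop:singleCS}), and \autoref{lem:rotation} together with the sign of disagreement pinned down by $c$ yields the comparative static through the same extremal-equilibrium machinery as \autoref{prop:singleCS} and \autoref{prop:welfare}. One small imprecision: the shift in the concealment payoff does not have a ``consistent direction across $\hat s_1$'' literally---for $\hat s_1$ on the other side of the fixed point of $\eta(\cdot,p_1,\pi)$ the shift reverses---but this is harmless because (e.g., when $c>0$) any such $\hat s_1$ satisfies $U_1 > \hat s_1 > \hat s_1 - c$ for every $p_2$, so those points are never equilibria and the extremal contradiction argument goes through on the relevant region exactly as in the paper's own proofs.
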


An immediate corollary to \autoref{prop:sequential} is that, given a concealment cost, the first sender discloses more than he does in a single-sender setting. As a result, the DM is always better off in a sequential game than with the first sender alone. On the other hand, a welfare comparison between the sequential- and simultaneous-move games is generally ambiguous.

We also note that if $c=0$ the irrelevance result still holds for sender 1: his disclosure threshold under sequential reporting coincides with that from his single-sender problem. The disclosure threshold chosen by the second sender, however, depends on sender 1's message; as it depends nonlinearly on the belief induced by that message, it can be higher or lower---even on average, ex ante---than sender 2's single-sender disclosure threshold.

\subsection{Changes in Signal Precision}
\label{sec:precision}

Our main results---\autoref{lem:subcomp} and \autoref{prop:welfare}---consider changes in a sender's informativeness via the probability of being informed, $p_i$. We now explain how these insights extend to the intensive margin: the quality of a sender's information conditional on being informed.

Suppose that when informed, sender $i$'s signal or private belief is drawn from a distribution indexed by a precision parameter $\rho\in \Reals$. It is convenient to now characterize the information structure directly by the \emph{unconditional} distribution of beliefs, $F^\rho(s)$ with density $f^\rho(s)$. Combined with the prior $\pi$, Bayes' rule uniquely determines the state-contingent signal distributions. We assume differentiability of $F^\rho$ and $f^\rho$ in $\rho$, and that higher precision yields a \emph{rotation} of the signal distribution around the prior in the sense of \citet{JohnsonMyatt2006}: $\sign\left[\partial F^{\rho}(s) / \partial \rho\right]=\sign\left[\pi-s\right]$. This implies, in particular, that a higher $\rho$ corresponds to a mean-preserving spread of $F^\rho$ (and hence more information in the Blackwell sense).

The comparative statics of precision on strategic behavior operates through the same channel as the probability of being informed. Recall that by \autoref{prop:singleCS}, an increase in the probability of being informed, $p_i$, makes sender $i$ disclose more (lowers $i$'s equilibrium threshold, given upward bias); this is because increasing $p_i$ makes the nondisclosure belief less favorable, making concealment less attractive. If increasing sender $i$'s precision has the same directional effect on the nondisclosure belief, then that logic carries over, and consequently the comparative statics of \autoref{lem:subcomp} and \autoref{prop:welfare} also follow. So, the key question is: does increasing precision lower the nondisclosure belief?

We can study that by considering the effects of local changes in precision. Adapting the nondisclosure belief from \autoref{e:NDbelief}---dropping the dependence on $\pi$ and $p_i$ to ease notation, and instead adding the dependence on $\rho$---and differentiating, we obtain
\begin{equation}
\label{e:precision-derivative}
\sign\left[\frac{\partial}{\partial \rho} \eta(\hat s,\rho)\right] = \sign\left[(\hat{s} - \eta(\hat s,\rho))\frac{\partial}{\partial \rho} F^\rho(\hat{s}) - \int_{\underline{s}}^{\hat{s}} \frac{\partial}{\partial \rho} F^\rho(s) \mathrm ds\right].
\end{equation}
(\autoref{rem:precision-deriv} in the Appendix details the derivation.) The integral on the right-hand side of \autoref{e:precision-derivative} is nonnegative by the mean-preserving spread of a higher $\rho$. So it is the other term that needs attention.

Consider first a concealment cost ($c < 0$). We know that in this case sender $i$'s equilibrium disclosure threshold $\hat s_i$ satisfies $\hat{s}_i < \eta(\hat s_i,\cdot)$, and hence by \autoref{lem:NDbelief}, $\hat s_i<\pi$. Consequently, the first term on the right-hand side of \eqref{e:precision-derivative} is negative because $\frac{\partial}{\partial \rho} F^{\rho}(\hat{s_i}) > 0$ by the rotation property.  Both terms of the right-hand side of \eqref{e:precision-derivative} thus work in the same direction:  increasing precision lowers the nondisclosure belief. The logic of strategic complements under concealment cost therefore extends.

Turning to a {disclosure cost} ($c > 0$), sender $i$'s equilibrium threshold now satisfies $\hat{s}_i > \eta(\hat s_i,\rho)$. If $\hat{s_i} > \pi$ (as is assured, for example, if $c$ is large enough) then the rotation property implies $\frac{\partial}{\partial \rho} F^{\rho}(\hat{s_i}) < 0$, and it follows that \eqref{e:precision-derivative} is negative. However, in general it is possible that $\hat s_i<\pi$. Then $\frac{\partial}{\partial \rho} F^{\rho}(\hat{s_i}) > 0$ and so $\left(\hat{s} - \eta(\hat s_i,\rho)\right)\frac{\partial F^\rho(\hat{s})}{\partial \rho}>0$, which means we have offsetting effects in the right-hand side of \eqref{e:precision-derivative}. A further assumption is needed to ensure that the nondisclosure belief decreases. It is sufficient that the improvement in precision satisfies a \emph{decreasing likelihood ratio} property on the lower half of the private-belief space: for $s<s'<\pi$, the ratio $f^\rho(s')/f^\rho(s)$ is decreasing in $\rho$. This property holds in common parametric families of distributions.\footnote{For example, with a symmetric Beta distribution of beliefs (so the prior $\pi=1/2$) with parameter $1/\rho>0$, we have $f^\rho(s) \propto[s(1-s)]^{1/\rho-1}$, and hence $$\frac{f^\rho\left(s^{\prime}\right)}{f^\rho(s)}=\left[\frac{s^{\prime}\left(1-s^{\prime}\right)}{s(1-s)}\right]^{1 / \rho-1}.$$
For $s<s^{\prime}<1 / 2$, the term in square brackets exceeds 1, which implies the desired monotonicity in $\rho$. One can also check that the monotonicity holds for the distribution of beliefs derived from an underlying Normal signal structure, i.e., where the belief $s(y)$ is computed from a primitive signal $y = \omega + \varepsilon$, with $\varepsilon \sim \mathcal{N}(0,1/\rho)$.} Intuitively, it ensures that an increase in precision shifts probability mass relatively more towards extreme signals/beliefs rather than ones near the threshold $\hat s_i$, and so the downward pull of the integral term on the right-hand side of \eqref{e:precision-derivative} outweighs the upward pull of the first term when $\hat s_i<\pi$; see \autoref{rem:DLR} in the Appendix for a formal verification. Increasing precision then lowers the nondisclosure belief no matter the disclosure threshold, and consequently, the logic of strategic substitutes under disclosure cost extends.

\subsection{Uncertain Bias}
\label{sec:uncertain-bias}

Our main themes about strategic complements or substitutes also hold when senders' biases are not common knowledge. Let us sketch why.

Suppose that each sender's direction of bias is drawn independently and is his private information: with probability $\lambda$ a sender is upward biased ($b_i=1$), a $U$-type, and with probability $1-\lambda$ is downward biased ($b_i=-1$), a $D$-type. The parameter $\lambda$ is common knowledge; it is straightforward to allow it to be a sender-specific $\lambda_i$ at the cost of more notation. The rest of the model is unchanged from \autoref{sec:model}.

Consider first the single-sender benchmark. The $U$-type uses a threshold $s_u$, disclosing signals above $s_u$, while the $D$- type uses a threshold $s_d$, disclosing signals below $s_d$. Let $\eta(s_u, s_d)$ denote the DM's nondisclosure belief, which depends on both thresholds.\footnote{Analogously to \autoref{e:NDbelief}, we now have
\[\eta(s_u, s_d) := \frac{(1-p)\pi + \lambda p F_\pi(s_u)\E_\pi[s \mid s < s_u] + (1-\lambda)p(1-F_\pi(s_d))\E_\pi[s \mid s > s_d]}{(1-p) + \lambda p F_\pi(s_u) + (1-\lambda)p(1-F_\pi(s_d))}.\]} The indifference conditions (assuming interior thresholds, for simplicity) are:
\[s_u - c = \eta(s_u, s_d) \quad \text{and} \quad -s_d - c = -\eta(s_u, s_d).\]
Together these imply $s_u = s_d + 2c$, and equilibrium reduces to a fixed point of one variable. When $c > 0$, we have $s_u > s_d$, implying an interval $(s_d, s_u)$ of signals that neither type discloses; when $c < 0$, we have $s_u < s_d$, and the disclosure regions overlap on $(s_u, s_d)$.

Now consider what happens when the DM receives some \emph{exogenous} information simultaneously, or subsequently, to the single sender's disclosure. This conveys the key intuition of the effects of a second sender. The IVP logic driving \autoref{lem:rotation} still applies, but with a twist: each sender type expects a different direction of DM's belief revision, yet both types' incentives move in the same direction. Specifically, consider first a disclosure cost ($c > 0$). As the $U$-type has $s_u > \eta$, IVP implies that the threshold $U$-type expects information to pull the DM's nondisclosure belief upward toward $s_u$. Conversely, the threshold $D$-type expects a downward revision towards $s_d$. Crucially though, given their opposing biases, this means that both types expect the additional information after concealment to generate (on average) a favorable revision. Consequently, concealment becomes more attractive for each type, while there is no change in the expected payoff from disclosure. This is the logic underlying strategic substitutes in our baseline model.

With a concealment cost ($c < 0$), matters flip. Now, since $s_u < \eta(s_u,s_d)$ and $s_d > \eta(s_u,s_d)$, the $U$-type expects the additional information to pull the DM's nondisclosure belief downward, while the $D$-type expects an upward revision. So both types expect unfavorable revisions upon nondisclosure, making concealment less attractive for each. This is the logic underlying strategic complements in the baseline model. 

The upshot is that despite uncertainty about a sender's bias, the cost structure determines the impact of strategic interaction just as earlier---even though the two sender types expect the DM's belief revision to go in opposite directions after nondisclosure.

\section{Conclusion}
\label{sec:conclusion}

This paper has studied multi-sender persuasion when senders can reveal or conceal private information at a cost. Our central insight is that the nature of these costs---whether they are, on net, attached to disclosure or to concealment---fundamentally shapes the strategic interaction among senders.

When concealment is costly, senders' disclosures are strategic complements: each sender discloses more aggressively when others do the same. When disclosure is costly, disclosures are instead strategic substitutes. The common mechanism owes to disagreement between the threshold sender type and the DM upon nondisclosure, and that each sender expects (on average) others' messages to reduce that disagreement gap. Under a concealment cost, the threshold sender type has a less favorable belief than the DM's nondisclosure belief; the gap offsets the cost of concealment. This type thus expects additional senders' messages to move the DM's nondisclosure belief less favorably, which makes concealment less attractive. The logic reverses under a disclosure cost, making disclosure less attractive.

These findings imply that in the model we study, the conventional wisdom that competition among information sources promotes disclosure and benefits decision makers holds under no message costs or concealment costs, but can fail under disclosure costs. With disclosure costs, adding senders can actually reduce DM welfare. Similarly, lowering those costs---while beneficial in a single-sender setting---can reduce DM welfare with multiple senders. While these welfare effects may be reminiscent of free-rider considerations, the current logic is distinct and stems from how the DM interprets silence; the contrast with free-riding is sharpened by noting the DM's indirect benefit from additional senders under concealment cost. The upshot is that policymakers seeking to enhance information provision should attend not just to the number of information sources but to the underlying cost structure governing their communication.

Disclosure costs are likely prominent when information provision requires significant preparation or legal review, or when it risks revealing proprietary knowledge; salient contexts include litigation discovery or financial disclosures that may inform competitors. Concealment costs may dominate when suppression risks ex-post penalties or reputational damage, or when legal duties to disclose apply; relevant settings include Brady obligations in prosecution, fiduciary duties in financial advising, or duty-to-disclose regimes in securities and real estate. Our analysis suggests that expanding the number of information sources is unambiguously beneficial in the latter settings but may backfire in the former.

Finally, while we have framed our discussion in terms of multiple strategic senders, our analysis applies just as well if there is only one strategic sender and the DM receives additional information from an arbitrary other source (cf.~\autoref{sec:uncertain-bias}). So, for instance, when the sender bears a disclosure cost, the DM can be hurt by the option to freely acquire a limited amount of information; she may prefer to tie her hands ex ante to not do so.

\appendix
\section{Appendix}
\label{sec:appendix}

\subsection{Proofs}

\begin{proof}[Proof of \autoref{lem:NDbelief}]
Partially differentiating \eqref{e:NDbelief} with respect to the first argument yields
\begin{align*}
\frac{\partial \eta(\hat s,p,\pi)}{\partial \hat s} &= \frac{-p(1-p)f_{\pi}(\hat s)}{(1-p+pF_{\pi}(\hat s))^2}\left(\pi - \E_{\pi}[s \mid s<\hat s]\right) +
\frac{pF_{\pi}(\hat s)}{1-p+pF_{\pi}(\hat s)}\frac{f_{\pi}(\hat s)}{F_{\pi}(\hat s)} \left(\hat s-\E_{\pi}[s \mid s<\hat s]\right) \\
&=\frac{pf_{\pi}(\hat s)}{1-p+pF_{\pi}(\hat s)}\left(
\frac{-(1-p)}{1-p+pF_{\pi}(\hat s)}\left(\pi-\E_{\pi}[s \mid s<\hat s]\right) + \left(\hat s-\E_{\pi}[s \mid s<\hat s]\right)\right) \\
&=\frac{pf_{\pi}(\hat s)}{1-p+pF_{\pi}(\hat s)}\left(\hat s-\eta(\hat s,p,\pi)\right).
\end{align*}
Hence, $\sign\left[\partial \eta(\hat s,p,\pi)/\partial \hat s\right]=\sign\left[\hat s-\eta(\hat s,p,\pi)\right]$. Part \ref{NDbelief1} of the lemma follows from the observation that for any $p$ and $\pi$, we have $\eta(\underline s,p,\pi)=\eta(\overline s,p,\pi)=\pi\in (\underline s,\overline s)$.

Partially differentiating \eqref{e:NDbelief} with respect to the second argument and simplifying yields
$$\frac{\partial \eta(\hat s,p,\pi)}{\partial p}=\frac{F_\pi(\hat s)\left(\E_{\pi}[s \mid s<\hat s]-\pi\right)}{\left(1-p+pF_\pi(\hat s)\right)^2},$$
which proves part \ref{NDbelief2} because $\E_{\pi}[s \mid s<\hat s]<\pi$ if and only if $\hat s<\overline s$, and $F_\pi(\hat s)>0$ if and only if $\hat s>\underline s$.
\end{proof}

\begin{proof}[Proof of \autoref{prop:single}] We prove each part in turn.

\emph{Part \ref{single1}:} Consider an interior threshold $\hat s_i^0 \in (\underline s, \overline s)$. The sender with signal $\hat s_i^0$ must be indifferent between disclosing (with payoff $\hat s_i^0 - c$) and concealing (with payoff $\eta(\hat s_i^0, p_i, \pi)$), yielding condition (i). Now consider the boundary case $\hat s_i^0 = \underline s$. Then the nondisclosure belief is the prior $\pi$, and since even the lowest type is willing to disclose, we must have $\pi \leq \underline s - c$. In the other boundary case, $\hat s_i^0 = \overline s$, the nondisclosure belief is again the prior $\pi$, and since even the highest type is willing to conceal, we must have $\pi \geq \overline s - c$. Conversely, any threshold satisfying these conditions constitutes an equilibrium by standard verification.

\emph{Part \ref{single2}:} Assume $c\leq 0$. Then the boundary $\overline s$ cannot be an equilibrium threshold, as $\pi < \overline s - c$. So equilibrium thresholds are either interior or $\underline s$. If $\underline s - c < \pi=\eta(\underline s,p_i,\pi)$, then any equilibrium must have an interior threshold $\hat s$ and hence satisfy $\eta(\hat s,p_i,\pi)=\hat s-c\geq \hat s$. Since $\eta(\cdot,p_i,\pi)$ is strictly decreasing in the region where $\eta(s,p_i,\pi) \geq s$ (by \autoref{lem:NDbelief}), and the line $s - c$ is increasing in $s$, there is only one solution to $\eta(\hat s,p_i,\pi) = \hat s-c$. If instead $\underline s - c \geq \pi$, then $\underline s$ is an equilibrium threshold; moreover, $s>\underline s$  implies $s - c>\eta(s,p_i,\pi)$---because $\eta(\cdot,p_i,\pi)$ is quasiconvex by \autoref{lem:NDbelief} and $\eta(\underline s,p_i,\pi)=\eta(\overline s,p_i,\pi)$---and hence there is no other equilibrium. Finally, when $c = 0$, we have $\underline s - c = \underline s < \pi$, so the unique equilibrium is interior; it is the fixed point of $\eta(\cdot, p_i, \pi)$.

\emph{Part \ref{single3}:} When $c > 0$, an interior equilibrium threshold $\hat s$ requires $\eta(\hat s, p_i, \pi) = \hat s - c < \hat s$, placing solutions in the region where $\eta$ is strictly increasing (by \autoref{lem:NDbelief}). As noted in the ``validity'' claim in \autoref{fn:fig-ND}, parameters can be chosen so that $\eta(\cdot,p_i,\pi)$ has slope exceeding one at some points in this region; for sufficiently small $c > 0$, the line $s - c$ then crosses $\eta(s,p_i,\pi)$ multiple times, as illustrated in \autoref{fig:single}. It remains to explain the validity claim in \autoref{fn:fig-ND}. Given any such quasiconvex function $\psi$, set $\pi = \psi(\underline s)$. The nondisclosure belief formula \eqref{e:NDbelief} then becomes a first-order ordinary differential equation in $F_\pi$ with boundary condition $F_\pi(\underline s)=0$; quasiconvexity of $\psi$ ensures the solution is monotone. The probability $p$ is chosen to scale the solution so that $F_\pi(\overline s) = 1$.
\end{proof}

\begin{proof}[Proof of \autoref{prop:singleCS}]
We prove only the first part of the proposition; the second part is analogous and omitted. Fix any $p_i>\tilde{p}_i$ and let $\hat s_i^0$ and $\tilde s_i^0$ denote the corresponding highest equilibrium thresholds. Suppose, to contradiction, that $\hat s_i^0>\tilde s_i^0$. Since $\tilde s_i^0$ is the highest equilibrium threshold at $\tilde p_i$, it holds for any $\hat s>\tilde s_i^0$ that $\eta(\hat s,\tilde p_i,\pi)< \hat s-c$. But $\eta(\hat{s},p,\pi)$ is weakly decreasing in $p$ (\autoref{lem:NDbelief}), so for any $\hat s>\tilde s_i^0$, we have $\eta(\hat s,p_i,\pi)< \hat s-c$. This implies $\hat s_i^0\leq \tilde s_i^0$, a contradiction. A similar argument establishes the result for the lowest equilibrium threshold.
\end{proof}

\begin{proof}[Proof of \autoref{lem:threshold}]
Fix any equilibrium and any sender $i$. Let $j\neq i$. It suffices to show that the difference in $i$'s expected payoff from disclosing versus concealing is strictly increasing in $s_i$. The expected payoff from disclosing is $s_i-c$, because the expected posterior of the DM after $i$ discloses $s_i$ is $s_i$. Denote the expected payoff from concealing as $\E [\beta_{DM}(m_j,m_i=\phi)]$, where \mbox{$\beta_{DM}(m_j,m_i=\phi)$} is the DM's equilibrium belief following any message $m_j$ and nondisclosure by $i$, and the expectation is over $m_j$ given $i$'s beliefs under $s_i$. Because $m_j$ is uncorrelated with $s_i$ conditional on the state, and $i$'s belief about the state given signal $s_i$ is $s_i$, we have
$$\E \big[\beta_{DM}(m_j,m_i=\phi)\big]=s_i \E\big[\beta_{DM}(m_j,m_i=\phi) \mid  \omega=1\big] + (1-s_i) \E\big[\beta_{DM}(m_j,m_i=\phi) \mid \omega=0\big].$$
The derivative of the right-hand side above with respect to $s_i$ is strictly less than one because \mbox{$\E\big[\beta_{DM}(m_j,m_i=\phi) \mid \omega=1\big]<1$}, as beliefs lie in $[0,1]$ and $m_j$ cannot perfectly reveal the state. Therefore, the payoff difference $(s_i-c)-\E\big[\beta_{DM}(m_j,m_i=\phi)\big]$ is strictly increasing in $s_i$.
\end{proof}

\begin{proof}[Proof of \autoref{lem:rotation}]
Given any $p_j$, $p_i$, and $\hat s_j$, \autoref{e:U} tells us that $U(s_i,p_i,\hat s_j,p_j)$---which, recall, is shorthand for $U(s_i,s_i,p_i,\hat s_j,p_j)$---is $i$'s expectation of the DM's belief under interim beliefs $s_i$ for $i$ and $\eta(s_i,p_i,\pi)$ for the DM. 

Part \ref{rotation1} of the lemma follows directly from the IVP theorem of \citet*[Theorem 1]{KLS21-IVP}. Note that the theorem's ordering assumptions---likelihood-ratio ordered priors and MLRP experiments---are automatically satisfied with a binary state, as in the current model. The strict inequalities in the consequents of part \ref{rotation1} hold because $j$'s message is not fully informative of the state. The weak inequalities are strict if and only if $j$'s message is informative, i.e., if and only if $\hat s_j<\overline s$.

For part \ref{rotation2} of the lemma, observe that the message under $(\hat s_j, p_j)$ is a garbling of the message under $(\hat s'_j, p'_j)$ whenever $p'_j \geq p_j$ and $\hat s'_j \leq \hat s_j$. To see this, note that any message $m'_j$ under $(\hat s'_j,p'_j)$ can be garbled to produce message $m_j$ as follows:
\begin{equation*}
m_j =
\begin{cases}
\phi & \mbox{if } m'_j=\phi \mbox{ or } m'_j \in [\hat s'_j,\hat s_j),\\
m'_j \mbox{ with prob. } {p_j}/{p'_j} \mbox{ or } \phi \mbox{ with prob. } 1-{p_j}/{p'_j} &\mbox{if } m'_j \ge \hat s_j.
\end{cases}
\end{equation*}
In each state, the distribution of $m_j$ as constructed is the same as the distribution of sender $j$'s message under $(\hat s_j,p_j)$. The IVP theorem \citep*[Theorem 1]{KLS21-IVP} then implies that $i$'s expected posterior under the more informative $(\hat s'_j,p'_j)$ is closer to $s_i$ than under $(\hat s_j,p_j)$. Equality holds if and only if the two messages are equally informative, which occurs when $\hat s'_j = \hat s_j$ and either $p'_j = p_j$ or $\hat s_j = \overline{s}$.
\end{proof}

\begin{proof}[Proof of \autoref{lem:subcomp}]
Recall the transformation $T(\beta_i,\mu_{DM},\mu_i)$ from \eqref{e:transform}. It is increasing in $\mu_{DM}$. \autoref{lem:NDbelief} shows that $\eta(s_i,p_i,\pi)$ is decreasing in $p_i$. Hence, sender $i$'s payoff from concealing any candidate threshold signal $s_i$ (with the threshold correctly anticipated by the DM) is
$$U(s_i,p_i,\hat s_j,p_j)=\E_{\hat s_j,p_j}\bigl[T(
\beta(m_j,\hat s_j,p_j,s_i),\eta(s_i,p_i,\pi),s_i) \mid s_i\bigr],$$ 
which is decreasing in $p_i$. The payoff from disclosure, $s_i-c$, does not depend on $p_i$. Therefore, an increase in $p_i$ makes concealment less attractive at any candidate threshold. Following the same argument as in the proof of \autoref{prop:singleCS}, if a threshold is not a best response at some $p_i$, then it is not at any higher $p_i$, and hence the largest and smallest best-response thresholds must decrease in $p_i$.

We next turn to the enumerated parts of the proposition. Let $\hat s_i^0$ be the unique (if $c\leq 0$) or smallest (if $c>0$) equilibrium threshold in the single-sender game with $i$. 

\emph{Part \ref{subcomp1}:} Consider  $c=0$. Then $\hat s_i^0$ is the fixed point of $\eta(\cdot,p_i,\pi)$. \autoref{lem:rotation} (part \ref{rotation1}) implies $U(\hat s_i^0,p_i,\hat s_j,p_j) = \hat s_i^0$, and so $\hat s_i^0\in {\hat s_i}^{BR}(\hat s_j,p_i,p_j)$. We claim there is no other best-response threshold. Suppose, to contradiction, that $\hat s' > \hat s_i^0$ and $\hat s' \in {\hat s_i}^{BR}(\hat s_j,p_i,p_j)$. \autoref{lem:NDbelief}  implies $\eta(\hat s',p_i,\pi) < \hat s'$, and then \autoref{lem:rotation} (part \ref{rotation1}) implies $U(\hat s',p_i,\hat s_j,p_j) < \hat s'$. Therefore, sender $i$ strictly prefers disclosure at signal $\hat s'$, a contradiction. A symmetric argument shows $\hat s' < \hat s_i^0$ implies $\hat s' \notin {\hat s_i}^{BR}(\hat s_j,p_i,p_j)$.

\emph{Part \ref{subcomp2}:} Now consider $c < 0$.  For any $s_i>\hat s^0_i$, the single-sender equilibrium condition implies $\eta(s_i,p_i,\pi)< s_i-c$. So $U(s_i,p_i,\hat s_j,p_j)\leq \max\{s_i,\eta(s_i,p_i,\pi)\}<s_i-c$, where the first inequality is by \autoref{lem:rotation} (part \ref{rotation1}). Hence $s_i\notin {\hat s_i}^{BR}(\hat s_j,p_i,p_j)$. Moreover, $\hat s^0_i \in {\hat s_i}^{BR}(\hat s_j,p_i,p_j)$ if and only if $\hat s^0_i=\underline s$ or $\hat s_j=\overline s$; for otherwise, $\hat s_i^0 < U(\hat s^0_i,p_i,\hat s_j,p_j)<\eta(\hat s_i^0,p_i,\pi)=\hat s_i^0-c$ (the inequalities are by part \ref{rotation1} of \autoref{lem:rotation}, and the equality by the single-sender equilibrium condition), which contradicts $\hat s^0_i \in {\hat s_i}^{BR}(\hat s_j,p_i,p_j)$. This proves part \ref{subcomp2}(i) of the proposition. For part \ref{subcomp2}(ii), note that 
for any $s_i\in {\hat s_i}^{BR}(\hat s_j,p_i,p_j)$, we have $s_i\leq \hat s^0_i$ (as just shown), and hence $\eta(s_i,p_i,\pi) > s_i$ by \autoref{lem:NDbelief} and the single-sender equilibrium condition. By \autoref{lem:rotation} (part \ref{rotation2}), an increase in $\hat s_j$ or a decrease in $p_j$ raises $\abs{U(s_i,p_i,\hat s_j,p_j)-s_i}=U(s_i,p_i,\hat s_j,p_j)-s_i$, i.e., raises sender $i$'s nondisclosure payoff. Following the same argument as in the proof of \autoref{prop:singleCS}, the largest and smallest best-response thresholds must increase.

\emph{Part \ref{subcomp3}:} The proof for $c>0$ is entirely symmetric to the previous part, once we note that $s_i<\hat s^0_i$ implies $s_i-c<\eta(s_i,p_i,\pi)$ because $\hat s^0_i$ is defined as the smallest single-sender equilibrium threshold.
\end{proof}

\begin{proof}[Proof of \autoref{prop:welfare}]
Consider first $c \le 0$. For each sender $i$, define
\begin{equation*}
w_i^{p_1,p_2,c}(\hat s_j) := \min \{\hat s_i \mid U(\hat s_i, p_i,\hat s_j,p_j) \le \hat s_i - c\}.
\end{equation*}
(The minimum exists by continuity of $U$.) That is, $w_i^{p_1,p_2,c}(\cdot)$ gives the smallest best response: the smallest element of ${\hat s_i}^{BR}(\hat s_j,p_i,p_j)$. Let $w^{p_1,p_2,c}:=(w_1^{p_1,p_2,c},w_2^{p_1,p_2,c})$, where $w_1^{p_1,p_2,c}$ depends on $\hat s_2$ and $w_2^{p_1,p_2,c}$ depends on $\hat s_1$. By \autoref{lem:subcomp}, $w^{p_1,p_2}$ is increasing on $[\underline s,\overline s]^2$. By Tarski's fixed point theorem, the smallest fixed point
\begin{equation*}
\mathbf{s}_*(p_1,p_2,c) := \min \{(\hat s_1,\hat s_2)\mid w^{p_1,p_2,c}(\hat s_1,\hat s_2) \le (\hat s_1,\hat s_2)\}
\end{equation*}
exists. We show that $\mathbf{s}_*(p_1,p_2,c)$ is the smallest equilibrium threshold pair, i.e., the smallest fixed point of the best-response correspondence $({\hat s_i}^{BR},{\hat s_j}^{BR})$. Let $\mathbf{s}$ be any equilibrium. Since each sender's threshold is a best response, we have $w(\mathbf{s}) \le \mathbf{s}$. Monotonicity of $w$ implies $w(\mathbf{s}_* \wedge \mathbf{s}) \le w(\mathbf{s}) \le \mathbf{s}$ and $w(\mathbf{s}_* \wedge \mathbf{s}) \le w(\mathbf{s}_*) \le \mathbf{s_*}$, where $\mathbf{s} \wedge \mathbf{s'} := (\min\{s_1,s'_1\},\min\{s_2,s'_2\})$. Thus $w(\mathbf{s}_* \wedge \mathbf{s}) \le \mathbf{s}_* \wedge \mathbf{s}$. The definition of $\mathbf{s}_*$ as a minimum implies $\mathbf{s}_* \le \mathbf{s}_* \wedge \mathbf{s}$, which implies $\mathbf{s}_* \le \mathbf{s}$.

By \autoref{lem:subcomp}, each $w_i^{p_1,p_2,c}$ is decreasing in $p_1$ and in $p_2$ and increasing in $c$. Standard monotone comparative statics \citep[e.g.,][Theorem 6]{MR90} imply that $\mathbf{s}_*(p
_1,p_2,c)$ is decreasing in $p_1$ and $p_2$ and increasing in $c$. A parallel argument to that above, now using $\bar w_i^{p
_1,p_2,c}(\hat s_j) := \sup\{\hat s_i \mid U(\hat s_i, p_i,\hat s_j,p_j) \ge \hat s_i - c\}$ in place of $w_i^{p
_1,p_2,c}$, shows that the largest equilibrium $\mathbf{s}^*(p_1,p_2,c)$ is also decreasing in each $p_i$ and increasing in $c$.

For $c > 0$, let $y_j := -\hat s_j$ and define $\tilde w^{p_i,p_j,c}$ as a self-map on $[\underline s, \overline s] \times [-\overline s, -\underline s]$ by $$\tilde w^{p_i,p_j,c}(\hat s_i, y_j) := (w_i^{p_i,p_j,c}(-y_j), -\bar w_j^{p_i,p_j,c}(\hat s_i)).$$ By \autoref{lem:subcomp}, $w_i^{p_i,p_j,c}$ and $\bar w_j^{p_i,p_j,c}$ are both decreasing in each of their arguments; combined with the sign changes in the definition of $\tilde w^{p_i,p_j,c}$, this implies $\tilde w^{p_i,p_j,c}$ is monotone increasing. The smallest fixed point of $\tilde w^{p_i,p_j,c}$ minimizes $(\hat s_i, y_j) \equiv (\hat s_i, -\hat s_j)$, i.e., it minimizes $\hat s_i$ while maximizing $\hat s_j$, so it is the $j$-maximal equilibrium. \autoref{lem:subcomp} established that $w_i^{p_i,p_j,c}$ is decreasing in $p_i$, while $\bar w_j^{p_i,p_j,c}$ is increasing in $p_i$; hence both components of $\tilde w^{p_i,p_j,c}$ are decreasing in $p_i$. Monotone comparative statics imply the smallest fixed point decreases in $p_i$. The argument for the $i$-maximal equilibrium is symmetric, interchanging the roles of $i$ and $j$. 

Finally, we note that the effect of lower disclosure cost $c>0$ on equilibrium thresholds is ambiguous: it lowers both $w_i^{p_i,p_j,c}$ and $\bar{w}_j^{p_i,p_j,c}$, so the first component of $\tilde{w}^{p_i,p_j,c}$ decreases while the second increases. \autoref{eg:welfare-higher-with-higher-cost} in the \hyperref[app:welfare-example]{Supplementary Appendix} shows that either effect can dominate, and that the implications for DM welfare can also go either way.
\end{proof}

\begin{remark}\label{rem:precision-deriv}Here is the derivation of  \autoref{e:precision-derivative}. Differentiating \eqref{e:NDbelief} with respect to $\rho$ yields (adopting the notational adjustments mentioned before \autoref{e:precision-derivative}):
\begin{align}
\frac{\partial}{\partial \rho} \eta(\hat s,\rho) &= \frac{p}{1 - p + p F^\rho(\hat{s})} \left[ \int_{\underline{s}}^{\hat{s}} s \frac{\partial}{\partial \rho}  f^\rho(s)\mathrm d s - \eta(\hat{s},\rho) \frac{\partial}{\partial \rho} F^\rho(\hat{s}) \right] \notag \\
&= \frac{p}{1 - p + p F^\rho(\hat{s})} \int_{\underline{s}}^{\hat{s}} (s - \eta(\hat s,\rho)) \frac{\partial}{\partial \rho} f^\rho(s) \mathrm d s  \label{e:precision-f-form}\\
&= \frac{p}{1 - p + p F^\rho(\hat{s})} \left[ \left(\hat{s} - \eta(\hat{s}, \rho)\right) \frac{\partial}{\partial \rho} F^\rho(\hat{s}) - \int_{\underline{s}}^{\hat{s}} \frac{\partial}{\partial \rho} F^\rho(s)\mathrm d s \right],\notag
\end{align}
where the second equality uses $\frac{\partial}{\partial \rho} F^\rho(\hat{s}) = \int_{\underline{s}}^{\hat{s}} \frac{\partial}{\partial \rho} f^\rho(s) \mathrm d s$, and the third uses integration by parts and $F^\rho(\underline s)=0$.  
\end{remark}

\begin{remark}
\label{rem:DLR}
Let us show formally that the decreasing likelihood ratio property ensures that \eqref{e:precision-derivative} is negative when $\hat s \equiv \hat s_i \in (\underline s,\pi)$.
Let \mbox{$L^\rho(\hat s):=\E_{F^\rho}[s \mid s<\hat s]$}
and observe that 
\begin{align}
\sign\left[\frac{\partial}{\partial \rho} \eta(\hat s,\rho)\right]&=
\sign\left[\int_{\underline s}^{\hat s}(s-\eta(\hat s,\rho)){\frac{\partial}{\partial \rho}f^\rho(s)} \mathrm d s\right] \notag \\
& =
\sign\left[\int_{\underline s}^{\hat s}\left(s-L^\rho(\hat s)\right){\frac{\partial}{\partial \rho}f^\rho(s)} \mathrm d s+\left(L^\rho(\hat s)-\eta\right){\frac{\partial}{\partial \rho}F^\rho(\hat s)}\right] \notag \\
& = 
\sign\left[F^\rho(\hat s)\frac{\partial}{\partial \rho} L^\rho(\hat s)+\left(L^\rho(\hat s)-\eta(\hat s,\rho)\right){\frac{\partial}{\partial \rho}F^\rho(\hat s)}\right], \label{e:precision-deriv-2}
\end{align}
where the first equality is from \eqref{e:precision-f-form} in \autoref{rem:precision-deriv}, the second equality adds and subtracts $L^\rho(\hat s)$ and uses $\frac{\partial}{\partial \rho} F^\rho(\hat{s}) = \int_{\underline{s}}^{\hat{s}} \frac{\partial}{\partial \rho} f^\rho(s) \mathrm d s$, and the third equality follows from computing $\frac{\partial}{\partial \rho} L^\rho(\hat s)$.
As $\hat s<\pi$, we have ${\frac{\partial}{\partial \rho}F^\rho(\hat s)}>0$ (by the rotation property) and $L^\rho (\hat s)<\pi$ (by the definition of $L^\rho$). The latter inequality implies $L^\rho(\hat s)<\eta(\hat s,\rho)$, since the nondisclosure belief is a weighted average of $L^\rho(\hat s)$  and $\pi$. Hence, $\left(L^\rho(\hat s)-\eta(\hat s,\rho)\right){\frac{\partial}{\partial \rho}F^\rho(\hat s)}<0$.
So 
\eqref{e:precision-deriv-2} is negative if $\frac{\partial}{\partial \rho} L^\rho(\hat s)<0$. This inequality is assured by the decreasing likelihood ratio property, because that implies the conditional distribution of signals on $[\underline s,\hat s]$ is stochastically dominated (in the first-order sense) when $\rho$ increases. Indeed, it is sufficient for higher $\rho$ to yield reverse hazard rate domination on $[\underline s,\pi]$, as that is equivalent to conditional stochastic dominance for all thresholds in the relevant range \citep[(1.B.43) on page 37]{shaked2007stochastic}.
\end{remark}

\subsection{Welfare Examples}
\label{app:welfare-example}

This appendix section substantiates \autoref{cor:main} (part \ref{main3}) and \autoref{prop:welfare} (part \ref{welfare2}) with two examples when there is a disclosure cost. \autoref{eg:welfare-reversal} demonstrates that the DM can be strictly worse off facing two senders than facing either sender alone, while \autoref{eg:welfare-higher-with-higher-cost} demonstrates that when facing two senders, the DM's welfare can be nonmonotonic in the disclosure cost. Details verifying the claims for these examples are in the \hyperref[app:welfare-example-proofs]{Supplementary Appendix}.

\begin{example}
\label{eg:welfare-reversal}
The prior is $\pi=1/2$. The information structure is parametrized by $\gamma \in (1/2,1)$ and $\delta\in (0,1)$. There are four signals,\footnote{Although this violates our maintained assumption of continuous signals, one can perturb the example to make it continuous while preserving the conclusion.} with conditional probabilities in each state given by:
\begin{center}
    \begin{tabular}{c|cccc}
     & $\underline{s}$ & $s^l$ & $s^h$ & $\bar{s}$ \\ \hline
    $\omega=0$ & $1-\delta$ & $\gamma \delta$ & $(1-\gamma)\delta$ & $0$ \\
    $\omega=1$ & $0$ & $(1-\gamma)\delta$ & $\gamma \delta$ & $1-\delta$
    \end{tabular}.
\end{center}
There is a cost of disclosure, $c>0$. The DM chooses an action $a\in [0,1]$ to maximize $-(a-\omega)^2$.

As we equate signals with their posteriors on state $1$, $$\underline{s}=0 < s^l=1-\gamma < 1/2 < s^h=\gamma<\bar{s}=1.$$ 
Because of the example's discrete signals, a threshold sender strategy may now entail randomization at one signal; with that caveat, equilibria must still be in threshold strategies. Observe that---regardless of whether there are one or two senders, and regardless of whether they are upward or downward biased---any sender must withhold the two signals that are least favorable to him (i.e., $\underline s$ and $s^l$ if upward biased, and $s^h$ and $\overline s$ if downward biased) in any equilibrium; this is because $s^l<\pi<s^h$. Also observe that by symmetry of the problem, the DM's payoff in the best equilibrium with a single sender does not depend on the direction of that sender's bias.

We now specialize to the parameters $\gamma = 0.7$, $\delta = 0.7$, $p_1 = p_2 = 0.8 \equiv p$, and $c = 0.36$. 

\begin{claim}
\label{claim:single-sender-benchmark}
With a single upward-biased sender, the DM's best equilibrium has the sender disclosing $s^h$ and $\bar{s}$ and concealing $\underline s$ and $s^l$. The DM's expected payoff in this equilibrium is $-0.1864$.
\end{claim}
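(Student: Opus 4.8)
The plan is to reduce the set of candidate equilibria to a short finite list, determine which of them are equilibria, identify the Blackwell-most-informative one (which is best for the DM), and then compute the DM's payoff directly.

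First I would record the payoff bookkeeping. Under quadratic loss the DM's optimal action given belief $\beta_{DM}$ is $a=\beta_{DM}$, yielding conditional loss $\beta_{DM}(1-\beta_{DM})$; hence the DM's ex-ante payoff is $-\E[\beta_{DM}(1-\beta_{DM})]$, which by the law of total variance equals $-\pi(1-\pi)+\mathrm{Var}(\beta_{DM})$. I would also compute the unconditional signal probabilities at $\pi=1/2$, namely $f_\pi(\underline s)=f_\pi(\bar s)=0.15$ and $f_\pi(s^l)=f_\pi(s^h)=0.35$.

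Next I would characterize the equilibrium set. As already observed in the example, an upward-biased sender always conceals $\underline s$ and $s^l$, since disclosing either yields payoff $s-c<0$, below any nondisclosure belief. The only strategic choices therefore concern $s^h$ and $\bar s$, so (given threshold behavior) the pure candidates are: disclose $\{s^h,\bar s\}$, disclose $\{\bar s\}$ only, or disclose nothing. For each candidate I would compute the nondisclosure belief $\eta$ by Bayes' rule from the probabilities above; e.g., disclosing $\{s^h,\bar s\}$ gives $\eta=23/75\approx 0.307$ and disclosing $\{\bar s\}$ gives $\eta\approx 0.432$. I would then check the disclosure inequalities $s-c \gtrless \eta$ for $s\in\{s^h,\bar s\}$ at $c=0.36$: ``disclose nothing'' fails because $\eta=\pi=0.5<0.64=\bar s-c$ would force $\bar s$ to be disclosed; both $\{\bar s\}$ and $\{s^h,\bar s\}$ satisfy all inequalities and so are equilibria. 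In the $\{s^h,\bar s\}$ equilibrium the binding inequality $s^h-c=0.34\geq 23/75$ is strict, so no randomization at $s^h$ arises.

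Finally, since $\underline s$ and $s^l$ can never be disclosed, the strategy disclosing $\{s^h,\bar s\}$ is the most informative (its message is a garbling of no other equilibrium's message), and hence the DM's best equilibrium. In it the DM's posterior is $0.7$ after $s^h$, $1$ after $\bar s$, and $23/75$ after nondisclosure, with respective probabilities $0.28$, $0.12$, and $0.6$; substituting into $-\E[\beta_{DM}(1-\beta_{DM})]$ yields $-0.1864$. The main point requiring care is not deep: it is (i) confirming that no equilibrium is more informative, which is exactly the content of the ``always conceal $\underline s,s^l$'' observation, so that the maximal-disclosure equilibrium is automatically DM-best; and (ii) the Bayes and payoff arithmetic. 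The only genuine subtlety created by the discrete signals is ruling out partial randomization at $s^h$, which is dispatched by the strictness of $s^h-c\geq 23/75$ noted above.
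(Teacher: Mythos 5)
Your proof is correct and follows essentially the same route as the paper's: the key step in both is verifying that full disclosure of $\{s^h,\bar s\}$ is an equilibrium via $s^h-c=0.34>23/75\approx 0.3067$, noting that this equilibrium is automatically DM-best because every equilibrium conceals $\underline s$ and $s^l$ (so every other equilibrium's message is a garbling of this one's), and then computing the identical welfare expression $-0.6\cdot\frac{23}{75}\cdot\frac{52}{75}-0.28\cdot 0.21=-0.1864$. One caveat on a side remark: the strictness of $0.34>23/75$ only rules out randomization when the DM conjectures full disclosure of $s^h$; since $0.34$ lies strictly between $23/75$ and $0.432$, there is in fact an additional mixed equilibrium in which the sender discloses $s^h$ with an interior probability chosen to make the nondisclosure belief equal $0.34$ --- but this is immaterial to the claim, as that equilibrium is again a garbling of the $\{s^h,\bar s\}$ equilibrium and hence cannot be DM-best.
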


\begin{claim}
\label{claim:two-senders-worse}
With two opposite-biased senders, the unique equilibrium has each sender disclosing his most favorable signal and concealing the other three signals. The DM's expected payoff in this equilibrium is $-0.19$.
\end{claim}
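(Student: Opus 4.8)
The plan is to pin down the unique equilibrium by a best-response argument that exploits \autoref{lem:rotation} to collapse the infinitely many conjectures into one extreme computation per step, and then to read off welfare directly from the resulting distribution of posteriors. Without loss, take sender $1$ upward biased and sender $2$ downward biased; the example is symmetric under $\omega\mapsto 1-\omega$ (which swaps $\underline s\leftrightarrow\bar s$, $s^l\leftrightarrow s^h$, and the two biases, fixing $\pi=1/2$), so any claim about sender $1$ transfers to sender $2$. As noted in the example's preamble, in any equilibrium sender $1$ conceals $\{\underline s,s^l\}$ and sender $2$ conceals $\{s^h,\bar s\}$ (since $s^l<\pi<s^h$). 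Hence the only undetermined choices are whether sender $1$ discloses $\bar s$ and $s^h$ and, symmetrically, whether sender $2$ discloses $\underline s$ and $s^l$. I will show each sender strictly discloses his fully revealing signal and strictly conceals his second-most-favorable signal, which forces the profile in the claim.

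\textbf{Step A (each sender discloses his fully revealing signal).} Disclosing $\bar s$ reveals $\omega=1$, so sender $1$'s payoff from it is $\bar s-c=0.64$, independent of sender $2$'s play. The payoff from concealing $\bar s$ is $U(\bar s,\cdot)$; by \autoref{lem:rotation} this is largest when sender $2$'s message is most informative and when the DM's nondisclosure belief $\eta(\cdot)$ is highest (so that the transformation $T$ of \eqref{e:transform} is largest), i.e.\ when sender $2$ discloses $\{\underline s,s^l\}$ and sender $1$ is conjectured to conceal everything (giving $\eta=\pi=1/2$). Evaluating $U$ in that extreme case---averaging the DM's posterior over sender $2$'s messages from the vantage of a sender $1$ who knows $\omega=1$---gives $U(\bar s,\cdot)=0.6273<0.64$. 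Thus concealing $\bar s$ is strictly dominated under every strategy of sender $2$ and every consistent conjecture, so sender $1$ discloses $\bar s$ with probability one; symmetrically sender $2$ discloses $\underline s$.

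\textbf{Step B (each sender conceals his second signal), the crux.} Disclosing $s^h$ yields $s^h-c=0.34$, so I must show $U(s^h,\cdot)>0.34$ for every strategy of sender $2$ and every conjecture consistent with sender $1$ disclosing $s^h$, which rules out disclosure (pure or mixed) and simultaneously verifies optimality of concealment under the candidate conjecture. By \autoref{lem:rotation}, $U(s^h,\cdot)$ is increasing in sender $2$'s informativeness and decreasing in the conjectured probability that sender $1$ discloses $s^h$ (which lowers $\eta(\cdot)$ and hence the transformed posterior). So $U(s^h,\cdot)$ is minimized when sender $2$ is least informative---by Step A, when sender $2$ discloses only $\underline s$---and when sender $1$ is conjectured to disclose $s^h$ for sure. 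Computing $U$ in that least-favorable case, through the two-step updating captured by $T$, yields $U(s^h,\cdot)=0.3414>0.34$. Hence disclosing $s^h$ is never a best response, so sender $1$ conceals $\{\underline s,s^l,s^h\}$, and by symmetry sender $2$ conceals $\{s^l,s^h,\bar s\}$; this gives existence and uniqueness of the profile in the claim. I expect this step to be the main obstacle: the margin is razor-thin ($0.3414$ versus $0.34$), so the relevant bound must be computed exactly rather than estimated, and one must correctly track that sender $1$ and the DM \emph{disagree} upon nondisclosure---the calculation runs through the disagreement transformation $T$, not through the DM's belief alone.

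\textbf{Step C (welfare).} In the unique equilibrium, the realized message profiles induce only three DM posteriors: disclosure of $\bar s$ gives posterior $1$, disclosure of $\underline s$ gives posterior $0$, and mutual nondisclosure gives $\beta_{DM}(\phi,\phi)=1/2$ by symmetry. With loss $-(a-\omega)^2$ the DM optimally sets $a=\beta_{DM}$ and incurs conditional expected loss $-\beta_{DM}(1-\beta_{DM})$; the posteriors $0$ and $1$ contribute nothing, so the DM's payoff is $-\tfrac14\Pr(\text{both send }\phi)$. Since the two senders' state-contingent nondisclosure probabilities are $1$ and $0.76$ (reversed across states by symmetry), $\Pr(\text{both }\phi)=\tfrac12\cdot 1\cdot 0.76+\tfrac12\cdot 0.76\cdot 1=0.76$, and hence the DM's expected payoff is $-\tfrac14(0.76)=-0.19$, as claimed.
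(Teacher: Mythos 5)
Your proof is correct and follows essentially the same route as the paper's: the same two extreme-scenario computations (concealment payoff $0.6273<0.64$ for the most favorable signal, and $0.3414>0.34$ for the second signal under the least-informative opponent and lowest nondisclosure belief), with monotonicity in the opponent's informativeness and in the conjectured threshold used to collapse all other cases, followed by the same variance-based welfare calculation yielding $-0.19$. Your observation that the Step B bound holds under every consistent conjecture also subsumes the paper's separate footnote verifying that the profile is indeed an equilibrium.
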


Thus, the DM's welfare in the best single-sender equilibrium is strictly higher than in the unique equilibrium with two opposite-biased senders.
\end{example}

\begin{example}
\label{eg:welfare-higher-with-higher-cost}
Now modify \autoref{eg:welfare-reversal} to a higher cost $c=0.38$, keeping all other parameters the same.

\begin{claim}
\label{claim:higher-cost-better}
When $c=0.38$ (all other parameters unchanged from \autoref{eg:welfare-reversal}), there is an equilibrium in which the DM's welfare equals that of her best equilibrium from the single-sender benchmark. Thus, a higher disclosure cost can improve DM welfare in the best equilibrium with two senders.
\end{claim}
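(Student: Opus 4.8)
The plan is to prove the claim constructively, by exhibiting a two-sender equilibrium whose informational content coincides exactly with the best single-sender equilibrium of \autoref{claim:single-sender-benchmark}. That benchmark has the sender disclosing $\{s^h,\bar{s}\}$ and concealing $\{\underline{s},s^l\}$, for DM payoff $-0.1864$. The natural construction is to let the two senders have opposing biases, have the upward-biased sender play precisely that single-sender strategy (disclose $\{s^h,\bar{s}\}$, conceal $\{\underline{s},s^l\}$), and have the downward-biased sender be \emph{silenced}---concealing every signal so that he always sends $\phi$. If this profile is an equilibrium, then because the silenced sender's message is uninformative, the DM's information is identical to that in the single-sender game, so her welfare is exactly $-0.1864$, which is what the claim asserts. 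The mirror profile (downward sender discloses $\{\underline{s},s^l\}$, upward sender silenced) works by symmetry.

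Two incentive conditions then need checking. For the disclosing (upward) sender: since his opponent's message carries no information, his expected DM posterior from disclosing a signal $s$ is simply $s$, and from concealing is the single-sender nondisclosure belief $\eta_1 := \eta(\cdot,p,\pi)$; his decision problem is thus verbatim the single-sender problem, so by \autoref{prop:single} the strategy $\{s^h,\bar{s}\}$ is a best response exactly when $c \le s^h - \eta_1$. I would compute $\eta_1 = 23/75 \approx 0.307$, so this cutoff is $0.7 - \eta_1 \approx 0.393 > 0.38$ and the condition holds (this also confirms the single-sender benchmark is unchanged at $c=0.38$). For the silenced (downward) sender, the threshold structure means it suffices to check that he prefers concealing to disclosing at his most favorable signal $\underline{s}$. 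Conditioning on $\omega=0$ (which $\underline{s}$ reveals) and averaging the DM's belief over the upward sender's messages gives his concealment payoff $U_D = 0.168 \cdot 0.7 + 0.832 \cdot \eta_1 \approx 0.373$, and the concealment IC is $U_D \le c$, i.e. $0.373 \le 0.38$. Writing $U_D(s) = sA + (1-s)B$ as linear in $s$ with slope $A-B<1$ shows $U_D(s)-s$ is decreasing, so concealing $\underline{s}$ indeed forces concealment of all signals, completing the equilibrium verification.

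The crux of both the argument and the economics is the silenced sender's inequality $U_D \le c$. Its cutoff $U_D \approx 0.373$ sits between $0.36$ and $0.38$: at the higher cost $c=0.38$ silence is incentive-compatible, but at the lower cost $c=0.36$ of \autoref{eg:welfare-reversal} one has $U_D > c$, so the downward sender would strictly prefer to disclose $\underline{s}$ and the silencing equilibrium does not exist. This is precisely what powers the ``thus'' conclusion: at $c=0.36$ the best opposing-bias equilibrium delivers only $-0.19$ (\autoref{claim:two-senders-worse}), whereas at $c=0.38$ the silencing equilibrium restores single-sender welfare $-0.1864 > -0.19$. I would therefore organize the proof around computing $\eta_1$ and $U_D$, verifying that $c=0.38$ lies in the window $[U_D,\, s^h-\eta_1] \approx [0.373,\,0.393]$ in which both incentive conditions hold, and remarking that $c=0.36$ falls below it. The main obstacle is simply getting $U_D$ right---in particular, correctly accounting for the fact that the DM's belief after the downward sender's silence is driven entirely by the upward sender's (single-sender) message distribution.
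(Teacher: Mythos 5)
Your proposal is correct and is essentially the paper's own proof: the same equilibrium construction (upward-biased sender plays the single-sender strategy of disclosing $\{s^h,\bar s\}$, downward-biased sender conceals everything), the same two incentive checks with the same numbers ($\eta_1 = 0.3067 < \gamma - c = 0.32$ for the discloser, and concealment payoff $0.3727$ beating disclosure at $c=0.38$ for the silenced sender), and the same welfare conclusion. The only differences are cosmetic---you compute the silenced sender's concealment payoff directly rather than by appeal to symmetry with Claim \ref{claim:two-senders-worse}'s Step 1, and you spell out the linearity-in-$s$ argument for why concealing $\underline s$ implies concealing all signals, a step the paper leaves implicit via the threshold structure.
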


Plainly, if the cost $c$ is raised sufficiently high, then the only equilibrium involves no disclosure from either agent, which would yield the DM a lower welfare than in the two-sender equilibrium from \autoref{eg:welfare-reversal}. The upshot is that the DM's welfare (in her best equilibrium) with two senders can be nonmonotonic in the disclosure cost.
\end{example}

Finally, although we omit the verification, we note that \autoref{eg:welfare-reversal}'s parameters can be modified to demonstrate that the DM can be strictly worse off facing two similarly-biased senders than facing either sender alone. That can be verified with parameters $\gamma = 0.8$, $\delta = 0.7$, $p_1 = p_2 = 0.4$, $c = 0.385$.
The takeaway is that the message of \autoref{eg:welfare-reversal} does not depend on senders being biased in opposite directions.

\subsection{Perfectly correlated signals}
\label{app:correlated}

The assumption that senders have conditionally independent signals is important for our analysis, as it permits us to apply the IVP result from \citet*{KLS21-IVP}.

Consider now the polar opposite case in which informed senders' signals are perfectly correlated and, for simplicity, there is no message cost ($c=0$). Specifically, there is a single signal $s$ drawn from distribution $F(s|\omega)$, and each sender $i$ is independently either informed of $s$ with probability $p_i$ or remains uninformed. This setting is effectively identical to the ``extreme agenda'' case of \citet{BM13}.\footnote{They allow senders' utility functions to be non-linear; the following discussion does not depend on linearity. Note that \citet{BM13} assume, as we do, that whether a sender is informed is independent of the other sender; see \citet{BGM18} for correlated information endowments.}

If both senders are biased in the same direction, the model can be mapped to a single-sender problem where the sender is informed with probability $p_1+p_2-p_1p_2 > \max\{p_1,p_2\}$.\footnote{Perfect correlation implies there is only one relevant nondisclosure belief, viz., when both senders don't disclose. Senders with the same bias must use the same equilibrium threshold. Given any such threshold, the nondisclosure belief is computed as in \eqref{e:NDbelief} (assuming upward bias), but with $1-p$ replaced by the probability that both senders are uninformed, i.e., $(1-p_1)(1-p_2)$.} \autoref{prop:singleCS} then implies that each sender discloses more when there is an additional sender; hence, the DM is always better off with two senders.

It is instructive to understand why our irrelevance result (\autoref{lem:subcomp}, part \ref{subcomp1}) no longer holds. Suppose both senders are upward biased and symmetric ($p_1=p_2\equiv p$). Let $\hat s^0$ denote the common single-sender threshold, so that $\eta(\hat s^0,p,\pi)=\hat s^0$. When sender $j$ is added with the hypothesis that he too discloses all signals weakly above $\hat s^0$, type $\hat s^0$ of sender $i$ no longer expects the DM's belief to be $\hat s^0$ should he conceal. Rather, he expects the DM's belief to be strictly lower: if $j$ is informed, the DM's belief will be $\hat s^0$, but if $j$ is uninformed, the DM's belief will be strictly lower because of nondisclosure from two senders rather than one. This makes type $\hat s^0$ strictly prefer disclosure. From the perspective of applying \citet*{KLS21-IVP}'s IVP result as in \autoref{lem:rotation}, the point is that under correlated signals, when an informed sender $i$ conceals, sender $i$ and the DM do not agree on the experiment generated by $j$'s message. Thus, even if the DM's nondisclosure belief equals $i$'s belief (over the state), $i$'s expectation of the DM's posterior can differ.

Interestingly, welfare conclusions under perfectly correlated signals are very different when senders have opposing biases. The following proposition shows that each sender discloses strictly less than in his single-sender game.

\begin{proposition}
\label{prop:correlated-opposite-worse}
Assume perfectly correlated signals, no message cost, and that the two senders have opposing biases. Each sender then discloses strictly less than in his single-sender game.
\end{proposition}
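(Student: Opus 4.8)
The plan is to show that, with opposing biases, the two senders are pinned to a \emph{common} disclosure threshold $\hat s^\ast$ characterized as a fixed point of the double-nondisclosure belief, and then to locate $\hat s^\ast$ strictly between the two single-sender thresholds. Take sender $1$ upward-biased (discloses $s\ge\hat s_1$) and sender $2$ downward-biased (discloses $s\le\hat s_2$). First I would derive the equilibrium characterization from the threshold types' indifference conditions. Because signals are perfectly correlated, once either sender discloses $s$ the DM's belief is exactly $s$ and the other sender's message is uninformative. So the threshold type of sender $1$ gets disclosure payoff $\hat s_1$; if he conceals, then with probability $p_2$ sender $2$ (seeing the same signal $\hat s_1$) discloses it and the DM again believes $\hat s_1$, while with probability $1-p_2$ sender $2$ stays silent and the DM holds the double-nondisclosure belief, call it $\eta_{\phi\phi}$. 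Indifference collapses to $\hat s_1=\eta_{\phi\phi}$; the symmetric computation for sender $2$ gives $\hat s_2=\eta_{\phi\phi}$. Hence $\hat s_1=\hat s_2=:\hat s^\ast$, the ``both informed and both conceal'' event is empty, and
$$P(\phi\phi\mid\omega)=(1-p_1)(1-p_2)+p_1(1-p_2)F(\hat s\mid\omega)+(1-p_1)p_2\bigl(1-F(\hat s\mid\omega)\bigr),$$
with $\eta_{\phi\phi}(\hat s)$ the associated posterior and $\hat s^\ast=\eta_{\phi\phi}(\hat s^\ast)$.

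Next I would compare $\eta_{\phi\phi}$ pointwise to the two single-sender nondisclosure beliefs $\eta_1(\hat s):=\eta(\hat s,p_1,\pi)$ and its downward counterpart $\eta_2$. Writing posteriors via likelihood ratios, I would use the decomposition $P(\phi\phi\mid\omega)=(1-p_2)\,A(\omega)+(1-p_1)p_2\bigl(1-F(\hat s\mid\omega)\bigr)$, where $A(\omega):=(1-p_1)+p_1F(\hat s\mid\omega)$ is exactly sender $1$'s single-sender nondisclosure probability, and the analogous $P(\phi\phi\mid\omega)=(1-p_1)\,D(\omega)+p_1(1-p_2)F(\hat s\mid\omega)$, where $D(\omega):=(1-p_2)+p_2\bigl(1-F(\hat s\mid\omega)\bigr)$ is sender $2$'s. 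Since $F(\hat s\mid1)<F(\hat s\mid0)$ for interior $\hat s$, one has $\tfrac{1-F(\hat s\mid1)}{1-F(\hat s\mid0)}>1>\tfrac{A(1)}{A(0)}$, and because the likelihood ratio of a sum is a weighted mediant lying strictly between those of its summands, $\tfrac{P(\phi\phi\mid1)}{P(\phi\phi\mid0)}>\tfrac{A(1)}{A(0)}$; hence $\eta_{\phi\phi}(\hat s)>\eta_1(\hat s)$. The mirror-image computation, now using $\tfrac{F(\hat s\mid1)}{F(\hat s\mid0)}<1<\tfrac{D(1)}{D(0)}$, yields $\eta_{\phi\phi}(\hat s)<\eta_2(\hat s)$. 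The economic content is that double nondisclosure partly reflects the \emph{other}, oppositely-biased sender concealing a signal unfavorable to him, which pulls the belief toward the threshold type's own estimate.

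Finally I would convert these pointwise map inequalities into an ordering of fixed points; this is the step I expect to be the main obstacle, since pointwise domination of an update map does not by itself order fixed points. The resolution is to exploit the single-sender shape from \autoref{lem:NDbelief} and \autoref{prop:single}: with $c=0$, $\eta_1$ is quasiconvex with its unique fixed point $\hat s_1^0\in(\underline s,\pi)$ at its minimum, so $\eta_1(\hat s)>\hat s$ for all $\hat s\le\hat s_1^0$. Combined with $\eta_{\phi\phi}>\eta_1$, this gives $\eta_{\phi\phi}(\hat s)>\hat s$ on $(\underline s,\hat s_1^0]$, ruling out any fixed point there; the symmetric quasiconcavity of $\eta_2$ gives $\eta_{\phi\phi}(\hat s)<\hat s$ on $[\hat s_2^0,\overline s)$. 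Since $\eta_{\phi\phi}(\underline s)=\eta_{\phi\phi}(\overline s)=\pi$ and $\eta_{\phi\phi}$ is continuous, a fixed point exists, and every fixed point---hence every equilibrium threshold $\hat s^\ast$---lies strictly in $(\hat s_1^0,\hat s_2^0)$. Therefore $\hat s^\ast>\hat s_1^0$, so the upward-biased sender discloses strictly less, and $\hat s^\ast<\hat s_2^0$, so the downward-biased sender discloses strictly less. The delicate point is precisely that the single-sender maps cross the diagonal exactly once and from the correct side, so that the pointwise domination determines on which side of each single-sender threshold all equilibria must fall.
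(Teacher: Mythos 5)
Your proposal is correct, and it rests on the same two pillars as the paper's proof---(a) the double-nondisclosure belief strictly exceeds the upward-biased sender's single-sender nondisclosure belief (and is strictly below the downward-biased sender's), and (b) the quasiconvex/quasiconcave shape of the single-sender map from \autoref{lem:NDbelief} converts that pointwise domination into an ordering of thresholds---but it is organized quite differently. The paper never characterizes the equilibrium set: it argues directly that for \emph{any} conjectured pair $(\hat s_1,\hat s_2)$ with $\hat s_2>\underline s$ and $\hat s_1\le \hat s^0_1$, the threshold type of sender 1 strictly prefers concealment (since $\eta(\hat s_1,\hat s_2)>\eta(\hat s_1)\ge \hat s_1$), which immediately rules out such equilibria; the belief comparison is proved by conditioning on whether sender 2 is uninformed or concealing, rather than by your likelihood-ratio mediant decomposition (both are valid). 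Your route buys more: the observation that \emph{both} indifference conditions pin each threshold to the common double-nondisclosure belief, so that any interior equilibrium has $\hat s_1=\hat s_2=\hat s^\ast$ solving a one-dimensional fixed-point equation, plus existence via the intermediate value theorem and the two-sided sandwich $\hat s^\ast\in(\hat s^0_1,\hat s^0_2)$---none of which appears in the paper. The price is that your argument leans on the interior-indifference characterization, so you should explicitly rule out corner equilibria (full disclosure fails because $p_i<1$ makes the nondisclosure pool attractive to the lowest type; no disclosure fails by unraveling at the top), a step you leave implicit; the paper's deviation argument sidesteps this by only needing that sender 2's equilibrium threshold exceeds $\underline s$. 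One small slip: you assert $\eta_1(\hat s)>\hat s$ for all $\hat s\le \hat s^0_1$, but at $\hat s=\hat s^0_1$ this holds with equality; your conclusion survives because the domination $\eta_{\phi\phi}(\hat s)>\eta_1(\hat s)$ is strict at interior $\hat s$, which is exactly how the paper handles the same point.
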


\begin{proof}
We prove it for the upward-biased sender; the argument is symmetric for the other sender. Let sender 1 be upward biased and sender 2 be downward biased. Let $\hat s^0_1$ denote the single-sender threshold, i.e., $\eta(\hat s^0_1)=\hat s^0_1$, where we have suppressed the dependence of $\eta$ on $p_1$. Write $\eta(\hat s_1,\hat s_2)$ as the nondisclosure belief (in the event neither sender discloses) in the two-sender game when the respective thresholds are $\hat s_1$ and $\hat s_2$. Although the DM's updating is not separable, it is clear that $\eta(\hat s_1,\hat s_2)\geq \eta(\hat s_1)$, with equality if and only if $\hat s_2=\underline s$. This follows because the nondisclosure event is the union of: (i) $m_1=m_2=\phi$ and $s_2=\phi$; and (ii) $m_1=m_2=\phi$ and $s_2>\hat s_2$. Conditional on the first event, the DM's posterior is $\eta(\hat s_1)$; conditional on the second, the posterior is larger (strictly if $\hat s_2 > \underline s$). Since $\eta(\hat s_1)\geq \hat s_1$ for all $\hat s_1\leq s^0_1$, it follows that for any $\hat s_2>\underline s$, if the DM conjectures thresholds $(\hat s_1,\hat s_2)$, then sender 1 with signal $s_1=\hat s_1$ strictly prefers nondisclosure to disclosure. Therefore, since sender 2 uses a threshold strictly larger than $\underline s$ in any equilibrium, any equilibrium involves sender 1's threshold being strictly larger than $\hat s^0_1$.
\end{proof}

Thus, despite no message costs and the increased availability of information, overall disclosure is not more informative in the Blackwell sense than under either single-sender problem. Consequently, there exist DM preferences such that she strictly prefers to face either sender alone rather than both simultaneously. This implies that the welfare conclusion in Corollary 2 of \citet{BM13} can be reversed under alternative DM preferences.

In general, for an arbitrary message cost $c$, an interior equilibrium $(\hat s_1,\hat s_2)$ requires
$$\Pr[m_2\neq \phi \mid s_1=\hat s_1,\hat s_2]\hat s_1+\Pr[m_2= \phi \mid s_1=\hat s_1,\hat s_2]\eta(\hat s_1,\hat s_2)= \hat s_1-c,$$
or equivalently,
$$\left(\hat s_1-\eta(\hat s_1,\hat s_2)\right)=\frac{c}{\Pr\big[m_2=\phi \mid s_1=\hat s_1,\hat s_2\big]}.$$
Thus, when $c \geq  0$, using $\eta(\hat s_1,\hat s_2)> \eta(\hat s_1)$ (strict by interiority of $\hat s_2$), it follows that $\hat s_1 > \eta(\hat s_1)$. Furthermore, for any $c \geq 0$, there is an equilibrium in which $\hat s_1$ is weakly larger than the largest single-sender equilibrium; this is consistent with \citepos{EF19} finding that with a disclosure cost ($c>0$), the DM might be better off by barring one sender. With a concealment cost $(c<0)$, the comparison between equilibrium thresholds with two senders versus one sender is ambiguous.

\subsection{Non-linear utility functions}
\label{app:nonlinear}

Another important assumption to straightforwardly apply the IVP result from \citet*{KLS21-IVP} is that each sender has linear preferences. Suppose, more generally, that sender $i$'s utility is given by some function $V_i(\beta_{DM})$. The comparative statics of sender $i$'s disclosure depend on how
\begin{align}
\label{eq:nonlinear}
\E\big[V_i(T(\beta_i,\mu_{DM},\mu_i))\big]-\E\big[V_i(\beta_i)\big]
\end{align}
varies across Blackwell-comparable experiments, where the expectation is over the posterior $\beta_i$ using $i$'s beliefs and $T(\cdot)$ is the transformation in \eqref{e:transform}. When $V_i(\cdot)$ is linear, the second term in \eqref{eq:nonlinear} is constant across experiments, and the IVP result tells us that the sign of the change in the first term is determined by the sign of $\mu_i-\mu_{DM}$, i.e., the disagreement in the interim beliefs. Unambiguous comparative statics of \eqref{eq:nonlinear} cannot be obtained for arbitrary $V_i(\cdot)$. However, because $T(\beta_i,\mu,\mu)=\beta_i$ for any $\beta_i$ and $\mu$, the logic behind our irrelevance result extends quite generally:

\begin{proposition}
\label{prop:generalV}
If $c=0$ and $V_i(\cdot)$ is strictly monotone, then no matter $j$'s disclosure strategy, the best-response threshold for $i$ is the same as when he is a single sender.
\end{proposition}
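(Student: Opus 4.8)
The plan is to reduce the claim to the single-sender characterization by showing that, for \emph{any} conjectured threshold $\hat s_i$ and any strategy $(\hat s_j,p_j)$ of $j$, sender $i$'s best reply is to use the cutoff $\eta(\hat s_i,p_i,\pi)$ — exactly the best reply in the single-sender game. The best-response fixed-point condition analogous to \eqref{e:BR} then collapses to $\hat s_i=\eta(\hat s_i,p_i,\pi)$, whose unique solution (by \autoref{lem:NDbelief}) is the single-sender threshold $\hat s_i^0$. First I would verify that, with strictly monotone $V_i$, the single-sender threshold is itself this fixed point: type $s_i$ discloses iff $V_i(s_i)\ge V_i(\eta(\hat s_i,p_i,\pi))$, which by strict monotonicity is $s_i\ge\eta(\hat s_i,p_i,\pi)$, so the single-sender cutoff solves $\hat s_i^0=\eta(\hat s_i^0,p_i,\pi)$, the same object as in the linear case.

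The key tool is the pointwise behavior of the transformation $T$ in \eqref{e:transform}. For any interior $\beta_i\in(0,1)$ and interim beliefs $\mu_{DM},\mu_i\in(0,1)$, the odds-ratio form in \autoref{fn:transformation} gives
\[
\frac{T(\beta_i,\mu_{DM},\mu_i)}{1-T(\beta_i,\mu_{DM},\mu_i)}=\frac{\beta_i}{1-\beta_i}\cdot\frac{\mu_{DM}(1-\mu_i)}{\mu_i(1-\mu_{DM})},
\]
so that $\sign\big[T(\beta_i,\mu_{DM},\mu_i)-\beta_i\big]=\sign\big[\mu_{DM}-\mu_i\big]$, with $T(\beta_i,\mu,\mu)=\beta_i$. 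Crucially, this comparison holds \emph{pointwise} in $\beta_i$, uniformly across $j$'s messages. Now I would compute $i$'s payoffs at signal $s_i$ under conjecture $\hat s_i$. Upon disclosure the DM's and $i$'s interim beliefs coincide at $s_i$, so $T(\cdot,s_i,s_i)$ is the identity and the disclosure payoff is $\E[V_i(\beta(m_j,\hat s_j,p_j,s_i))\mid s_i]$; upon concealment the DM's interim belief is $\eta(\hat s_i,p_i,\pi)$, giving the payoff $\E[V_i(T(\beta(m_j,\hat s_j,p_j,s_i),\eta(\hat s_i,p_i,\pi),s_i))\mid s_i]$. Since $\beta(m_j,\cdot,s_i)$ is always interior ($j$ cannot reveal the state), applying the pointwise sign property message by message together with strict monotonicity of $V_i$ shows the disclose-minus-conceal difference is strictly positive when $s_i>\eta(\hat s_i,p_i,\pi)$ and strictly negative when $s_i<\eta(\hat s_i,p_i,\pi)$. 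Hence $i$'s best reply to conjecture $\hat s_i$ is the cutoff $\eta(\hat s_i,p_i,\pi)$, independent of $\hat s_j$, $p_j$, and the shape of $V_i$; self-consistency forces $\hat s_i=\eta(\hat s_i,p_i,\pi)=\hat s_i^0$, completing the argument.

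The step requiring the most care is the passage from the pointwise sign of $T-\beta_i$ to the sign of the expected-utility difference, and this is precisely where $c=0$ does the work. Because the pointwise domination $T\lessgtr\beta_i$ is uniform in $\beta_i$, \emph{any} monotone $V_i$ preserves the inequality under expectation, so no averaging or Blackwell/IVP structure is needed; the marginal (threshold) type has no interim disagreement with the DM ($\mu_i=\mu_{DM}$), so at the margin $T$ is the identity and the curvature of $V_i$ is irrelevant. With $c\neq0$ the indifferent type would satisfy $\mu_i\neq\mu_{DM}$, the marginal comparison no longer vanishes, and signing $\E[V_i(T(\beta_i,\mu_{DM},\mu_i))]-\E[V_i(\beta_i)]$ against the cost genuinely requires the Blackwell/IVP averaging (for linear $V_i$) or the additional assumptions flagged in \autoref{app:nonlinear} — which is exactly why the clean irrelevance is special to $c=0$.
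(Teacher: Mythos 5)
Your proof is correct and is essentially the paper's own argument: your pointwise sign property $\sign[T(\beta_i,\mu_{DM},\mu_i)-\beta_i]=\sign[\mu_{DM}-\mu_i]$ is exactly the paper's observation that $T(\beta,r)$ is strictly decreasing in $r$ with $T(\beta,1)=\beta$, and both proofs then use strict monotonicity of $V_i$ to push this through the expectation over $j$'s (not fully informative) message, forcing the threshold type's belief to equal the DM's nondisclosure belief — the single-sender condition. The only cosmetic differences are that you characterize the best reply for every type before imposing the fixed-point consistency, whereas the paper works directly with the threshold type's indifference condition, and your write-up tacitly assumes $V_i$ increasing (the decreasing case flips the cutoff direction but is symmetric).
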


\begin{proof}
Denote $r:= \frac{1-\mu_{DM}}{\mu_{DM}}\frac{\mu_i}{1-\mu_{i}}$, and define $W(\beta,r):=V_i(T(\beta,r))-V_i(\beta)$, where $T(\beta,r):= \frac{\beta}{\beta+(1-\beta)r}$ is shorthand for $T(\beta,\mu_{DM},\mu_{i})$. When $V_i(\cdot)$ is strictly monotone and $c=0$, $i$'s best-response threshold must satisfy $\E [W(\cdot,r)]=0$ when $r$ is determined by $i$'s threshold type and the DM's nondisclosure belief.

When $r=1$, we have for any $\beta$ that $T(\beta,1)=\beta$ and hence $W(\beta,1)=0$. Because $T(\beta,r)$ is strictly decreasing in $r$ for all interior $\beta$, any experiment that is not fully informative has $\E [W(\cdot,r)]=0$ if and only if $r=1$. Thus, no matter $j$'s disclosure strategy (so long as it is not fully informative, which it cannot be since $p_j<1$), $i$'s best-response threshold is such that $r=1$, i.e., the DM's nondisclosure belief equals $i$'s threshold type. But this is the same condition as in the single-sender game.
\end{proof}

When $c\neq 0$, there are non-linear specifications for $V_i(\cdot)$ under which our themes about strategic complementarity under concealment cost or substitutability under disclosure cost do extend, and there are other specifications which make conclusions ambiguous or even reversed. We illustrate in the \hyperref[app:nonlinear-supp]{Supplementary Appendix} through a family of power utility functions.

\bibliographystyle{ecta}
\bibliography{disclosure}

\newpage

\begin{center}
\textbf{{\LARGE {Supplementary Appendix}}}
\end{center}

\section{Non-linear Utilities}
\label{app:nonlinear-supp}

This appendix expands on the discussion in \appendixref{app:nonlinear}. Specifically, we show through a family of power utility functions how our main conclusions can be affected by departures from linearity of a sender's utility $V_i(\beta_{DM})$.

To that end, define
\begin{equation}
\label{e:W}
W(\beta,r):=V_i(T(\beta,r))-V_i(\beta),
\end{equation}
as in the proof of \autoref{prop:generalV}; here, $r\equiv \frac{1-\mu_{DM}}{\mu_{DM}}\frac{\mu_i}{1-\mu_{i}}$ is a measure of the disagreement in interim beliefs $\mu_{DM}$ and $\mu_i$, and $T(\beta,r)\equiv \frac{\beta}{\beta+(1-\beta)r}$ is shorthand for the transformation $T(\beta,\mu_{DM},\mu_{i})$ from \eqref{e:transform}. Under a disclosure cost $(c>0)$ the relevant case is $r>1$ if the sender is upward biased (as in this case the threshold sender type must have a higher belief than the DM's nondisclosure belief) and $r<1$ if the sender is downward biased; under a concealment cost $(c<0)$ the relevant case is $r<1$ if the sender is upward biased and $r>1$ if the sender is downward biased.

In the following proposition, we say that an upward-biased sender $i$'s disclosure is a strategic substitute (resp., complement) to $j$'s if when $j$'s message is more Blackwell-informative, then $i$'s largest and smallest best response disclosure thresholds increase (resp., decrease).

\begin{proposition}
\label{prop:exponential}
Assume $V_i(\beta)=\gamma \beta^\alpha$, where either $\gamma,\alpha>0$ or $\gamma,\alpha<0$, so that sender $i$ is upward biased. Then $i$'s disclosure is:
\begin{enumerate}
\item \label{exponential1} a strategic substitute to $j$'s under disclosure cost if $0<\alpha\leq 1$ and $\gamma>0$;
\item \label{exponential4} a strategic complement to $j$'s under a small enough concealment cost if $0 < \alpha \leq 1$ and $\gamma > 0$;\footnote{A small enough concealment cost means that $c<0$ is close enough to $0$ that the interim disagreement condition in part \ref{exponential-curvature-4} of \autoref{lem:exponential-curvature} holds. Note that as $c \to 0$, all of $i$'s best response thresholds converge uniformly to the unique single-sender equilibrium threshold (at which point $r=1$), so $r$ is assured to be in the relevant range when $|c|$ is small enough.}
\item \label{exponential2} a strategic complement to $j$'s under disclosure cost if $\alpha< -1$ and $\gamma<0$;
\item \label{exponential3} a strategic substitute to $j$'s under concealment cost if $\alpha < 0$ and $\gamma < 0$.
\end{enumerate}
\end{proposition}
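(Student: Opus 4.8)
The plan is to reduce each comparative static to the curvature of the payoff-difference function $W$ from \eqref{e:W} and then run a Blackwell / mean-preserving-spread argument. Fix sender $i$'s candidate threshold $\hat s_i$. After disclosure, $i$ and the DM share the interim belief $\hat s_i$, so $r=1$, $T(\beta_i,1)=\beta_i$, and $i$'s disclosure payoff is $\E[V_i(\beta_i)]-c$, where $\beta_i$ is $i$'s posterior after $j$'s message computed from interim belief $\hat s_i$. After concealment, $i$ keeps interim belief $\hat s_i$ while the DM holds $\eta(\hat s_i,p_i,\pi)$, so $i$'s concealment payoff is $\E[V_i(T(\beta_i,r))]$ with $r=r(\hat s_i):=\frac{1-\eta(\hat s_i,p_i,\pi)}{\eta(\hat s_i,p_i,\pi)}\frac{\hat s_i}{1-\hat s_i}$, the expectation taken over the \emph{same} distribution of $\beta_i$ as for disclosure. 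Hence the net gain from disclosing is $-\E[W(\beta_i,r(\hat s_i))]-c$, and an interior best response solves $\E[W(\beta_i,r(\hat s_i))]=-c$.

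Next I would record how $\E[W(\beta_i,r)]$ responds to $j$ becoming more informative, holding $\hat s_i$ (hence $r$) fixed. By the garbling construction in the proof of \autoref{lem:rotation}, a lower $\hat s_j$ or a higher $p_j$ makes $j$'s message Blackwell-more-informative; for the fixed interim belief $\hat s_i$ this is a mean-preserving spread of $i$'s posterior $\beta_i$ (its mean stays at $\hat s_i$ by the martingale property). Consequently $\E[W(\beta_i,r)]$ rises when $W(\cdot,r)$ is convex in $\beta$ and falls when it is concave. This is exactly where \autoref{lem:exponential-curvature} enters: for $V_i(\beta)=\gamma\beta^\alpha$ it signs $\partial^2 W/\partial\beta^2$ in the relevant range of $r$, which (for an upward-biased sender, as recorded just before the proposition) is $r>1$ under a disclosure cost and $r<1$ under a concealment cost.

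I would then assemble the four cases. In case \ref{exponential1} ($0<\alpha\le1$, $\gamma>0$, $c>0$, so $r>1$), the curvature lemma gives $W(\cdot,r)$ convex, so a more informative $j$ lowers the net disclosure gain uniformly across candidate thresholds; mirroring the contradiction argument in the proof of \autoref{prop:singleCS}—which converts a uniform one-directional shift of the disclosure gain into a monotone movement of the largest and smallest best responses—the extremal thresholds weakly increase, i.e.\ strategic substitutes. Case \ref{exponential3} ($\alpha<0$, $\gamma<0$, concealment cost, $r<1$) is identical once the lemma supplies convexity of $W(\cdot,r)$ for $r<1$: the thresholds again rise (substitutes). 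In cases \ref{exponential2} ($\alpha<-1$, $\gamma<0$, disclosure cost, $r>1$) and \ref{exponential4} (small concealment cost, $r<1$), the lemma instead yields concavity of $W(\cdot,r)$, which raises the net disclosure gain and lowers the extremal thresholds, i.e.\ strategic complements.

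The main obstacle is \autoref{lem:exponential-curvature} itself: signing $\partial^2/\partial\beta^2\,[\gamma(T(\beta,r)^\alpha-\beta^\alpha)]$ is a genuine computation whose outcome depends jointly on $\alpha$, $\gamma$, and whether $r\gtrless1$, which is precisely why the admissible exponent ranges differ across cases (e.g.\ $\alpha<-1$ in case \ref{exponential2} versus $\alpha<0$ in case \ref{exponential3}). A secondary subtlety, flagged in the proposition's footnote, arises in case \ref{exponential4}: the concavity needed for $r<1$ holds only when $r$ is close to $1$ (the interim-disagreement condition of the curvature lemma), so one must additionally note that as $c\uparrow 0$ every best-response threshold converges to the single-sender threshold—at which $r=1$—guaranteeing that $r$ lies in the admissible range once the concealment cost is small enough.
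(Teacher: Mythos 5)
Your proposal is correct and follows essentially the same route as the paper: it reduces the comparative statics to the curvature in $\beta$ of $W(\beta,r)$ from \eqref{e:W}, uses the garbling/Blackwell argument (mean-preserving spread of $i$'s posterior with mean fixed at the threshold by the martingale property) to sign the change in $\E[W]$, invokes \autoref{lem:exponential-curvature} with the appropriate $\gamma$-sign flips and the relevant ranges $r>1$ (disclosure cost) versus $r<1$ (concealment cost), and converts the resulting uniform shift into monotone movements of the extremal best responses via the \autoref{prop:singleCS}-style argument---including the small-$|c|$ caveat needed for part \ref{exponential4}. Like the paper, you treat \autoref{lem:exponential-curvature} as a given input rather than reproving its algebra, which is consistent with the paper presenting the proposition as a ``straightforward consequence'' of that lemma; your write-up in fact makes explicit the reduction the paper leaves implicit.
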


Parts \ref{exponential1} and \ref{exponential4} of \autoref{prop:exponential} extend, respectively, parts \ref{subcomp2} and \ref{subcomp3} of \autoref{lem:subcomp} to some non-linear preferences (and subject to a small enough concealment cost, in part \ref{exponential4}). Parts \ref{exponential2} and \ref{exponential3} of \autoref{prop:exponential} show how our findings of strategic complementarity under concealment cost and strategic substitutability under disclosure cost can be reversed for other non-linear preferences. Note that one must be careful with the analog for a downward-biased sender, because the direction of disagreement reverses. If $V_i(\beta)=-\gamma \beta^\alpha$, then in each part above, ``disclosure cost'' and ``concealment cost'' should be interchanged.

Given the discussion in the first paragraph of \appendixref{app:nonlinear} and following the logic given in the main text after \autoref{lem:rotation}, \autoref{prop:exponential} is a straightforward consequence of the following lemma. Recall that a disclosure cost corresponds to $r>1$ and a concealment cost to $r<1$.

\begin{lemma}
\label{lem:exponential-curvature}
If $V_i(\beta)=\beta^\alpha$, then $W(\beta,r)$ defined in \eqref{e:W} is:
\begin{enumerate}
\item convex in $\beta$ if $0<\alpha\leq 1$ and $r>1$;
\item \label{exponential-curvature-4} concave in $\beta$ if $0 < \alpha \leq 1$ and $\frac{1-\alpha}{1+\alpha} < r < 1$;
\item convex in $\beta$ if $\alpha < -1$ and $r>1$;
\item \label{exponential-curvature-3} concave in $\beta$ if $\alpha < 0$ and $r<1$.
\end{enumerate}
\end{lemma}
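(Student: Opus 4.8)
The plan is to prove \autoref{lem:exponential-curvature} by directly computing the second derivative $W_{\beta\beta}$ of $W(\beta,r)=T(\beta,r)^{\alpha}-\beta^{\alpha}$ and reducing the question to the sign of a one‑variable function. Writing $D:=\beta+(1-\beta)r=r+(1-r)\beta$, so that $T=\beta/D$, I would first record $T_{\beta}=r/D^{2}$ and $T_{\beta\beta}=-2r(1-r)/D^{3}$, and then differentiate $W=\beta^{\alpha}D^{-\alpha}-\beta^{\alpha}$ twice. After factoring out $\alpha\beta^{\alpha-2}$, the remaining bracket contains the quadratic $(\alpha-1)D^{2}-2\alpha(1-r)\beta D+(\alpha+1)(1-r)^{2}\beta^{2}$; the key algebraic simplification is that substituting $(1-r)\beta=D-r$ cancels the $D^{2}$ terms, collapsing this to $r[(\alpha+1)r-2D]$. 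The upshot is the clean identity
\[
W_{\beta\beta}(\beta,r)=\alpha\,\beta^{\alpha-2}\,\Xi(D),\qquad \Xi(D):=(\alpha+1)r^{2}D^{-\alpha-2}-2r\,D^{-\alpha-1}-(\alpha-1),
\]
so that $\sign W_{\beta\beta}=\sign(\alpha)\cdot\sign\Xi(D)$. Since $\beta\mapsto D$ is an affine bijection of $[0,1]$ onto the interval with endpoints $1$ and $r$ (increasing if $r<1$, decreasing if $r>1$), it suffices to sign $\Xi$ on that interval.

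Next I would analyze $\Xi$ as a function of $D>0$. Its derivative factors as $\Xi'(D)=(\alpha+1)r\,D^{-\alpha-3}\bigl(2D-(\alpha+2)r\bigr)$, so $\Xi$ has a single critical point $D^{*}=(\alpha+2)r/2$, an interior minimum when $\alpha>-1$ and an interior maximum when $\alpha<-1$; hence $\Xi$ is unimodal on the relevant interval. I would then evaluate the endpoints, obtaining $\Xi(r)=(\alpha-1)(r^{-\alpha}-1)$ and, after factoring, $\Xi(1)=(r-1)\bigl[\alpha(r+1)+(r-1)\bigr]$. With these three ingredients---unimodality, the location of $D^{*}$, and the endpoint signs---each of the four cases reduces to elementary sign‑checking (then multiplied by $\sign\alpha$). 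For part 1 ($0<\alpha\le1$, $r>1$) one checks $D^{*}\ge r$, so $\Xi$ is decreasing on $[1,r]$ with minimum $\Xi(r)\ge0$; for part 2 the hypothesis $r>\tfrac{1-\alpha}{1+\alpha}$ is precisely what makes the bracket in $\Xi(1)$ positive, hence $\Xi(1)\le0$, and since $\Xi$ is unimodal with an interior minimum its maximum over $[r,1]$ sits at an endpoint, both of which are $\le0$; for part 4 one checks $D^{*}<r$, so $\Xi$ is increasing on $[r,1]$ with minimum $\Xi(r)>0$.

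The main obstacle is part 3 ($\alpha<-1$, $r>1$, where I need $\Xi\le0$ on $[1,r]$). Here the critical point is an interior \emph{maximum} that can lie strictly inside $[1,r]$---precisely when $\alpha\in(-2,-1)$ and $r\ge2/(\alpha+2)$---so signing the (strictly negative) endpoints does not suffice. For this case I plan to evaluate $\Xi$ at $D^{*}$ directly: substituting $2D^{*}=(\alpha+2)r$ and simplifying gives $\Xi(D^{*})=(1-\alpha)-r^{-\alpha}\bigl(\tfrac{\alpha+2}{2}\bigr)^{-\alpha-2}$, so $\Xi(D^{*})\le0$ is equivalent to $r^{-\alpha}\bigl(\tfrac{\alpha+2}{2}\bigr)^{-\alpha-2}\ge1-\alpha$. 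Because the left side is increasing in $r$ over the range $r\ge2/(\alpha+2)$ where $D^{*}\in[1,r]$, it suffices to verify the inequality at $r=2/(\alpha+2)$, where it reduces to $4\ge(1-\alpha)(\alpha+2)^{2}$; I would finish by checking that $\alpha\mapsto(1-\alpha)(\alpha+2)^{2}$ is increasing on $(-2,-1)$ with supremum $2<4$. When $\alpha\le-2$ (so $D^{*}\le0$) the interior maximum lies outside $[1,r]$ and the negative endpoint value $\Xi(1)$ controls the sign, completing the argument.
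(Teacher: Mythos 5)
Your reduction is, up to a change of variables, identical to the paper's: the quantity $\alpha\beta^{\alpha-2}\Xi(D)$ with $D=\beta+(1-\beta)r$ is exactly the function $H(\beta,\alpha,r)$ whose sign the paper analyzes (indeed $H=\alpha\,\Xi(D)$), your endpoint values $\Xi(r)$ and $\Xi(1)$ correspond to $H(0,\alpha,r)$ and $H(1,\alpha,r)$, and your critical point $D^{*}=(\alpha+2)r/2$ is the paper's critical point $\beta=\tfrac{\alpha r}{2(1-r)}$ in the other coordinate. Parts 1 and 2 of your argument are correct and mirror the paper's cases 1 and 2. Your part 3 is also correct and in fact marginally cleaner than the paper's: by using monotonicity of $r^{-\alpha}\bigl(\tfrac{\alpha+2}{2}\bigr)^{-\alpha-2}$ in $r$ down to the boundary $r=2/(\alpha+2)$, you reduce the hard sub-case to the tidy inequality $4\ge(1-\alpha)(\alpha+2)^{2}$ on $(-2,-1)$, whereas the paper evaluates at $r=2$ and bounds $(\alpha+2)^{\alpha+2}<1$.

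The one flaw is in part 4 ($\alpha<0$, $r<1$). Your claim that $D^{*}<r$ implies ``$\Xi$ is increasing on $[r,1]$ with minimum $\Xi(r)>0$'' is valid only for $-1<\alpha<0$, where $\alpha+1>0$ makes $D^{*}$ a minimum of $\Xi$. For $\alpha<-1$ the critical point is a \emph{maximum} (as you yourself use in part 3), so with $D^{*}<r$ the function $\Xi$ is \emph{decreasing} on $[r,1]$ and its minimum is $\Xi(1)$, not $\Xi(r)$; for $\alpha=-1$, $\Xi$ is constant. The conclusion survives, but you must also sign the right endpoint: $\Xi(1)=(r-1)\bigl[\alpha(r+1)+(r-1)\bigr]>0$, since both factors are negative when $\alpha<0$ and $r<1$; and since $D^{*}\notin(r,1)$, $\Xi$ is monotone on $[r,1]$, so $\Xi\ge\min\{\Xi(r),\Xi(1)\}>0$ in every sub-case. (The paper sidesteps this by splitting $\alpha<0$, $r<1$ into $-1\le\alpha<0$ and $\alpha<-1$ and signing both endpoints in the latter case.) With that one-line repair your proof is complete and essentially coincides with the paper's.
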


\begin{proof}[Proof of \autoref{lem:exponential-curvature}]
Denoting partial derivatives with subscripts as usual, we compute that $
W_{\beta\beta}(\cdot)$ is equal to
\begin{equation*}
V''_i\left(\frac{\beta}{\beta+(1-\beta)r}\right)\left(\frac{r^2}{(\beta+(1-\beta)r)^4}\right)+V'_i\left(\frac{\beta}{\beta+(1-\beta)r}\right)\left(\frac{2r(r-1)}{(\beta+(1-\beta)r)^3}\right)-V''_i(\beta).
\end{equation*}
Plugging in $V_i(\beta)=\beta^\alpha$ and doing some algebra shows that $
W_{\beta\beta}(\cdot)$ has the same sign as:
\begin{equation*}
\alpha \left[ (1-\alpha )+\frac{r (2\beta(r-1)-r (1-\alpha))}{\left(\beta+(1-\beta)r\right)^{\alpha +2}} \right]=:H(\beta,\alpha,r).
\end{equation*}
Observe that $H(0,\alpha,r)=\alpha(1-\alpha)(1-r^{-\alpha})$, and hence if $\alpha < 1$ and $\alpha\neq 0$ then $\sign[H(0,\alpha,r)]=\sign[r-1]$.  Differentiating yields
\begin{equation*}
H_\beta(\cdot)=\frac{\alpha (\alpha +1) (r-1) r (\alpha  r+2 \beta  (r-1))}{(\beta+(1 -\beta)r)^{\alpha +3}}.
\end{equation*}
We now consider five cases; they correspond to the lemma's enumeration, except that the final two cases here together establish part \ref{exponential-curvature-3} of the lemma.
\begin{enumerate}[leftmargin=*]
\item Suppose $0<\alpha \leq 1$ and $r > 1$.  Then $H(0,\alpha,r)\geq 0$ and $H_\beta(\cdot)>0$, and hence $H(\beta,\alpha,r)>0$ for all $\beta\in (0,1)$.
\item Suppose $0 < \alpha \leq 1$ and $\frac{1-\alpha}{1+\alpha} < r < 1$.  Then $H(0,\alpha,r)=\alpha(1-\alpha)(1-r^{-\alpha})\leq 0$.  
The roots of $H(1,\alpha,r)=\alpha\left[(1-\alpha)+r^2(1+\alpha)-2 r\right]$ viewed as a function of $r$ are $r = 1$ and $r = \frac{1-\alpha}{1+\alpha}$; since $H(1,\alpha,r)$ is convex in $r$, it is negative between the roots, and hence $H(1,\alpha,r) < 0$ for $r$ in the specified range.  Furthermore, $\sign[H_\beta(\cdot)] = \sign[2\beta(1-r)-\alpha r]$; hence, since the expression $2\beta(1-r)-\alpha r$ is negative at $\beta = 0$ and increasing in $\beta$, any interior critical point of $H(\cdot,\alpha,r)$ is a minimum.  It follows that $H(\beta,\alpha,r) < 0$ for all $\beta \in (0,1)$.
\item \label{exponential-curvature-case-3} Suppose $\alpha < -1$ and $r>1$.  Then $H(0,\alpha,r)> 0$ and \[H(1,\alpha,r)=\alpha (r-1)(\alpha-1+r(1+\alpha))>0.\]  We will show that $H_\beta(\beta,\alpha,r)=0$ implies $H(\beta,\alpha,r)>0$, which combines with the previous two inequalities to imply that $H(\cdot)>0$.  Accordingly, assume $H_\beta(\beta,\alpha,r)=0$, which occurs when $\beta=\frac{\alpha r}{2(1-r)}$, which implies $\alpha \in(-2,-1)$ and $r\geq \frac{2}{2+\alpha}$ (because $\beta\leq 1$ and $\alpha<-1$) .  Furthermore,
\begin{equation*}
H\left(\frac{\alpha r}{2(1-r)},\alpha,r\right)
=\alpha \left[1-\alpha-r^2\left(\frac{2}{r(\alpha+2)}\right)^{\alpha+2}\right].
\end{equation*}
The derivative of the above expression with respect to $r$ is $\alpha^2 \left(\frac{2}{\alpha+2}\right)^{\alpha+2} r^{-\alpha-1}$, which is strictly positive given $\alpha \in (-2,-1)$ and $r>\frac{2}{\alpha+2}>2$.  Moreover, when evaluated with $r=2$, the expression reduces to $\alpha\left[1-\alpha-\frac{4}{(\alpha+2)^{\alpha+2}}\right]$.  For $\alpha \in (-2,-1)$, the bracketed term is negative because $(\alpha+2)^{\alpha+2}<1$ implies $\frac{4}{(\alpha+2)^{\alpha+2}}>4$, while $1-\alpha<3$. Since $\alpha<0$, we have $H\left(\frac{\alpha r}{2(1-r)},\alpha,r\right)>0$, as was to be shown.
\item Suppose $-1\leq \alpha <0$ and $0\leq r<1$.  Then $H(0,\alpha,r)< 0$ and $H_\beta(\cdot)\leq 0$, and hence $H(\beta,\alpha,r)<0$ for all $\beta\in (0,1)$.
\item Suppose $\alpha < -1$ and $0\leq r<1$.  Then $H(0,\alpha,r)<0$ and \[H(1,\alpha,r)=\mbox{$\alpha(r-1)(\alpha-1+r(1+\alpha))$}<0.\]  As argued in case \ref{exponential-curvature-case-3} above, $H_\beta(\beta,\alpha,r)=0$ requires $\alpha \in(-2,-1)$ and $r\geq \frac{2}{2+\alpha}>2$, which is not possible given that we have assumed $r<1$.  Thus, on the relevant domain, $H_\beta(\cdot)$ has a constant sign and so $H(\cdot)<0$. \qedhere
\end{enumerate}
\end{proof}

\section{Details for Welfare Examples}
\label{app:welfare-example-proofs}

Here we provide the verifications for \autoref{eg:welfare-reversal} and \autoref{eg:welfare-higher-with-higher-cost}. Given the examples' discrete signals, we will abuse notation and say that a sender uses threshold $1/2$ if he reveals his two favorable signals and conceals the two unfavorable ones. Note that in \autoref{eg:welfare-reversal} with $c=0.36$, if there is just one upward-biased sender, then disclosing $s^h$ yields that sender a payoff $\gamma - c = 0.34$ and disclosing $\bar{s}$ yields $1 - c = 0.64$.

\begin{proof}[Proof of \autoref{claim:single-sender-benchmark}]
To establish the best equilibrium, it suffices to show that there is an equilibrium in which the sender reveals $s^h$ and $\bar s$ (i.e., uses threshold $1/2$). The corresponding nondisclosure belief is
\begin{equation}
\label{e:ND-maximal-reveal}
\eta(1/2, p, \pi) = \frac{1-p + p(1-\gamma)\delta}{2-p} = 0.3067.
\end{equation}
Since $0.3067 < \gamma - c = 0.34$, the sender strictly prefers to disclose $s^h$, which verifies that we have an equilibrium.

To calculate welfare, observe that because of her quadratic-loss objective, the DM's welfare in any equilibrium is $-\mathbb{E}[\mu(1-\mu)]$, where $\mu$ is the posterior. In the above best equilibrium, disclosure of $\bar{s}$ is perfectly informative, disclosure of $s^h$ yields posterior $\gamma$, and nondisclosure yields the posterior in \eqref{e:ND-maximal-reveal}, which we denote below as $\eta^*$. The DM's welfare is thus
\begin{equation*}
\label{e:welfare-1}
-\left(1-\frac{p}{2}\right)\eta^*(1-\eta^*) - \frac{p\delta}{2}\gamma(1-\gamma) = -0.1864. \qedhere
\end{equation*}
\end{proof}

\begin{proof}[Proof of \autoref{claim:two-senders-worse}]
Let sender 1 be upward biased and sender 2 be downward biased. We proceed in three steps.

\noindent \emph{Step 1: Each sender discloses his most favorable signal.}
We argue it for sender 1; it is symmetric for sender 2. Given that sender 2 must conceal his two unfavorable signals ($s^h$ and $\bar{s}$), the hardest scenario for sender 1 to want to disclose $\bar{s}$ is when: (i) he is conjectured to conceal all signals (maximizing the nondisclosure belief), and (ii) sender 2 discloses $s^l$ and $\underline{s}$ (by strategic substitution, the more sender 2 discloses, the less sender 1's incentive to disclose). In this scenario, sender 1's concealment is uninformative, so if both senders conceal, the DM's posterior is $\frac{1-p + p(\delta\gamma + 1-\delta)}{2-p}$ (analogously to \eqref{e:ND-maximal-reveal}). If sender 2 discloses $s^l$, the DM's posterior is $1-\gamma$. Thus, sender 1's expected payoff from concealing $\bar{s}$ is
\[
\delta(1-\gamma)p(1-\gamma) + \bigl(1-\delta(1-\gamma)p\bigr)\frac{1-p + p(\delta\gamma + 1-\delta)}{2-p} = 0.6273.
\]
Since $0.6273 < 1 - c = 0.64$, sender 1 strictly prefers to disclose $\bar{s}$.

\noindent \emph{Step 2: Each sender conceals all other signals.}
Again, we only argue it for sender 1. It suffices to argue that sender 1 conceals $s^h$. Given Step 1, the easiest scenario for sender 1 to want to disclose $s^h$ is: (i) he is conjectured to disclose $s^h$ and $\bar s$, i.e., use threshold $1/2$ (minimizing the nondisclosure belief), and (ii) sender 2 plays the least revealing strategy consistent with Step 1, i.e., conceals $s^l,s^h$, and $\bar s$. In this scenario, if sender 2 discloses, it must be signal $\underline{s}$, giving sender 1 a payoff of $0$. If both senders conceal, the DM's posterior is
\[
\frac{\eta(1/2, p, \pi)}{\eta(1/2, p, \pi) + \bigl(1-\eta(1/2, p, \pi)\bigr)(1-p + p\delta)},
\]
where $\eta(1/2, p, \pi)$ is given in \eqref{e:ND-maximal-reveal}.
Thus, sender 1's expected payoff from concealing $s^h$ is
\[
\bigl(1- (1-\gamma)(1-\delta)p\bigr)\frac{\eta(1/2, p, \pi)}{\eta(1/2, p, \pi) + \bigl(1-\eta(1/2, p, \pi)\bigr)(1-p + p\delta)}
 = 0.3414.
\]
Since $0.3414 > \gamma - c = 0.34$, sender 1 strictly prefers to conceal $s^h$ even in the scenario most conducive to disclosing it.

\noindent \emph{Step 3: Welfare.} Steps 1 and 2 (combined with equilibrium existence) establish that there is a unique equilibrium: each sender reveals their most favorable signal and conceals the other three.\footnote{Indeed, to confirm this is an equilibrium, fix those sender strategies. If sender 2 discloses, it must be signal $\underline{s}$, giving sender 1 a payoff of $0$. If both senders conceal, symmetry implies the DM's posterior is $1/2$. Sender 1's expected payoff from concealing $s^h$ is thus
\[
\bigl(1-p(1-\gamma)(1-\delta)\bigr)(1/2) = 0.464.
\]
Since $0.464 > \gamma - c = 0.34$, sender 1 prefers to conceal $s^h$. So sender 1 is playing a best response. By a symmetry argument, so is sender 2.} To calculate the equilibrium welfare, we follow the approach used to calculate welfare in the proof of \autoref{claim:single-sender-benchmark}. In the current two-sender equilibrium, the only disclosures are extreme signals (posteriors $0$ or $1$), which contribute zero variance. Nondisclosure yields posterior $1/2$ by symmetry. The DM's welfare with the two senders is thus
\[
-\bigl(1-p(1-\delta)\bigr)\tfrac{1}{4} = -0.19. \qedhere
\]
\end{proof}

\begin{proof}[Proof of \autoref{claim:higher-cost-better}]
It suffices to verify that there is an equilibrium in which the upward-biased sender 1 reveals signals $s^h$ and $\bar s$ (and conceals the other two signals) while the downward-biased sender 2 conceals all signals, as the DM's welfare in this equilibrium is the same as in the single-sender analysis of \autoref{eg:welfare-reversal}.

So consider those strategies and corresponding DM beliefs. Then sender 2's payoff from concealing $\underline{s}$ is $1-0.3727=0.6273$ (because, by symmetry to the calculation in Step 1 of the proof of \autoref{claim:two-senders-worse}, the DM's expected posterior is $1-0.06273=0.3727$). Since $0.6273 > 0.62=1- c$, sender 2 strictly prefers to conceal his most favorable signal, and concealing all signals is his best response. Turning to sender 1: he effectively faces a single-sender problem with nondisclosure belief $0.3067$ from \eqref{e:ND-maximal-reveal}. Since $0.3067 < \gamma - c = 0.32$, sender 1 strictly prefers to disclose $s^h$, and hence he is also playing his best response.
\end{proof}
\end{document}